\setlist{leftmargin=5mm}
\newtheorem{assumption}{Assumption}
\newtheorem{question}{\textbf{Question}}
\newtheorem{lemma}{\textbf{Lemma}}
\newtheorem{definition}{\textbf{Definition}}
\newtheorem{theorem}{\textbf{Theorem}}
\newtheorem{game}{Game}
\def\eq{\triangleq}
\def\val{\theta}
\def\valmax{\val_\text{max}}
\def\cut{\beta}
\def\dmean{\bar{d}}
\def\dmax{D}
\def\payoff{U}
\def\payoffexp{\bar{U}}
\def\revenue{R}
\def\revenueexp{\bar{R}}
\def\cost{C}
\def\profit{W}
\def\spedata{\tau}
\def\plan{\mathcal{T}}
\def\dcap{Q}
\def\pcap{\Pi}
\def\adfee{\pi}
\def\mechanism{\kappa}
\def\QoS{\rho}
\def\load{L}
\def\subscription{\Phi}
\def\pricing{\boldsymbol{s}}
\def\valthr{\sigma}
\def\Threshold{\boldsymbol{\valthr}}
\def\Mechanism{\boldsymbol{\mechanism}}
\def\costqos{\psi}
\def\usage{V}
\def\valthr{\sigma}
\def\valeq{\tilde{\valthr}}
\def\trad{\mathtt{T}}
\def\roll{\mathtt{R}}
\def\na{\mathtt{Na}}
\begin{document}

\title{Duopoly Competition for Mobile Data Plans with Time Flexibility}

\author{Zhiyuan~Wang,
	Lin~Gao,~\IEEEmembership{Member,~IEEE,}
	and~Jianwei~Huang,~\IEEEmembership{Fellow,~IEEE}
	
	\IEEEcompsocitemizethanks{
		\IEEEcompsocthanksitem Part of the results appeared in WiOpt 2018 \cite{Zhiyuan2018duopoly}.
		\IEEEcompsocthanksitem Zhiyuan Wang and Jianwei Huang are with the Network Communications
		and Economics Lab, Department of Information Engineering, The Chinese University of Hong Kong, Shatin, N.T., Hong Kong, China.
		 Jianwei Huang is also with the School of Science and Engineering, The Chinese University of Hong Kong, Shenzhen. 
		E-mail:	$\{$wz016, jwhuang$\}$@ie.cuhk.edu.hk
	\IEEEcompsocthanksitem Lin Gao is with the School of Electronic and Information Engineering, Harbin Institute of Technology, Shenzhen, China. 
		E-mail: gaol@hit.edu.cn
}
	}

%
%


%

\IEEEtitleabstractindextext{%
	\begin{abstract}
		The growing competition drives the mobile network operators (MNOs) to explore adding time flexibility to the traditional data plan, which consists  of a monthly subscription fee, a data cap, and a per-unit fee for exceeding the data cap.
		The rollover  data plan, which allows the unused data of the previous month  to be used in the current  month, provides the  subscribers with the time flexibility.
		In this paper, we formulate two MNOs' market competition as a three-stage game, where the MNOs decide their data mechanisms (traditional or rollover)  in Stage I and the pricing strategies in Stage II, and then users make their subscription decisions in Stage III.
		Different from the monopoly market where an MNO always prefers  the rollover mechanism  over the traditional plan in terms of profit,  MNOs may adopt different data mechanisms at an equilibrium.
		Specifically, the high-QoS MNO would gradually abandon the rollover mechanism as its QoS advantage diminishes.
		Meanwhile, the low-QoS MNO would progressively upgrade to the rollover mechanism.
		The numerical results show that the market competition significantly limits MNOs' profits, but both MNOs obtain higher profits with the possible choice of the rollover data plan.  
	\end{abstract}
	
	\begin{IEEEkeywords}
		Rollover data plan, Duopoly competition, Time flexibility, Game theory.
	\end{IEEEkeywords}}

\maketitle

\IEEEdisplaynontitleabstractindextext

\IEEEpeerreviewmaketitle

\section{Introduction\label{Section: Introduction}} 
\subsection{Background and Motivation}\label{Subsection: Background and Motivation}  
The Mobile Network Operators (MNOs) profit from the mobile data services through offering carefully designed  mobile data plans.
The {traditional} and most widely implemented data plan is a three-part tariff, involving a monthly one-time subscription fee, a data cap that is free to use (with the paid subscription fee), and a per-unit fee for any data consumption exceeding the data cap.
In today's telecommunication market, the most commonly adopted data caps include 1GB, 2GB, and 3GB \cite{sen2013survey}.
However, the corresponding subscription fee for the same data cap varies significantly in different MNOs' data plans.
For example, the subscription fee of the 2GB data plan is \$55 for AT\&T subscribers \cite{ATT2GB}, while it is \$35 for Verizon subscribers \cite{verizon2GB}.
The different pricing decisions mainly result from the MNOs' \textbf{market competition}, since different MNOs usually offer different quality of services (QoS) and experience different costs  for wireless data services \cite{sen2015smart}.

To maintain the competitiveness in the market competition, many MNOs (e.g., AT\&T in the US \cite{ATTrollover}, and China Mobile in mainland China \cite{CMrollover}) have recently adopted  the \textbf{rollover data plans}, allowing the unused data from the previous month to be used in the current month.
Such a rollover mechanism is more time-flexible than the traditional mechanism, since it reduces users' concerns of  the  possible \textit{wasting data} within the data cap and the possible \textit{overage data consumption} above the data cap when the user cannot accurately estimate his future data demand.
Hence, the rollover data plan is very attractive to the mobile users.

Our earlier study in \cite{Zhiyuan2018TMC,Zhiyuan2018MobiHoc} found that in a \textbf{monopoly market} with a single MNO, the rollover mechanism can increase both the MNO's profit and users' payoffs, hence improves the social welfare.
That is, a monopoly MNO should definitely adopt the rollover mechanism.
In this paper, we want to understand whether this is still true when considering  the ubiquitous market competition.
In practice, the four main MNOs in the US market all adopt the rollover mechanism.
In the Europe and Hong Kong, however, we do not observe all of the MNOs adopting the rollover mechanism.
For example, some MNOs (e.g., Orange and China Mobile Hong Kong) are still using the traditional mechanism without time flexibility.
These observations motivate us to ask the following two key questions in a {competitive market}: 
\begin{question}\label{Question: 1}
	Will all MNOs offering rollover mechanism become the equilibrium of the market competition?
\end{question}
\begin{question}\label{Question: 2}
	How will the different data mechanisms change the MNOs' pricing competition?
\end{question}

To address the above questions, this paper studies the MNOs' market competition in terms of their rollover data mechanism offering and the pricing strategy. 
To abstract the interactions among the competitive MNOs and the heterogeneous users, we focus on the two-MNO case (i.e., duopoly market) in this paper,  and we will study the multiple MNO case (i.e., oligopoly market) in our future work  (with some preliminary discussions in Appendix A).
We hope that our results in this paper could help understand how the competitive MNOs choose their data mechanisms and  make the pricing decisions.

\subsection{Solutions and Contributions} 
We study the economic interactions between two competitive MNOs and a group of heterogeneous mobile users.
We use a three-stage game model to characterize the MNOs' rollover mechanism adoption and pricing decisions, as well as  users' subscriptions.
To be more specific, the two MNOs simultaneously decide their data mechanisms (traditional or rollover) in Stage I, and the corresponding pricing strategies (including subscription fee and the per-unit fee) for the same data caps in Stage II.
Finally, users make their subscription decisions to maximize their payoffs in Stage III.

The main results and key contributions of this paper are summarized  as follows: 
\begin{itemize} 

	\item \textit{Duopoly Competition for Mobile Data Plans with Time Flexibility:} 
	To the best of our knowledge, this is the first paper that systematically studies the MNOs' duopoly competition considering their rollover data mechanism offering and pricing decisions.
	\item \textit{A Three-Stage Competition Model:}
	We formulate the MNOs' market competition and users' subscription as a three-stage game.
	Despite the complexity of the game, we characterize the equilibrium considering MNOs' heterogeneity in the Quality-of-Service (QoS) and the operational cost, as well as  users' heterogeneity in their data valuations. 
	\item \textit{Data Mechanism Equilibrium:}
	Our analysis shows that the high-QoS MNO would gradually abandon the rollover mechanism as its QoS advantage diminishes (due to its increasing cost or the competitor's decreasing cost).
	In this progress, however, the low-QoS MNO has the opportunity to upgrade to the rollover mechanism.
	Particularly, the market competition shares some similarity with that of  the anti-coordination game when the MNOs have  similar QoS and experience comparable cost. 
	That is, no matter who adopts the rollover mechanism, the other will choose to  adopt the traditional one.
	\item \textit{Evaluation based on Empirical Data:}
	The numerical results based on the empirical data show that the market competition significantly reduces MNO's profits.
	Furthermore, both MNOs can obtain higher profits when they have the choice of adopting the rollover mechanism in the duopoly market.
\end{itemize}

The remainder of this paper is organized as follows.
In Section \ref{Section: Related Literature}, we review the related works.
Section \ref{Section: System Model} introduces the system model. 
Section \ref{Section: User Subscription in Stage III} studies users' subscriptions.
Section \ref{Section: MNOs' Pricing Competition in Stage II} investigates MNOs' pricing competition.
Section \ref{Section: MNO's Data Mechanism Selection in Stage I} analyzes MNOs' data mechanism adoptions.
Section \ref{Section: Numerical Results} provides numerical results.
Finally, we conclude this paper in Section \ref{Section: Conclusions and Future Work}.

\section{Related Literature\label{Section: Related Literature}}  
There have been  many excellent studies on mobile data pricing (e.g., \cite{zhengoptimizing,ma2016time,xiong2017economic,wang2016user}). 
However, these prior studies  did not take into account the recently introduced rollover mechanism or the ubiquitous market competition in practice.

The  rollover data mechanism has been recently studied (e.g., \cite{Zhiyuan2018TMC,Zhiyuan2018MobiHoc,zheng2016understanding,wei2018novel,Zhiyuan2019Infocom}). 
Zheng \emph{et al.} in \cite{zheng2016understanding} examined such an innovative data mechanism and found that moderately price-sensitive users  can benefit from subscribing to the rollover data plan. 
Wei \emph{et al.} in \cite{wei2018novel} analyzed the impact of different rollover period lengths from the MNO's perspective. 
In our previous work \cite{Zhiyuan2018TMC,Zhiyuan2018MobiHoc}, we proposed a unified framework for different rollover mechanisms and studied the corresponding optimal design under the single-cap and multi-cap schemes.
Moreover, we examined the economic viability of the rollover mechanism and the data trading market \cite{Zhiyuan2019Infocom}.
However, all of these studies only considered the monopoly case and neglected the ubiquitous market competition in practice.

There are many studies related to multiple MNOs' market competitions in terms of their pricing decisions (e.g., \cite{gibbens2000internet,chau2014economic,ma2016usage,ren2012data}).
For example, Gibbens \emph{et al.} in \cite{gibbens2000internet} focused on the Paris Metro pricing scheme and analyzed the competition between two ISPs who offer multiple service classes.
Later on Chau \emph{et al.} in \cite{chau2014economic} further considered a more general competition model and derived the necessary conditions for the equilibrium.
Ma \emph{et al.} in \cite{ma2016usage} focused on the usage-based scheme and considered the congestion-prone scenario with multiple MNOs.
Ren \textit{et al.} in \cite{ren2012data} focused on users' aggregate data demand dynamics, and optimized the MNO's data plans and long-term network capacity decisions.
However, none of these studies considered the MNOs' different rollover data mechanisms offering.

\begin{table}
	\setlength{\abovecaptionskip}{1pt}
	\setlength{\belowcaptionskip}{0pt}
	\renewcommand{\arraystretch}{1.1}		
	\caption{Comparing Mobile Data Pricing Literatures.} 
	\label{table: Various Plans}
	\centering
	\begin{tabular}{c c c c c c}
		\toprule
		\textbf{Literature} 			& \textbf{Rollover Mechanism}	&  \textbf{Market Competition}	\\
		\midrule
		{[12]-[15]}						&  $\times$						& $\times$						\\
		{[8][9][14]-[16]}				&  \checkmark					& $\times$						\\
		{[17]-[20]}						& $\times$						& \checkmark					\\
		{This Paper}					& \checkmark					& \checkmark					\\
		\bottomrule
	\end{tabular}  
\end{table}

In this paper, we study MNOs' market competition in terms of the rollover mechanism adoption and the pricing decisions.

\section{System Model\label{Section: System Model}} 
We study the market competition between two MNOs who face a common pool of mobile users.
We formulate the system interactions as a three-stage game and characterize how the MNOs' heterogeneity in the Quality-of-Service (QoS) and the operational cost, as well as  users' heterogeneity in the data valuations, affect the various decisions.

Each MNO-$n$ ($n=1,2$) offers a mobile data plan specified by a tuple $\plan_n=\{\dcap_n,\pcap_n,\adfee_n,\mechanism_n\}$:
a subscriber of MNO-$n$ needs to pay a monthly subscription fee $\pcap_n$ for the data cap $\dcap_n$, and possibly some  usage-based overage fee $\adfee_n$ for each unit of data consumption exceeding the data cap $\dcap_n$.
Here $\mechanism_n\in\{\trad,\roll\}$ denotes the data mechanism that the MNO-$n$ adopts.
Specifically, $\mechanism_n=\trad$ represents the traditional mechanism, while $\mechanism_n=\roll$ represents the rollover mechanism.

For $\mechanism_n=\roll$, the rollover data ``inherited'' from  the previous month is consumed {prior to} the current monthly data cap and expires at the end of the current month.\footnote{In practice, there are two different consumption priorities, which has been studied in our previous work. We refer interested readers to \cite{Zhiyuan2017pricing,Zhiyuan2018TMC} for more details.} 
Basically, the rollover data enlarges a subscriber's \textit{effective data cap} within which no overage fee involved.
Based on our previous study of the monopoly market  in \cite{Zhiyuan2017pricing,Zhiyuan2018TMC}, here we denote $\spedata$ as a user's rollover data at the beginning of a month, and $\dcap_{\mechanism_n}^e(\spedata)$ as the \textit{effective cap} of the current month  under MNO-$n$'s data plan $\plan_n=\{\dcap_n,\pcap_n,\adfee_n,\mechanism_n\}$.
\begin{itemize}
	\item The case of $\mechanism_n=\trad$ denotes the traditional data mechanism without rollover data, i.e., $\tau=0$. 
	The effective cap of each month is $\dcap_{\trad}^e(\tau)=\dcap_n$; 
	\item The case of $\mechanism_n=\roll$ denotes the rollover data mechanism. 
	The rollover data $\tau\in[0,\dcap_n]$ from the previous month enlarges the effective cap of the current month, i.e., $\dcap_{\roll}^e(\tau)=\dcap_n+\tau$.
\end{itemize}

We study the two MNOs' market competition in terms of their rollover mechanism adoption and pricing  decisions, given the same data caps (e.g., $\dcap_1=\dcap_2=1$GB).\footnote{In practice, MNOs usually offer multiple data caps, e.g., 1GB, 2GB, or 3GB. We have studied the multi-cap optimization problem for a monopoly MNO in \cite{Zhiyuan2018MobiHoc}. Here we assume that the two MNOs have the same data cap, and focus on the impact of the pricing and the choice of data mechanism.} 
As illustrated in Fig. \ref{fig: SystemModel}, in Stage I, two MNOs simultaneously announce their data mechanisms $\mechanism_1$ and $\mechanism_2$. 
In Stage II, two MNOs simultaneously determine the corresponding prices $s_1=\{\pcap_1,\adfee_1\}$ and $s_2=\{\pcap_2,\adfee_2\}$.\footnote{The competition model is motivated by practical observations: an MNO usually fixes a data mechanism over a relatively long time (e.g., three or five years), and updates the price choices more frequently (e.g., on a yearly basis). This formulation captures the MNO's different decisions at different time scales. Moreover, to reveal the impact of the rollover mechanism on MNOs' market competition, we assume that MNOs make simultaneous decisions in each stage. We will consider the sequential decision process (as in \cite{duan2015economic}) in our future work.}
Finally, users will make their subscription choices in Stage III.
\begin{figure}
	\centering
		\setlength{\abovecaptionskip}{1pt}
	\includegraphics[width=0.85\linewidth]{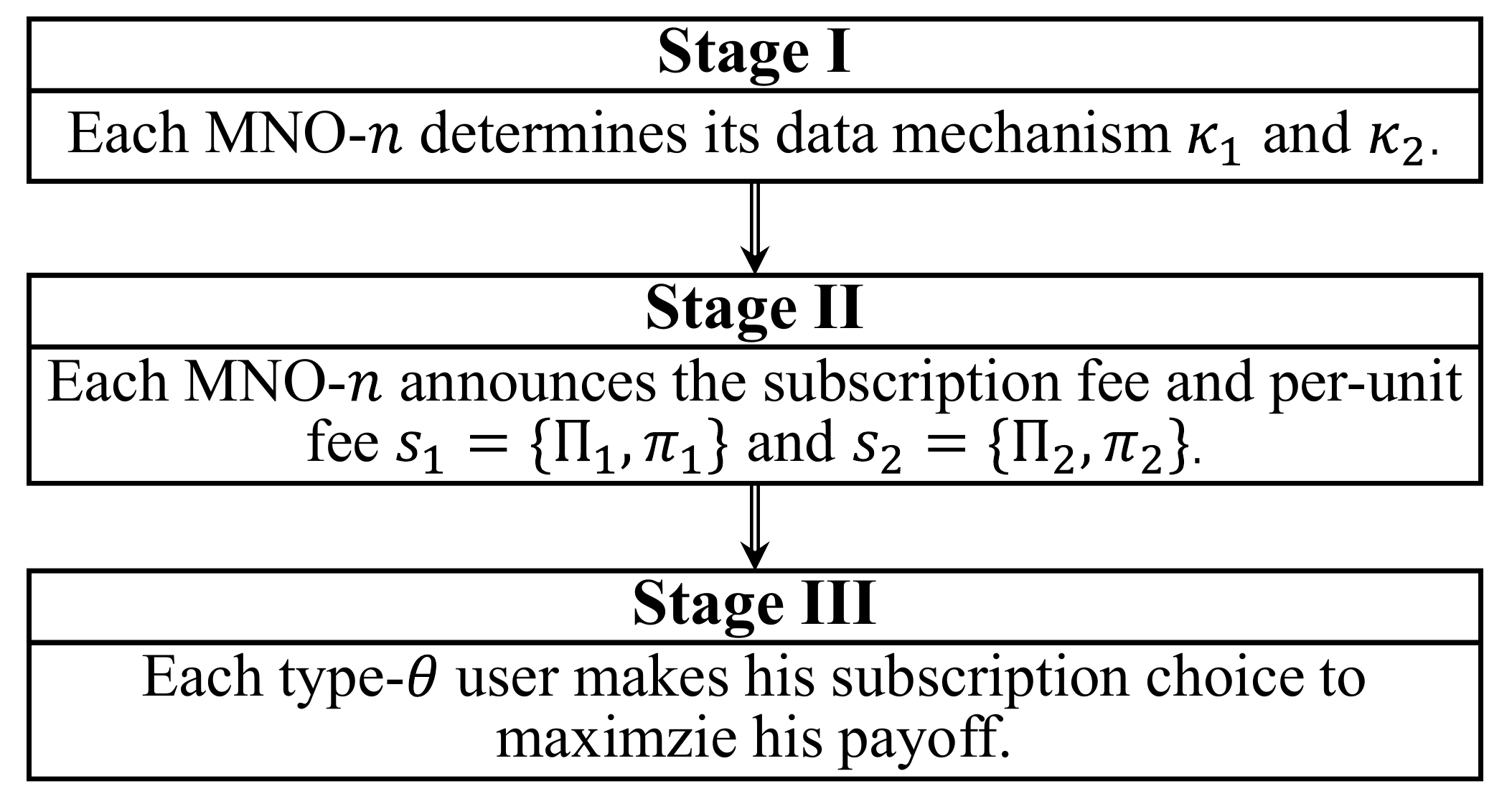}
	\caption{Three-stage competition model. }
	\label{fig: SystemModel}
\end{figure}

Next we introduce users' payoffs and the MNOs' profits in Section \ref{Subsection: Users' Payoffs} and Section \ref{Subsection: MNOs' Profits}, respectively.
Table \ref{Table: Key notations} summarizes some key notations in this paper.

\subsection{Users' Payoffs\label{Subsection: Users' Payoffs}} 
\subsubsection{User Characteristics}
Now we introduce three characterizations of a user: the data demand $d$, the data valuation $\val$, and the network substitutability $\cut$. 
Based on these, we will derive a user's monthly expected payoff. 


First, we model a user's data demand $d$ as a discrete random variable with a probability mass function $f(d)$, a mean value of $\dmean$, and a finite integer support $\{0,1,2,...,\dmax\}$ \cite{ren2012data}.
Here the data demand is measured in the minimum data unit (e.g., $1$KB or $1$MB according to the MNOs).

Second, we denote $\val$ as a user's utility from consuming one unit of data, i.e., the user's data valuation as in \cite{wang2017role,ma2016usage}.
According to the empirical results from \cite{Zhiyuan2018TMC}, in the telecom market of mainland China, the users' data valuations follow a gamma distribution, and falls into the range between $10$ RMB/GB and $60$ RMB/GB with a large probability.

Third, we consider  a user's behavior change after he exceeds the \textit{effective cap}. 
Although the user will still continue consuming data, he will reduce his consumption by relying more heavily on  alternative networks (such as office or home Wi-Fi networks).
Following \cite{sen2012economics}, we characterize such a behavior by a user's network substitutability $\cut\in[0,1]$, which denotes the fraction of overage usage shrink. 
For example, $\cut=0.6$ means that on average, $60\%$ of the user's portion of data demand above the effective cap will be reduced. 
A larger $\cut$ corresponds to more overage usage reduction (hence, a better network substitutability).
The empirical results in \cite{Zhiyuan2018TMC} show that most people would shrink $85\%\sim95\%$ overage usage. 
That is, users do not differ significantly in terms of their network substitutability.

\begin{table}
	\setlength{\abovecaptionskip}{1pt}
	\setlength{\belowcaptionskip}{0pt}
	\renewcommand{\arraystretch}{1.1}		
	\caption{Key Notations.} 
	\label{Table: Key notations}
	\centering
	\begin{tabular}{c|c|l}
		\toprule
		\multicolumn{2}{c}{\textbf{Symbols}} 		& $\qquad\qquad\qquad$\textbf{Physical Meaning}					\\
		\midrule
		\multirow{10}{*}{MNO}	& $\dcap_n$			& The data cap of MNO-$n$.	\\
		& $\pcap_n$				& The subscription fee of MNO-$n$.	\\
		& $\adfee_n$			& The overage usage fee of MNO-$n$.	\\
		& $\mechanism_n$		& The data mechanism of MNO-$n$.	\\
		& $s_n$					& The pricing decision ($s_n=\{\pcap_n,\adfee_n\}$) of MNO-$n$.		\\
		\cline{2-3}
		& $\QoS_n$				& The quality of service (QoS) of MNO-$n$.	\\
		& $c_n$					& The marginal operational cost of MNO-$n$.	\\
		& $\costqos_n$			& The cost-quality ratio of MNO-$n$, i.e., $\costqos_n=c_n/\QoS_n$.\\
		& $\valthr_n$	 		& The threshold user type of MNO-$n$.	\\
		& $\profit_n$			& The expected profit of MNO-$n$.	\\
		& $\valeq$				& The neutral user type of the two MNOs.	\\
		\hline
		\multirow{4}{*}{User}	& $\val$			& A user's  valuation for consuming one unit data.	\\
		& $\cut$			& Users' common network substitutability.	\\
		& $\usage_{\mechanism_n}$		& The expected usage of an MNO-$n$'s subscriber.\\
		& $\payoffexp_n$ 	& The expected payoff of an MNO-$n$'s subscriber.	\\								
		\bottomrule
	\end{tabular}  
\end{table}

In the following, we normalize the total population size to be one and  follow \cite{luo2016integrated},\cite{luo2015mine} by exploring users' heterogeneity in the data valuation $\val$.
Hence we  characterize each user according to his type $\val$.
The distribution of $\val$ of the entire user population has a PDF $h(\val)$ and CDF $H(\val)$ with the support $[0,\valmax]$.

\subsubsection{User Payoff under Different Data Mechanisms}
Next we introduce users' payoffs based on their characteristics and the effect of different data mechanisms.

A user's payoff is the difference between his utility and the  total payment.
More specifically, for an MNO-$n$'s subscriber with $d$ units of data demand and an effective data cap $\dcap_{\mechanism_n}^e(\spedata)$, his actual data consumption is $d-\cut[d-\dcap_{\mechanism_n}^e(\spedata)]^+$, where $[x]^+=\max\{0,x\}$.
Moreover, we use $\QoS_n$ to represent the MNO-$n$'s \textit{average} quality of service (QoS).\footnote{An MNO's wireless data service depends on the 	network congestion, which has been studied before (e.g., \cite{ma2016usage,sen2015smart}). In this work, instead of modeling the detailed congestion-aware control, we are more interested in the long-term average quality of the MNO's wireless data service. Hence the parameter $\QoS$ represents the impact of the MNO's average QoS on users' utilities of consuming data. }
Mathematically, the parameter $\QoS_n$ is a utility multiplicative coefficient, thus the subscriber's utility is $\QoS_n\val(d-\cut[d-\dcap_{\mechanism_n}^e(\spedata)]^+)$.
In addition, the subscriber's total payment consists of the subscription fee $\pcap_n$ and the overage charge $\adfee_n(1-\cut)[d-\dcap_{\mechanism_n}^e(\spedata)]^+$.
Therefore, \textit{a type-$\val$ MNO-$n$ subscriber's payoff} with a data demand $d$ and an effective cap $\dcap_{\mechanism_n}^e(\tau)$ is 
\begin{equation}\label{Equ: payoff realized}
	\begin{aligned} 
		&\payoff_n(\plan_n,\val,d,\spedata)
		=\textstyle \QoS_n \val\big(d-\cut[d-\dcap_{\mechanism_n}^e(\spedata)]^+\big) \\
		&\qquad\qquad\qquad\qquad -\adfee_n(1-\cut)[d-\dcap_{\mechanism_n}^e(\spedata)]^+-\pcap_n.
	\end{aligned}	
\end{equation}

For $\mechanism_n=\roll$, the data demand $d$ and the rollover data $\spedata$ in (\ref{Equ: payoff realized}) are two random variables that change in each month.
However, for $\mechanism_n=\trad$, the rollover data $\spedata$ in (\ref{Equ: payoff realized}) is always zero, the randomness only lies in the monthly data demand $d$.
Therefore, we take the expectation over $d$ (and $\spedata$) to get \textit{a type-$\val$ user's expected monthly payoff} under $\plan_n$ as follows:
\begin{equation}\label{Equ: system model payoffexp}
	\begin{aligned}
		\payoffexp_n(\plan_n,\val)=&\mathbb{E}_{d,\spedata} \big\{ \payoff_n\left( \plan_n,\val,d,\spedata \right) \big\} \\
		=	&\QoS_n \val\left[\dmean-\cut A_{\mechanism_n}(\dcap_n)  \right] \\
		&\qquad\qquad -\adfee_n(1-\cut) A_{\mechanism_n}(\dcap_n)-\pcap_n.
	\end{aligned}
\end{equation}

Here $A_{\mechanism_n}(\dcap_n)$ is the user's \textit{expected monthly overage data consumption} under $\plan_n$, which is given by
\begin{equation}\label{Equ: system model A_i(Q_i)}
A_{\mechanism_n}(\dcap_n)
=\left\{
\begin{aligned}
&  \textstyle \sum\limits_{d} \left[d-\dcap_n^e(\spedata) \right]^+f(d) ,	&\text{if }& \mechanism_n=\trad ,\\
&  \textstyle \sum\limits_{\spedata}\sum\limits_{d} \left[d-\dcap_n^e(\spedata) \right]^+f(d)p_{\roll}(\spedata),&\text{if }& \mechanism_n=\roll.
\end{aligned}	
\right.	
\end{equation}

Note that in (\ref{Equ: system model payoffexp}), the difference between the traditional and rollover  mechanisms is entirely captured by $A_{\mechanism_n}(\cdot)$.
Specifically, $p_{\roll}(\cdot)$ represents the distribution of the subscriber's rollover data.
In our previous work \cite{Zhiyuan2017pricing,Zhiyuan2018TMC}, we have introduced how to compute $p_{\roll}(\cdot)$ in detail, and obtained the following inequality 
\begin{equation}\label{Equ: flexibility}
A_{\trad}(\dcap)>A_{\roll}(\dcap),\  \forall\ \dcap\in(0,\dmax),
\end{equation}
which indicates that a user incurs less overage data consumption under the rollover mechanism $\roll$.
This is why we say that {the rollover mechanism $\roll$ offers a better time flexibility than the traditional mechanism $\trad$}.
In this paper, we will directly use this conclusion, and refer interested readers to Section 4 in \cite{Zhiyuan2018TMC} for more details.


Later on, we will study two MNOs' competition given their same data caps.
To facilitate our later analysis, here we further define $\usage_{\mechanism_n}$ as 
\begin{equation}\label{Equ: alpha}
	\usage_{\mechanism_n}\eq\dmean-\cut A_{\mechanism_n}(\dcap_n),
\end{equation}
which represents the user's \textit{expected monthly data consumption} under the data mechanism $\mechanism_n$.
According to (\ref{Equ: flexibility}), we know that the rollover mechanism $\roll$ encourages users to consume more data, i.e.,
\begin{equation}\label{Equ: alpha flexibility}
\usage_{\roll}>\usage_{\trad}.
\end{equation}

The inequality (\ref{Equ: alpha flexibility}) plays a significant roles when we analyze MNOs' competition over their data mechanisms, we will further discuss it in Section \ref{Section: MNO's Data Mechanism Selection in Stage I}.

Substituting (\ref{Equ: alpha}) into (\ref{Equ: system model payoffexp}), we can write \textit{a type-$\val$ $\plan_n$ subscriber's expected monthly payoff} as
\begin{equation}\label{Equ: system model payoffexp a}
	\textstyle \payoffexp_n(\plan_n,\val) = \QoS_n\usage_{\mechanism_n}\val - \adfee_n\left(\cut^{-1}-1\right)\left(\dmean-\usage_{\mechanism_n}\right) -\pcap_n,
\end{equation}
where $\QoS_n\usage_{\mechanism_n}$ represents the user's utility increment for unit data valuation increment under the subscription of MNO-$n$.
In the following analysis, we will directly use (\ref{Equ: system model payoffexp a}).
To emphasize its dependence on the data mechanism $\mechanism_n$ and the pricing strategy $s_n=\{\pcap_n,\adfee_n\}$, sometimes we will write $\payoffexp_n(\plan_n,\val)$ as $\payoffexp_n(\mechanism_n,s_n,\val)$.


\subsection{MNOs' Profits in Competition\label{Subsection: MNOs' Profits}}
Next we focus on two MNOs' market competition and derive their profits given their data mechanism adoption $\Mechanism=\{\mechanism_1,\mechanism_2\}$ and the pricing strategies $\pricing=\{s_1,s_2\}$.

\subsubsection{MNO Revenue}
The MNO's revenue from a single subscriber includes  the subscription fee and the overage payment.
Therefore, the \textit{expected monthly revenue of MNO-$n$ from a type-$\val$ subscriber} is 
\begin{equation}
	\revenueexp_n(\mechanism_n,s_n,\val)=
	\textstyle \adfee_n\left(\cut^{-1}-1\right)\left(\dmean-\usage_{\mechanism_n}\right) + \pcap_n ,
\end{equation}
where $\adfee_n(\cut^{-1}-1)(\dmean-\usage_{\mechanism_n})$ is the subscriber's expected monthly overage payment.
Therefore, the \textit{expected monthly revenue of MNO-$n$ from all its subscribers} is
\begin{equation}\label{Equ: }
	\revenue_n(\Mechanism,\pricing)
	=  \int_{\subscription_n(\Mechanism,\pricing)} \revenueexp_n(\mechanism_n,s_n,\val)h(\val) {\rm d}\val.
\end{equation}
where $\subscription_n(\Mechanism,\pricing)\subseteq[0,\valmax]$ denotes the subscribers of MNO-$n$ under  data mechanism $\Mechanism$ and pricing strategy $\pricing$.
We will further discuss the calculation over the user type $\val$ in Section \ref{Section: User Subscription in Stage III}.

\subsubsection{MNO Cost}
As for the MNO's cost, we focus on its operational expenditure (OpEx).
Specifically, it is proportional to the total data consumption of the MNO's subscribers \cite{nabipay2011flat}.
The \textit{total expected data consumption of MNO-$n$'s  subscribers} is
\begin{equation}
	\load_n(\Mechanism,\pricing)=  \int_{\subscription_n(\Mechanism,\pricing)} \usage_{\mechanism_n} h(\val) {\rm d}\val,
\end{equation}
where $\usage_{\mechanism_n}$ defined in (\ref{Equ: alpha})  represents a user's expected monthly data consumption under $\plan_n$.
For analysis tractability, we follow \cite{nabipay2011flat} by considering a linear cost, and denote $c_n$ as MNO-$n$'s marginal cost from unit data consumption.\footnote{Such a linear-form cost has been widely used to model the operator's operational cost (e.g., \cite{luo2016integrated},\cite{duan2012duopoly}).}
Accordingly, the \textit{total cost of MNO-$n$} is 
\begin{equation}\label{Equ: cost}
	\cost_n(\Mechanism,\pricing) = \load_n(\Mechanism,\pricing)\cdot c_n.
\end{equation}

\subsubsection{MNO Profit}
The \textit{MNO-$n$'s profit} $\profit_n(\Mechanism,\pricing)$ is defined as the difference between its revenue and cost, i.e.,
\begin{equation}\label{Equ: Profit}
	\profit_n(\Mechanism,\pricing)	
	=\revenue_n(\Mechanism,\pricing)-\cost_n(\Mechanism,\pricing). \\
\end{equation}

Now we have formulated MNOs' profits in the duopoly market, and introduced MNOs' two orthogonal characteristics: the QoS parameter $\QoS_n$ as in (\ref{Equ: system model payoffexp a}) and the marginal cost parameter as $c_n$ in (\ref{Equ: cost}). 
Next we use backward induction to study the three-stage game.


\begin{figure*}
	\begin{minipage}{0.22\textwidth}
		\centering
		\includegraphics[width=1\linewidth]{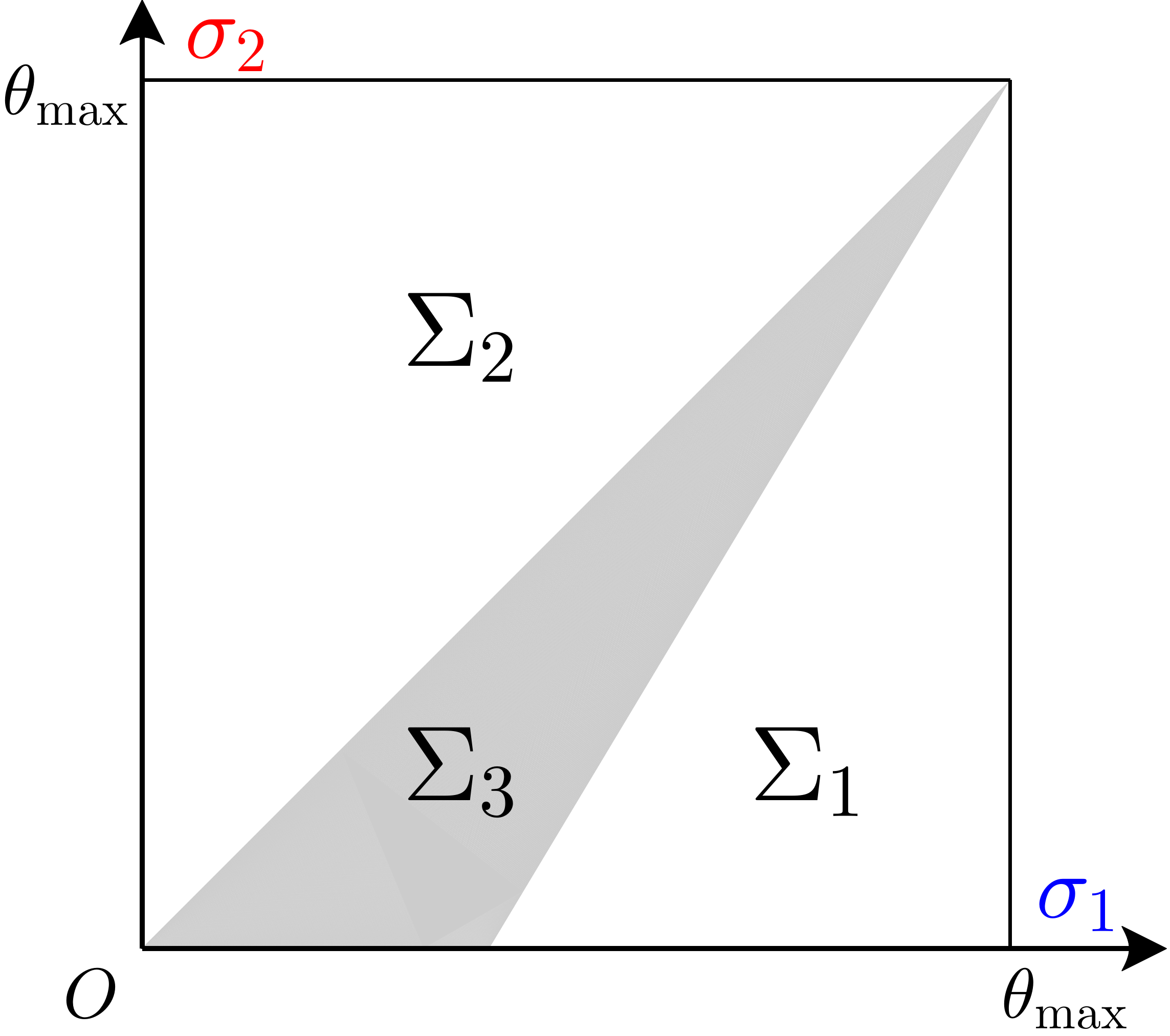}
		\caption{Partition structure.}
		\label{fig: Partition structure}
	\end{minipage}\   
	\begin{minipage}{0.77\textwidth}
		\centering
		\subfigure[MNO-2 surviving ($\Sigma_1$)]{\label{fig: Duopoly_1}{\includegraphics[width=0.3\linewidth]{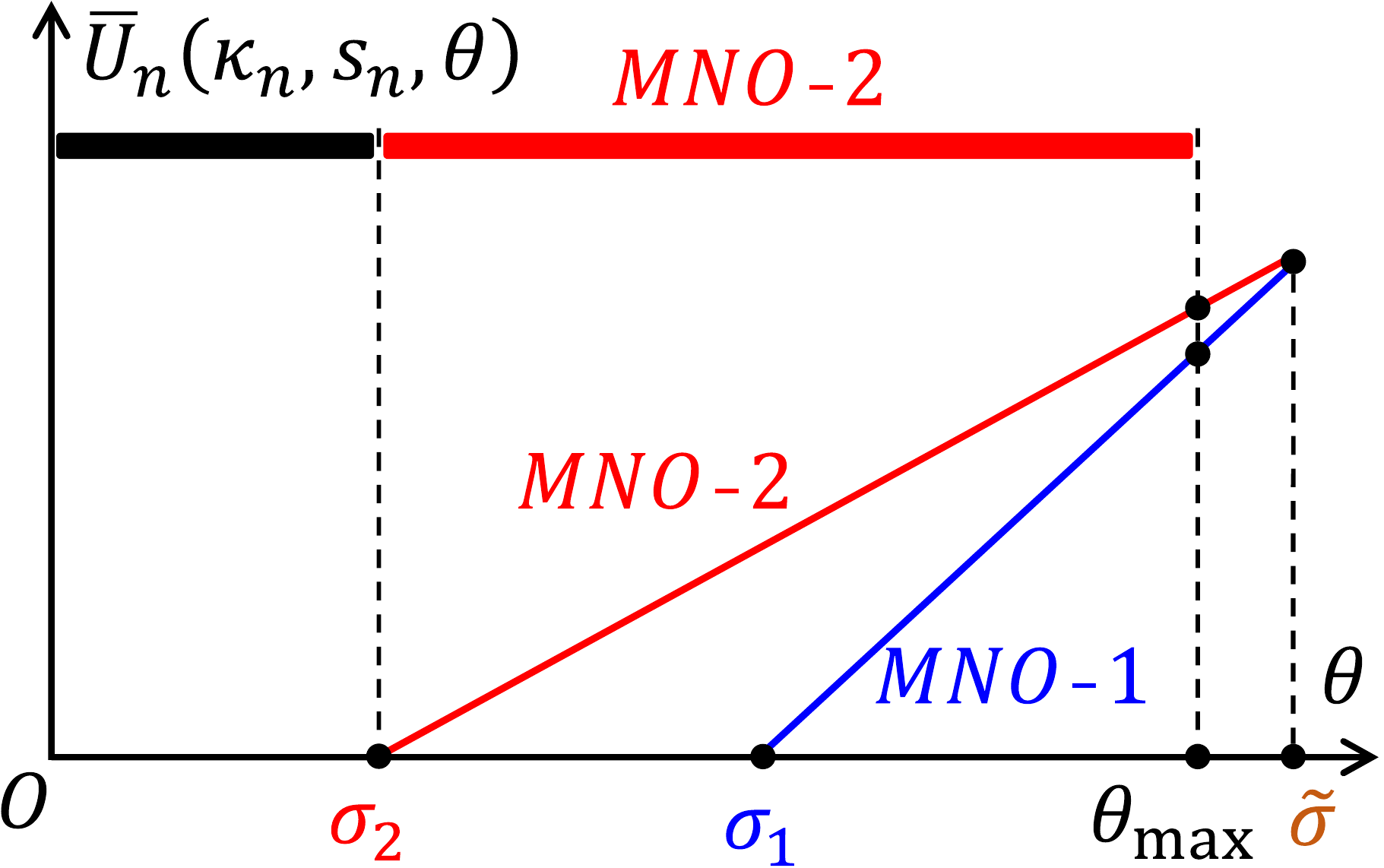}}}\quad
		\subfigure[MNO-1 surviving ($\Sigma_2$)]{\label{fig: Duopoly_3}{\includegraphics[width=0.29\linewidth]{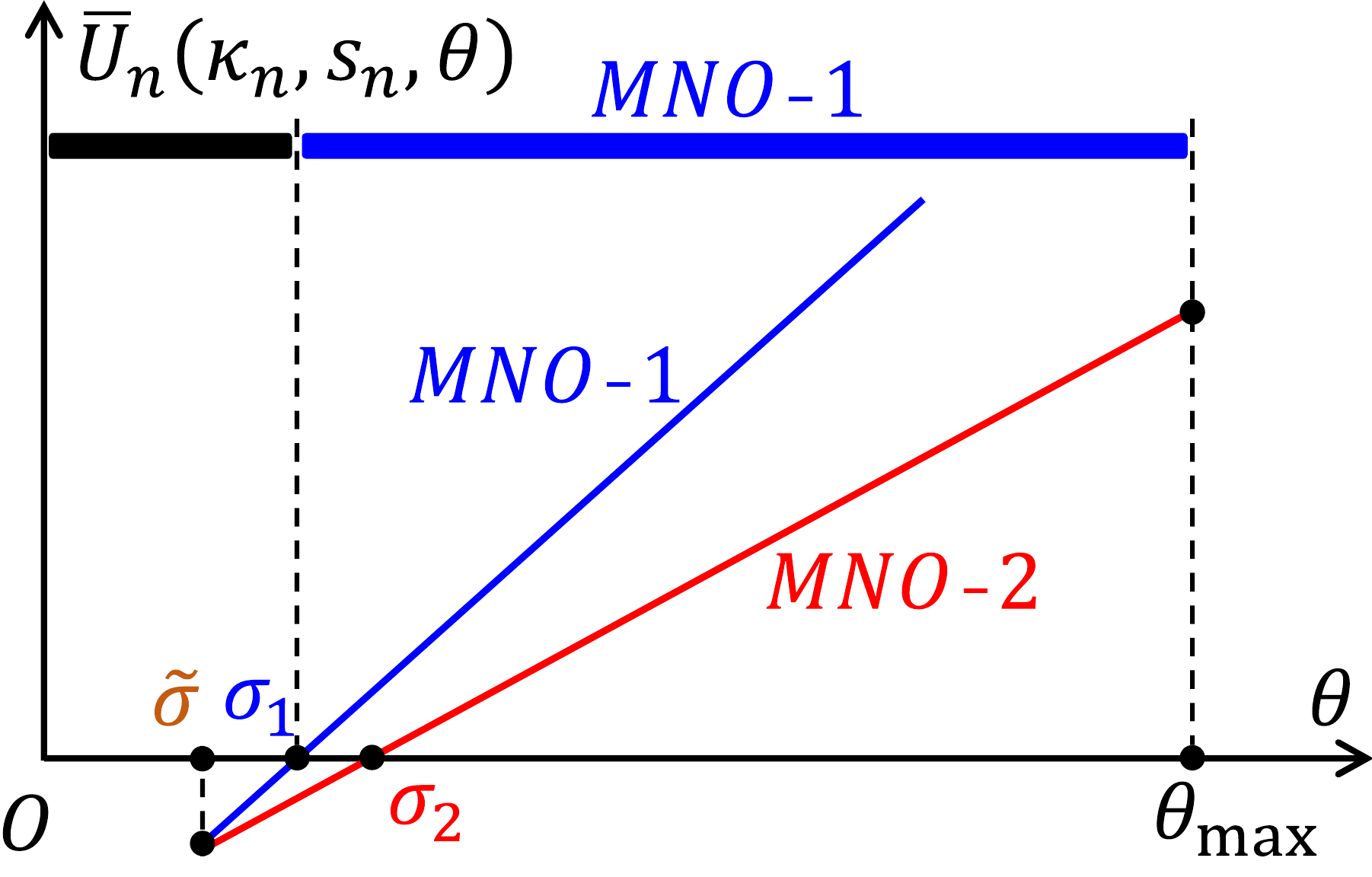}}}\quad 
		\subfigure[Share the market ($\Sigma_3$).]{\label{fig: Duopoly_2}{\includegraphics[width=0.3\linewidth]{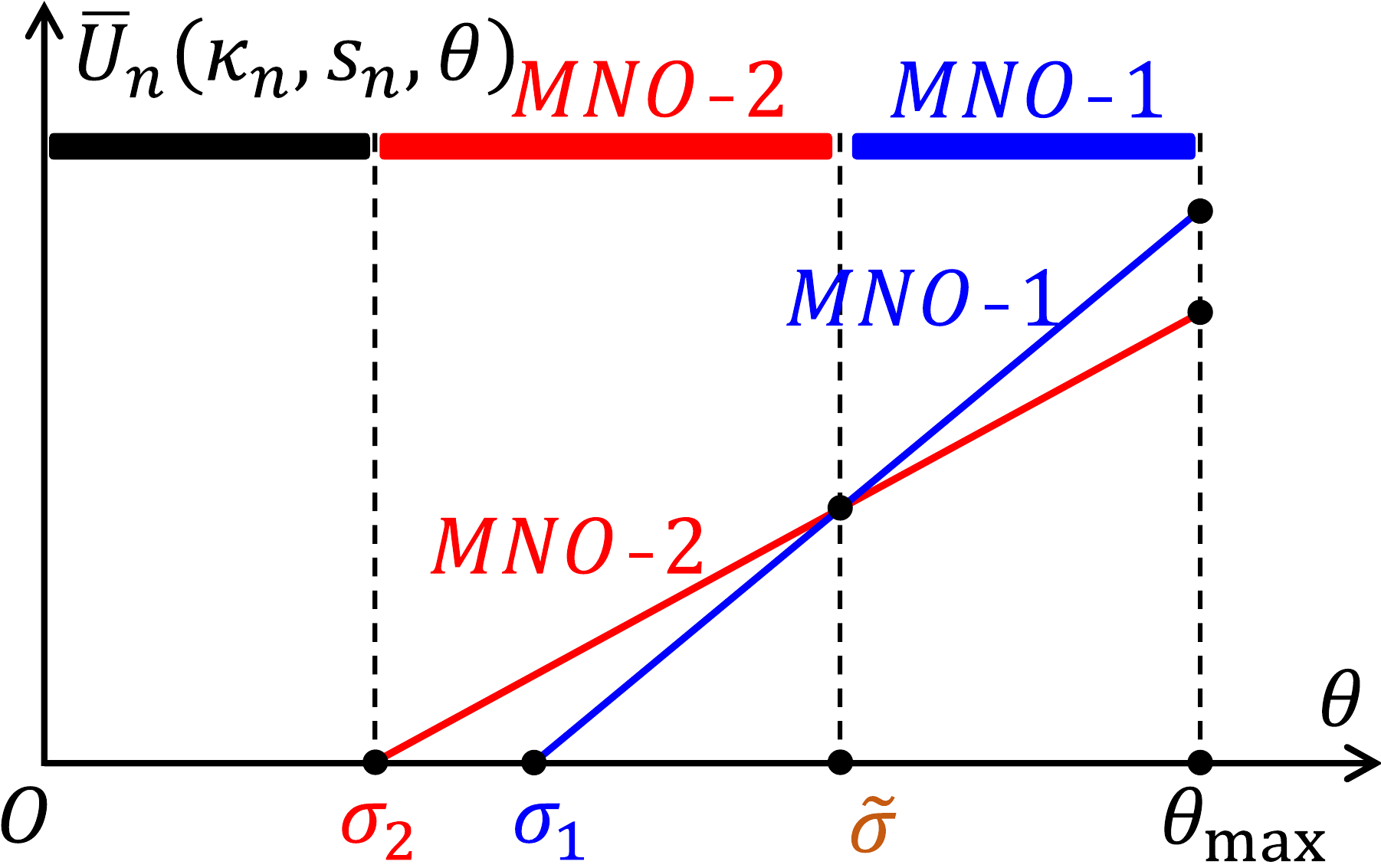}}}		
		\caption{Three market partition modes in the duopoly market. }
		\label{fig: Partition}
	\end{minipage} 
\end{figure*}

\section{User Subscription in Stage III}\label{Section: User Subscription in Stage III}
In Stage III, each user makes his subscription decision given the two MNOs' pricing strategies $\pricing=\{s_1,s_2\}$ in Stage II and data mechanism $\Mechanism=\{\mechanism_1,\mechanism_2\}$ in Stage I.
Specifically, a type-$\val$ user will subscribe to MNO-$n$ if MNO-$n$ can bring him a larger (among the two MNOs) and non-negative payoff.
For notation simplicity, we first introduce two definitions in Section \ref{Subsection: Threshold User Type and Neutral User Type}, then we present the duopoly market partition in Section \ref{Subsection: Duopoly Market Partition}.

\subsection{Threshold User Type and Neutral User Type}\label{Subsection: Threshold User Type and Neutral User Type}
\subsubsection{Threshold User Type}
We introduce the \textit{threshold user type} $\valthr_n$ of MNO-$n$ in Definition \ref{Definition: Threshold User Type}.
\begin{definition}[Threshold User Type]\label{Definition: Threshold User Type}
	The MNO-$n$'s threshold user type $\valthr_n\in[0,\valmax]$ corresponds to the user who achieves a zero expected payoff, i.e., $\payoffexp_n(\mechanism_n,s_n,\valthr_n)=0$.\footnote{Among the user group defined by $\val\in[0,\valmax]$, all users experience a negative payoff if $\valthr_n(\mechanism_n,s_n)>\valmax$.}
	Thus, $\valthr_n$ is
	\begin{equation}\label{Equ: val_n} 
	\valthr_n(\mechanism_n,s_n)
	\triangleq\frac{\adfee_n(\cut^{-1}-1) (\dmean-\usage_{\mechanism_n}) + \pcap_n}{\QoS_n\usage_{\mechanism_n}}.
	\end{equation}	
\end{definition}

Note from (\ref{Equ: val_n}) that the MNO-$n$'s threshold user type $\valthr_n(\mechanism_n,s_n)$ depends on its data mechanism $\mechanism_n$ and the pricing strategy $s_n=\{\pcap_n,\adfee_n\}$.
Moreover, $\QoS_n\usage_{\mechanism_n}$ in (\ref{Equ: val_n}) represents the user's utility increment for unit data valuation increment under the subscription of MNO-$n$.
For notation simplicity, we define $\xi(\Mechanism)$ as in (\ref{Equ: xi}), which will be used in Section \ref{Subsubsection: Neutral User Type} for the definition of the neutral user type.
\begin{equation}\label{Equ: xi} 
\xi(\Mechanism) \eq \frac{\QoS_2\usage_{\mechanism_2}}{\QoS_1\usage_{\mechanism_1}}.
\end{equation}

Note that $\xi(\Mechanism)$ depends on the MNOs' data mechanisms $\Mechanism=\{\mechanism_1,\mechanism_2\}$.
Since we will analyze users' subscription (in Section \ref{Section: User Subscription in Stage III}) and MNOs' pricing competition (in Section \ref{Section: MNOs' Pricing Competition in Stage II}) given data mechanism selection $\Mechanism$ from Stage I, without loss of generality, we make Assumption \ref{Assumption: xi} in Section \ref{Section: User Subscription in Stage III} and Section \ref{Section: MNOs' Pricing Competition in Stage II}.
That is, under the data mechanism selection $\Mechanism$ in Stage I, MNO-1 has an advantage in terms of $\QoS_n\usage_{\mechanism_n}$ among the two MNOs, i.e., its subscribers' marginal utility change  for one unit data valuation increment is larger.
Hence we say MNO-1 is ``stronger''.
\begin{assumption}\label{Assumption: xi}
	Given the data mechanism $\Mechanism$, $\xi(\Mechanism)<1$.
\end{assumption}

Note that \textit{Assumption \ref{Assumption: xi} is not a technical assumption that limits our contributions} for two reasons.
First, we can use a similar approach to analyze the case of $\xi(\Mechanism)>1$ (by switching the indices of the two MNOs).
Second, the case of $\xi(\Mechanism)=1$ corresponds to the well-known Bertrand competition \cite{huang2013wireless}, which is actually a degeneration of the case $\xi(\Mechanism)<1$ and $\xi(\Mechanism)>1$.
We refer interested readers to Appendix \ref{Appendix: Bertrand} for more detailed discussions.

\subsubsection{Neutral User Type}\label{Subsubsection: Neutral User Type}
We introduce the \textit{neutral user type} $\valeq$ between the two MNOs in Definition \ref{Definition: Neutral User Type}.
\begin{definition}[Neutral User Type]\label{Definition: Neutral User Type}
	The neutral user type, denoted by $\valeq$, is a user type who can achieve the same payoff by subscribing to either MNO, i.e., $\payoffexp_1(\mechanism_1,s_1,\valeq)=\payoffexp_2(\mechanism_2,s_2,\valeq)$. 
	We can derive  $\valeq$ as follows
	\begin{equation}\label{Equ: theta_eq}
	 \valeq (\valthr_1,\valthr_2)= \frac{ \valthr_1 - \xi(\Mechanism) \cdot \valthr_2  }{ 1 - \xi(\Mechanism) },
	\end{equation}
	where $\valthr_1$ and $\valthr_2$ are two MNOs' threshold user types defined in {Definition \ref{Definition: Threshold User Type}}.
\end{definition}

So far we have introduced the threshold user type and the neutral user type.
Next let us move on to the subscription equilibrium in the duopoly market.

\subsection{Duopoly Market Partition}\label{Subsection: Duopoly Market Partition}
Based on Definitions \ref{Definition: Threshold User Type} and \ref{Definition: Neutral User Type},  we summarize the market partition in Theorem \ref{Theorem: Market Partition}.
The proof is given in Appendix \ref{Appendix: Partition BRs}.
\begin{theorem}[Market Partition Equilibrium]\label{Theorem: Market Partition}
	Consider  MNOs' threshold user types $\valthr_1$ and $\valthr_2$  under the data mechanism $\Mechanism=\{\mechanism_1,\mechanism_2\}$ and the pricing strategy $\pricing=\{s_1,s_2\}$.
	The market partition equilibrium, denoted by $\subscription_1^{*}(\Mechanism,\pricing)$ and $\subscription_2^{*}(\Mechanism,\pricing)$, has three cases in the $(\valthr_1,\valthr_2)$ plane shown in Fig. \ref{fig: Partition structure}.
	\begin{enumerate}
		\item $\Sigma_1$: MNO-1 has a much larger threshold user type than MNO-2, i.e., $(\valthr_1,\valthr_2)\in\Sigma_1$ where $\Sigma_1$ is
		\begin{equation}
		\Sigma_1\eq\{(\valthr_1,\valthr_2):\valthr_1-\valthr_2 \ge (1-\xi(\Mechanism))(\valmax-\valthr_2)\}.
		\end{equation}		
		In this case, 	MNO-2's market share corresponds to the users with $\val$ in  $\subscription_2^{*}(\Mechanism,\pricing)=[\valthr_2,\valmax]$, while MNO-1 has a  zero market share $\subscription_1^{*}(\Mechanism,\pricing)=\varnothing$, as shown in Fig. \ref{fig: Duopoly_1}.
		
		\item $\Sigma_2$: MNO-1 has a smaller threshold user type than MNO-2, i.e., $(\valthr_1,\valthr_2)\in\Sigma_2$ where $\Sigma_2$ is
		\begin{equation}
		\Sigma_2\eq\{(\valthr_1,\valthr_2):\valthr_1 -\valthr_2 \le 0\}.
		\end{equation}
		In this case, MNO-1 has a market share of $\subscription_1^{*}(\Mechanism,\pricing)=[\valthr_1,\valmax]$, while MNO-2 has a zero market share of $\subscription_2^{*}(\Mechanism,\pricing)=\varnothing$, as shown in Fig. \ref{fig: Duopoly_3}.
		
		\item $\Sigma_3$: MNO-1 has a slightly larger threshold user type than MNO-2, i.e., $(\valthr_1,\valthr_2)\in\Sigma_3$ where $\Sigma_3$ is
		\begin{equation}
		\Sigma_3\eq\{(\valthr_1,\valthr_2):0 < \valthr_1 -\valthr_2 < (1-\xi(\Mechanism))(\valmax-\valthr_2)\}.
		\end{equation} 
		In this case, MNO-1 has a market share of  $\subscription_1^{*}(\Mechanism,\pricing)=[\valeq,\valmax]$, and MNO-2 has a market share of  $\subscription_2^{*}(\Mechanism,\pricing)=[\valthr_2,\valeq]$, as shown in Fig. \ref{fig: Duopoly_2}.
	\end{enumerate}
\end{theorem}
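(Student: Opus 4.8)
The plan is to exploit the fact, already contained in Definition \ref{Definition: Threshold User Type}, that each MNO's expected payoff is an affine function of the user type $\val$ that vanishes exactly at $\valthr_n$. Since $\payoffexp_n(\mechanism_n,s_n,\val)$ in (\ref{Equ: system model payoffexp a}) is linear in $\val$ and $\payoffexp_n(\mechanism_n,s_n,\valthr_n)=0$, I would first rewrite it as
\begin{equation}
\payoffexp_n(\mechanism_n,s_n,\val)=\QoS_n\usage_{\mechanism_n}\left(\val-\valthr_n\right).
\end{equation}
Thus each MNO's payoff is an increasing line in $\val$ with slope $\QoS_n\usage_{\mechanism_n}$, and Assumption \ref{Assumption: xi} ($\xi(\Mechanism)<1$) says precisely that MNO-1's line is the steeper one, i.e. $\QoS_1\usage_{\mechanism_1}>\QoS_2\usage_{\mechanism_2}>0$. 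This geometric reduction is the backbone of the whole argument.

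With this picture, a type-$\val$ user subscribes to whichever MNO offers the larger payoff provided it is non-negative, and to neither if both payoffs are negative. The two lines cross exactly at $\valeq$ by Definition \ref{Definition: Neutral User Type}, and because MNO-1's line is steeper we get $\payoffexp_1>\payoffexp_2$ for $\val>\valeq$ and $\payoffexp_1<\payoffexp_2$ for $\val<\valeq$. Moreover MNO-$n$ supplies a non-negative payoff exactly on $[\valthr_n,\valmax]$. Hence, up to the measure-zero tie at $\valeq$, the candidate market of MNO-1 is $\{\val\ge\valeq\}\cap[\valthr_1,\valmax]$ and that of MNO-2 is $\{\val\le\valeq\}\cap[\valthr_2,\valmax]$, so the entire proof reduces to locating $\valeq$ relative to $\valthr_1$, $\valthr_2$, and $\valmax$.

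Next I would record the elementary identities that follow directly from (\ref{Equ: theta_eq}):
\begin{equation}
\valeq-\valthr_2=\frac{\valthr_1-\valthr_2}{1-\xi(\Mechanism)},\qquad \valeq-\valthr_1=\frac{\xi(\Mechanism)\left(\valthr_1-\valthr_2\right)}{1-\xi(\Mechanism)},
\end{equation}
together with the equivalence $\valeq<\valmax\iff \valthr_1-\valthr_2<(1-\xi(\Mechanism))(\valmax-\valthr_2)$. Since $1-\xi(\Mechanism)>0$, these show that the sign of $\valthr_1-\valthr_2$ simultaneously controls the ordering of $\valeq$ against both thresholds, which is exactly what separates the three regions. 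In $\Sigma_3$ the two defining inequalities give $\valthr_2<\valeq<\valmax$ and, via the second identity, also $\valeq>\valthr_1$; hence on $[\valeq,\valmax]$ MNO-1 is preferred with strictly positive payoff, on $[\valthr_2,\valeq]$ MNO-2 is preferred with non-negative payoff, and users below $\valthr_2$ get a negative payoff from both, yielding the claimed split. In $\Sigma_1$ the inequality forces $\valeq\ge\valmax$, so $[\valeq,\valmax]$ collapses and every subscribing user (those with $\val\ge\valthr_2$) strictly prefers MNO-2. In $\Sigma_2$ the inequality $\valthr_1\le\valthr_2$ forces $\valeq\le\valthr_1\le\valthr_2$, so MNO-1 dominates on all of $[\valthr_1,\valmax]$ and MNO-2 is squeezed out.

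The main obstacle is not any single computation but the bookkeeping of combining the preference ordering with the two non-negativity constraints at once, and verifying that the definitions of $\Sigma_1,\Sigma_2,\Sigma_3$ partition the admissible $(\valthr_1,\valthr_2)$ range with neither gaps nor overlaps. In particular, in $\Sigma_3$ one must check that the common payoff value at $\valeq$ is strictly positive, which is precisely what $\valeq>\valthr_1$ delivers, so that MNO-1 genuinely keeps a non-empty top segment rather than losing it to the non-negativity constraint; the boundary cases $\valeq=\valmax$ and $\valthr_1=\valthr_2$ are harmless since they alter only a single point and hence do not affect the market-share sets $\subscription_n^*(\Mechanism,\pricing)$.
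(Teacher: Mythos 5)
Your proposal is correct and takes essentially the same route as the paper's proof: both reduce the partition to locating the neutral type $\valeq$ relative to $\valthr_1$, $\valthr_2$, and $\valmax$, and then translate those comparisons algebraically into the defining inequalities of $\Sigma_1$, $\Sigma_2$, and $\Sigma_3$. The only distinction is one of rigor, not of method: you derive the preference ordering explicitly from the affine form $\payoffexp_n(\mechanism_n,s_n,\val)=\QoS_n\usage_{\mechanism_n}(\val-\valthr_n)$ and the slope comparison given by Assumption \ref{Assumption: xi}, whereas the paper treats these facts as evident from its figures.
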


Theorem \ref{Theorem: Market Partition}  reveals two market partition equilibriums, i.e., coexistence (i.e., $\Sigma_3$) or one-MNO-surviving (i.e., $\Sigma_1$, and $\Sigma_2$), depending on the data mechanism $\Mechanism$ and the pricing strategy $\pricing$.
Note that the one-MNO-surviving result is different from the monopoly case, since the zero market share MNO in the one-MNO-surviving case  might still affect the decisions of the surviving MNO.
We will further discuss it in Section \ref{Subsection: Best Response Analysis}.

\section{MNOs' Pricing Competition in Stage II}\label{Section: MNOs' Pricing Competition in Stage II}
In Stage II, the MNOs simultaneously determine the pricing strategies $\pricing$=$\{s_1,s_2\}$, given their data mechanisms $\Mechanism$=$\{\mechanism_1,\mechanism_2\}$ in Stage I and  the market partition equilibrium in Stage III.

We substitute the subscription equilibrium $\subscription_1^{*}(\Mechanism,\pricing)$ and $\subscription_2^{*}(\Mechanism,\pricing)$ from Theorem \ref{Theorem: Market Partition} into (\ref{Equ: Profit}) and derive the two MNOs' profits as follows:
\begin{equation}\label{Equ: Profit-1 III}
\profit_1(\Mechanism,\Threshold) 
= \QoS_1\usage_{\mechanism_1} \left[ \valthr_1 - \textstyle\frac{c_1}{\QoS_1} \right]  \Big[ 1-H\big(\valeq(\valthr_1,\valthr_2)  \big) \Big] , 
\end{equation}
\begin{equation}\label{Equ: Profit-2 III}
\profit_2(\Mechanism,\Threshold) 
= \QoS_2\usage_{\mechanism_2} \left[ \valthr_2 - \textstyle\frac{c_2}{\QoS_2} \right] \Big[ H\big( \valeq(\valthr_1,\valthr_2) \big) - H( \valthr_2) \Big],
\end{equation}
where $\valthr_1$, $\valthr_2$, and $\valeq$ depend on the pricing strategies $\pricing$ and data mechanisms $\Mechanism$.

According to (\ref{Equ: Profit-1 III}) and (\ref{Equ: Profit-2 III}), we note that the MNOs' profits at the equilibrium of Stage III are uniquely determined by the data mechanism $\Mechanism=\{\mechanism_1,\mechanism_2\}$ and the threshold user types $\Threshold=\{\valthr_1,\valthr_2\}$.
Moreover, (\ref{Equ: val_n}) shows that MNO-$n$ is able to achieve an arbitrary threshold user type $\valthr_n(\mechanism_n,s_n)$ by adjusting the pricing strategies $s_n=\{\pcap_n,\adfee_n\}$.
Hence the MNOs' price competition in Stage II is equivalent to the following threshold competition game:
\begin{game}[Threshold Competition in Stage II]\label{Game: threshold}
	Given the data mechanism $\Mechanism=\{\mechanism_1,\mechanism_2\}$, the two MNOs' threshold competition in Stage II can be modeled as the following game:
	\begin{itemize}
		\item Players: MNO-$n$ for both $n=1,2$.
		\item Strategies: Each MNO-$n$ determines its threshold user type $\valthr_n\in[c_n/\QoS_n,\valmax]$.
		\item Preferences: Each MNO-$n$  obtains a  profit $\profit_n(\Mechanism,\Threshold)$.
	\end{itemize}
\end{game}

Next we will study the MNOs' best responses of \textit{Game \ref{Game: threshold}} in Section \ref{Subsection: Best Response Analysis}, then find the the equilibrium which corresponds to the fixed point of the best responses in Section \ref{Subsection: Equilibrium Analysis}.
Before that, we introduce two notations as follows.
\begin{itemize}
	\item $\valthr_n^\text{MP}(\mechanism_n)$: Part of the best response analysis is related to the MNO-$n$'s optimal threshold user type $\valthr_n^\text{MP}(\mechanism_n)$ given the data mechanism $\mechanism_n$ in the \textit{monopoly} market, which is studied in our previous work \cite{Zhiyuan2017pricing,Zhiyuan2018TMC}.
	In the following, we will directly use $\valthr_n^\text{MP}(\mechanism_n)$ and provide more details in Appendix \ref{Appendix: Monopoly Market as Benchmark}.	
	\item $\costqos_n$: As we will see later, the MNO's cost-QoS ratio $c_n/\QoS_n$ plays a significant role in the best response analysis.
	For notation simplicity, we define $\costqos_n$ as follows:
	\begin{equation}
	\costqos_n \eq \frac{c_n}{\QoS_n},\ \forall\ n\in\{1,2\}.
	\end{equation}
\end{itemize}

\begin{figure*}
	\centering
	\begin{minipage}{0.25\textwidth}
		\setlength{\abovecaptionskip}{13pt}
		\includegraphics[height=0.82\linewidth]{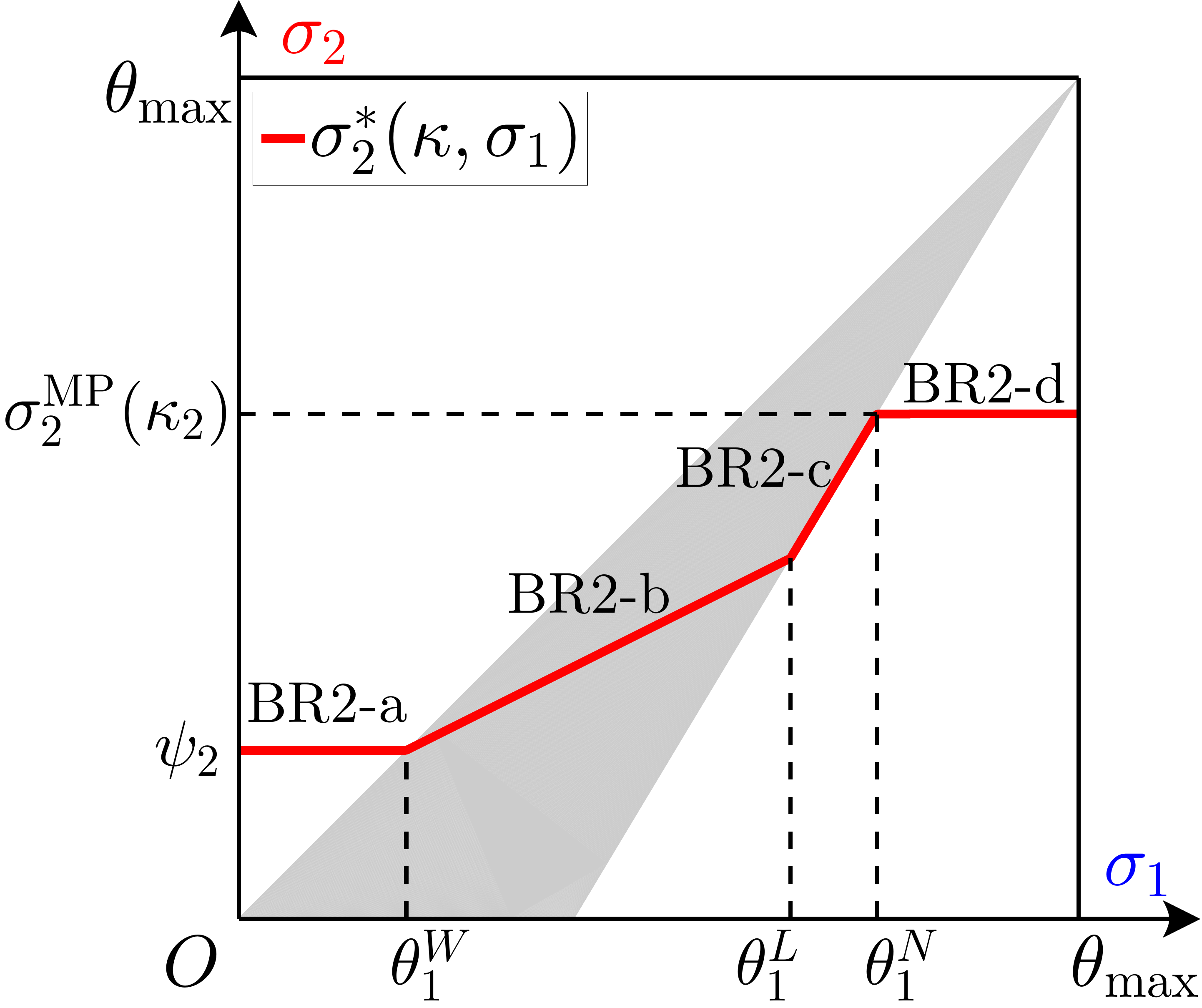}
		\caption{Illustration of $\valthr_2^{*}(\Mechanism,\valthr_1)$.}
		\label{fig: Best response of MNO-2}
	\end{minipage} 
	\begin{minipage}{0.48\textwidth}
		\setlength{\abovecaptionskip}{-1pt}
		\setlength{\belowcaptionskip}{0pt}
		\subfigure[$\costqos_1 > (1-\xi(\Mechanism))\valmax$]{\label{fig: Best response of MNO-1 2}{\includegraphics[height=0.43\linewidth]{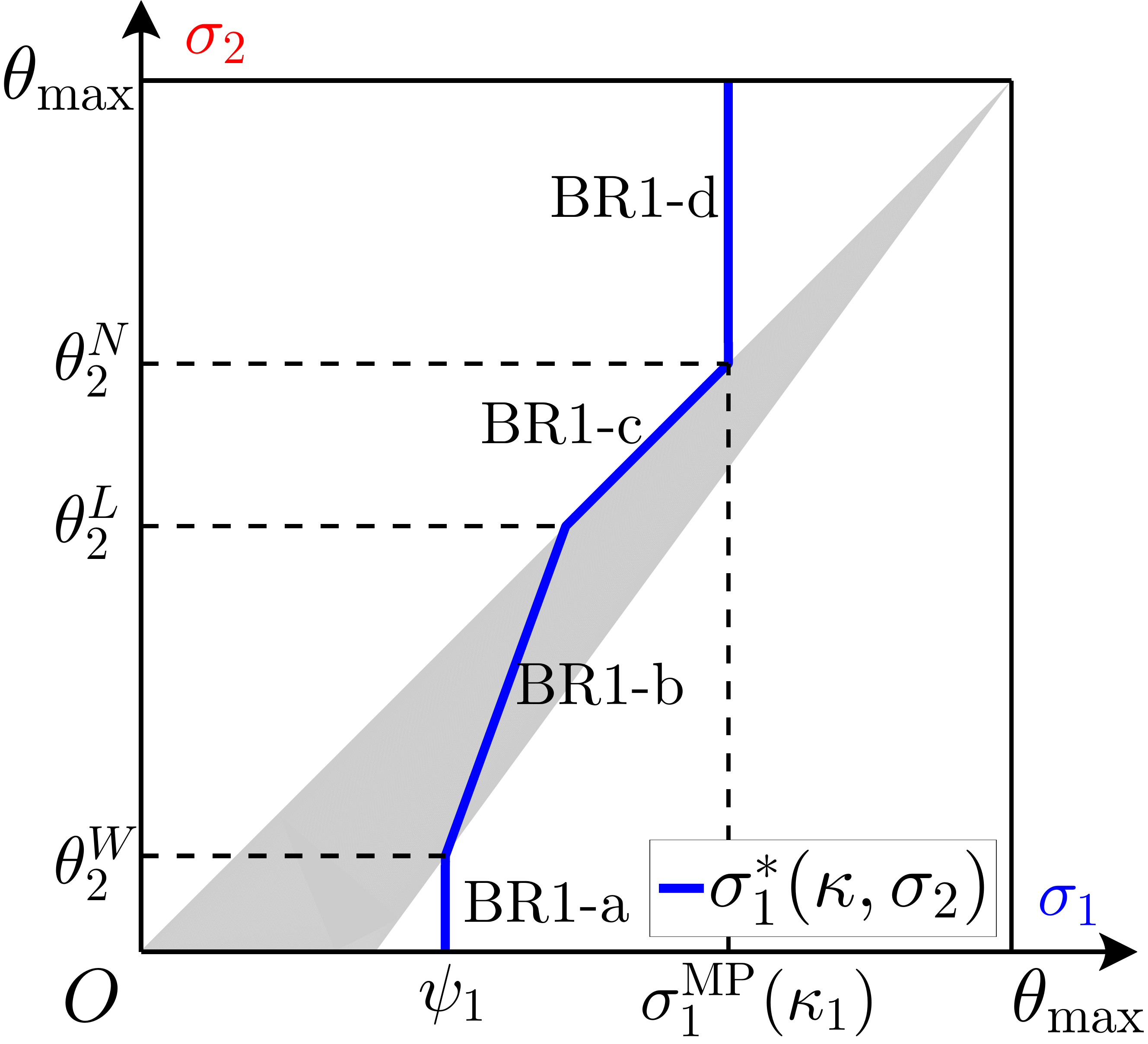}}}\ 
		\subfigure[$\costqos_1 < (1-\xi(\Mechanism))\valmax$]{\label{fig: Best response of MNO-1 1}{\includegraphics[height=0.43\linewidth]{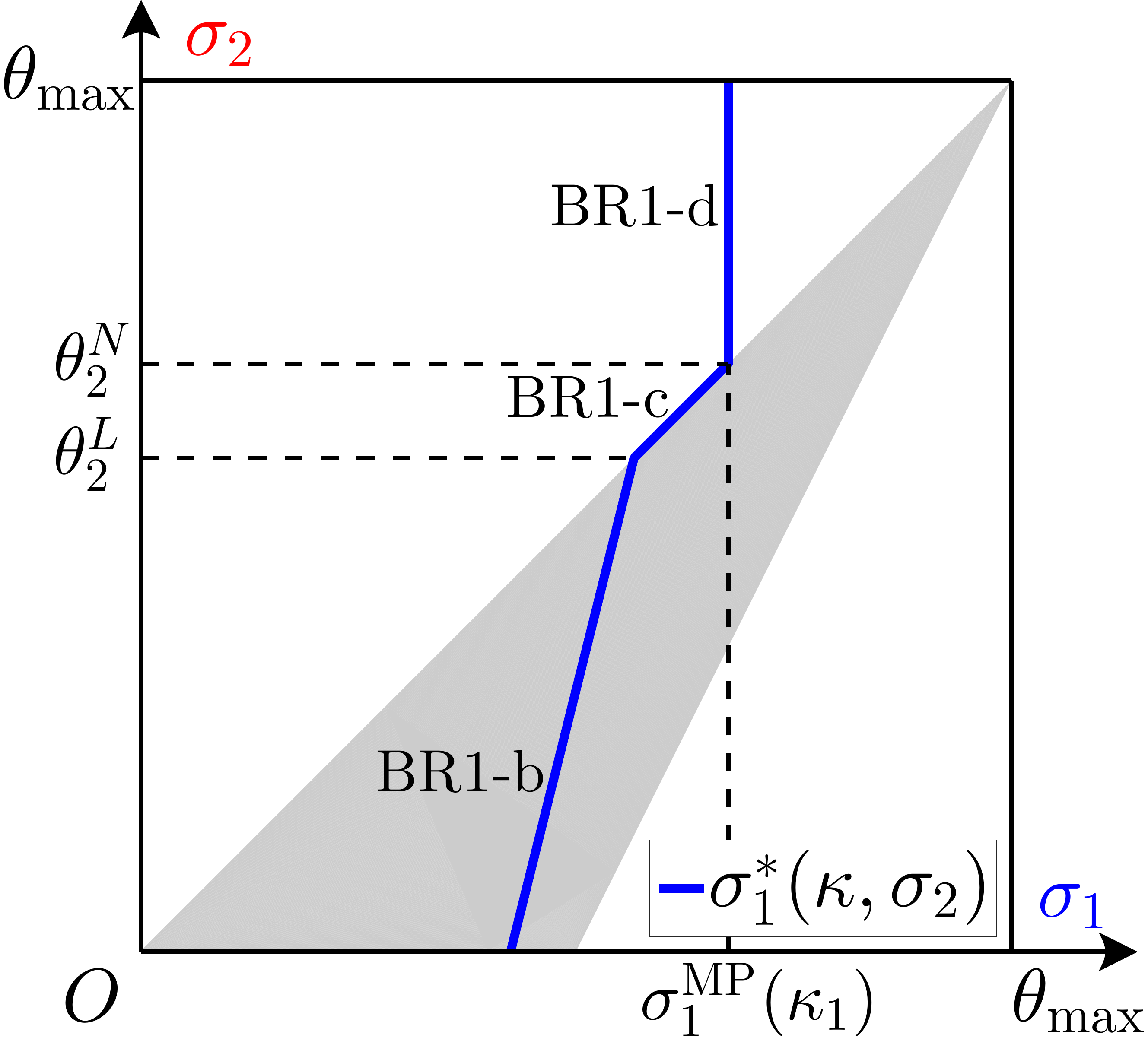}}} 
		\caption{Illustration of $\valthr_1^{*}(\Mechanism,\valthr_2)$.}
		\label{fig: Best response of MNO-1}
	\end{minipage} 
	\begin{minipage}{0.26\textwidth}
		\setlength{\abovecaptionskip}{10pt}
		\includegraphics[width=0.98\linewidth]{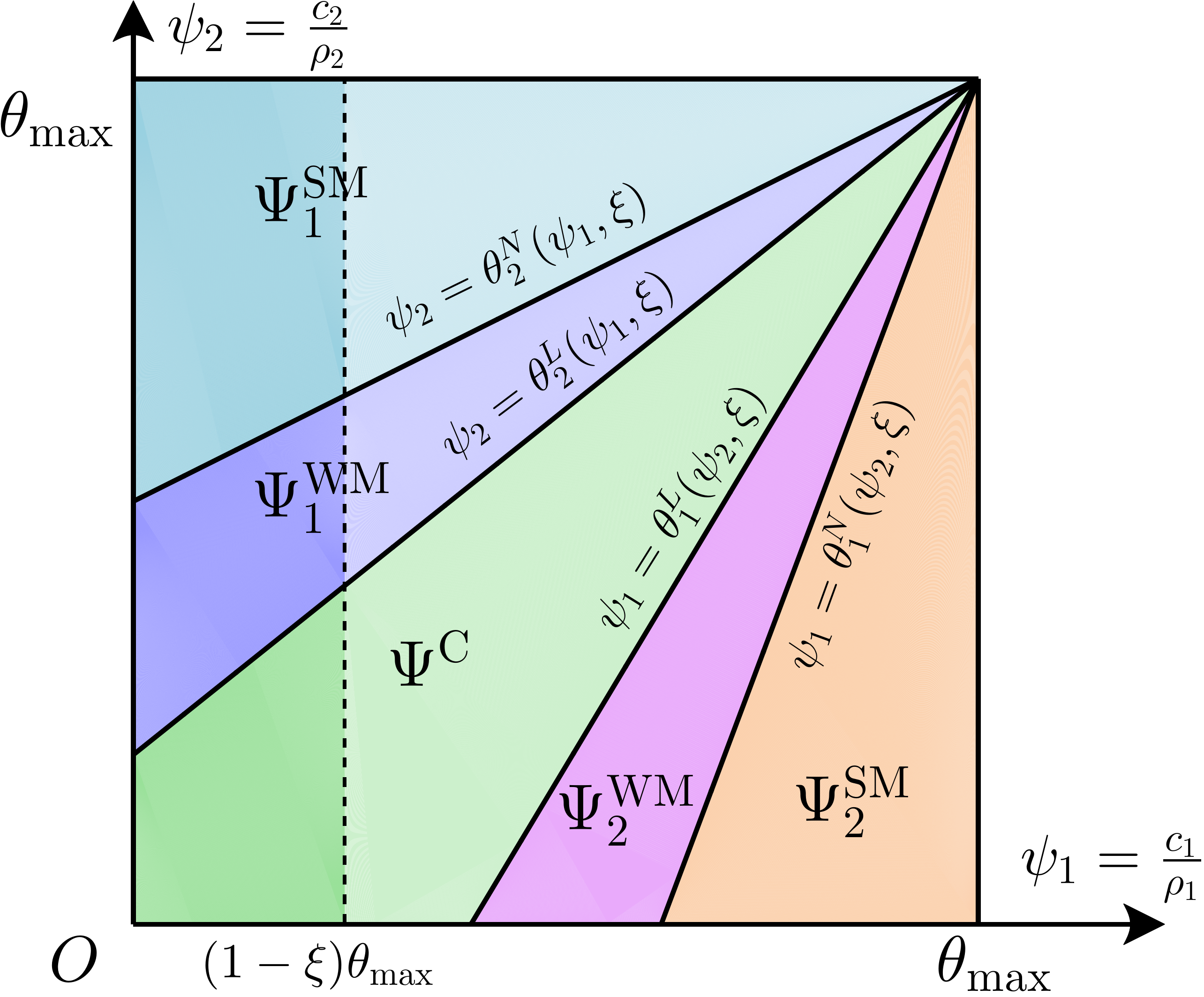}
		\caption{The outcomes in Game \ref{Game: threshold}.}
		\label{fig: Structure}
	\end{minipage}  \vspace{-10pt}
\end{figure*}

\begin{figure*}  
	\centering
	\setlength{\abovecaptionskip}{0pt}
	\subfigure[$(\costqos_1,\costqos_2)\in\Psi_1^\text{SM}$]{\label{fig: R4}\includegraphics[height=0.182\linewidth]{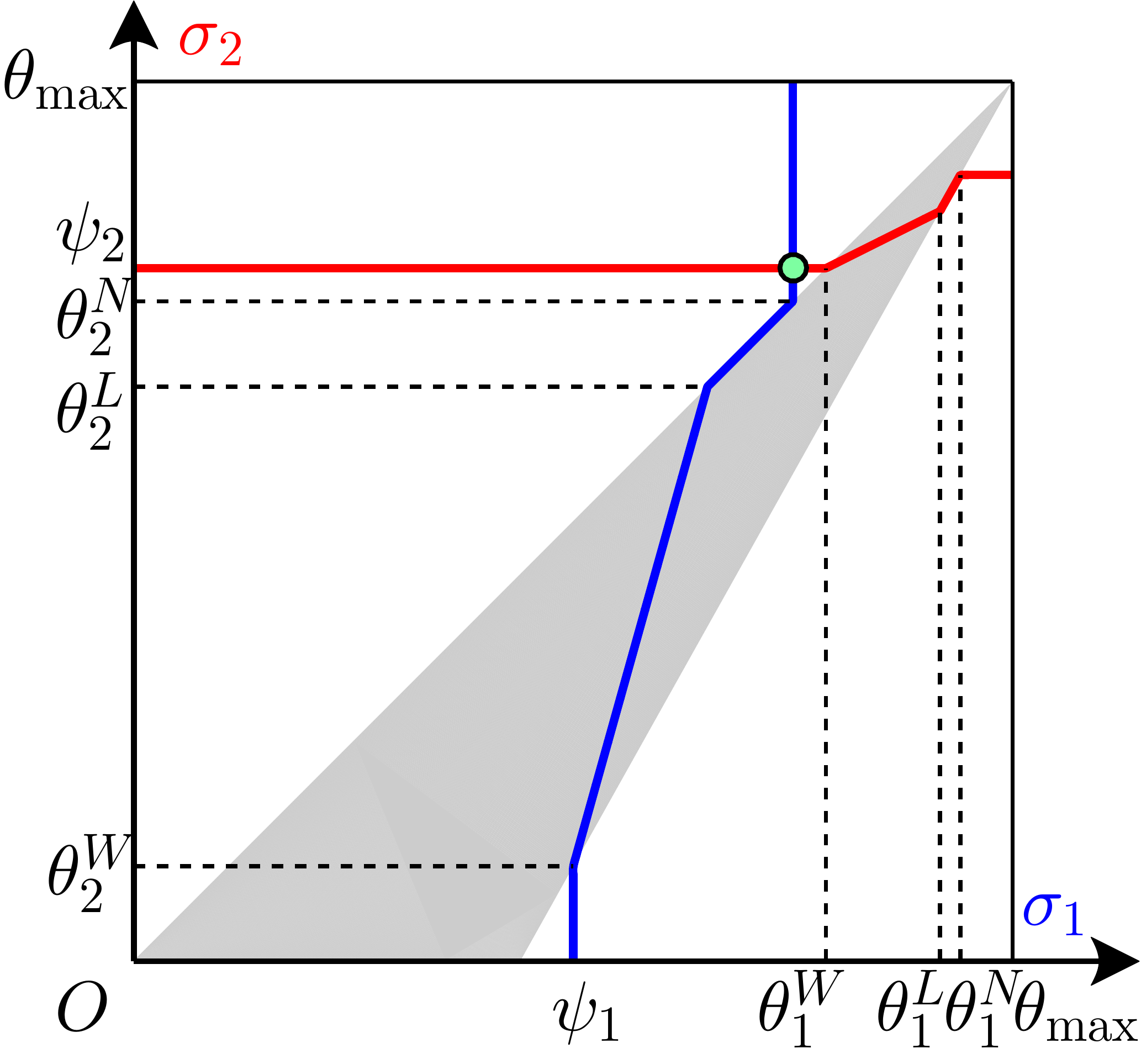}}
	\subfigure[$(\costqos_1,\costqos_2)\in\Psi_1^\text{WM}$]{\label{fig: R5}\includegraphics[height=0.182\linewidth]{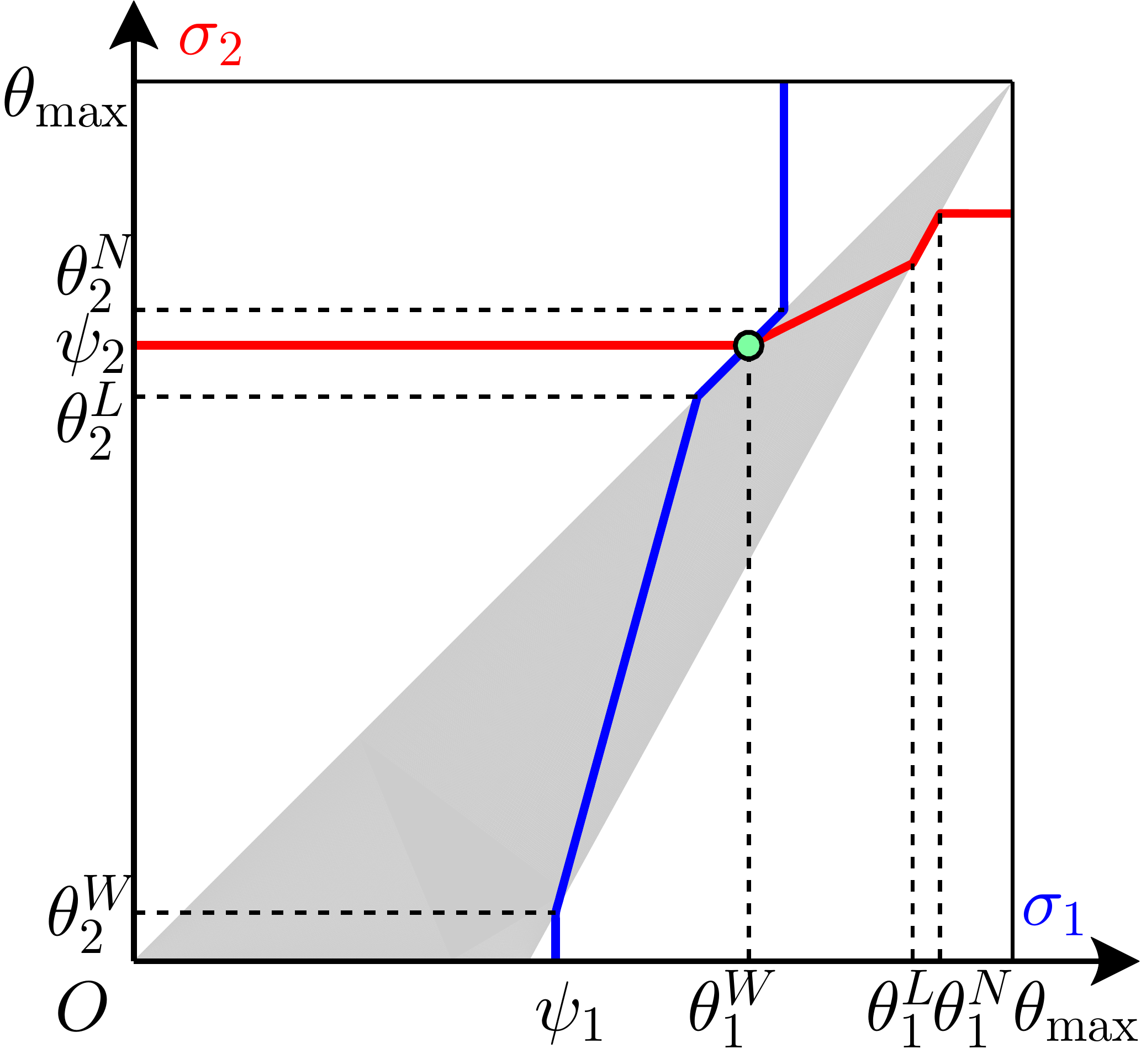}}
	\subfigure[$(\costqos_1,\costqos_2)\in\Psi^\text{C}$]{\label{fig: R6}\includegraphics[height=0.182\linewidth]{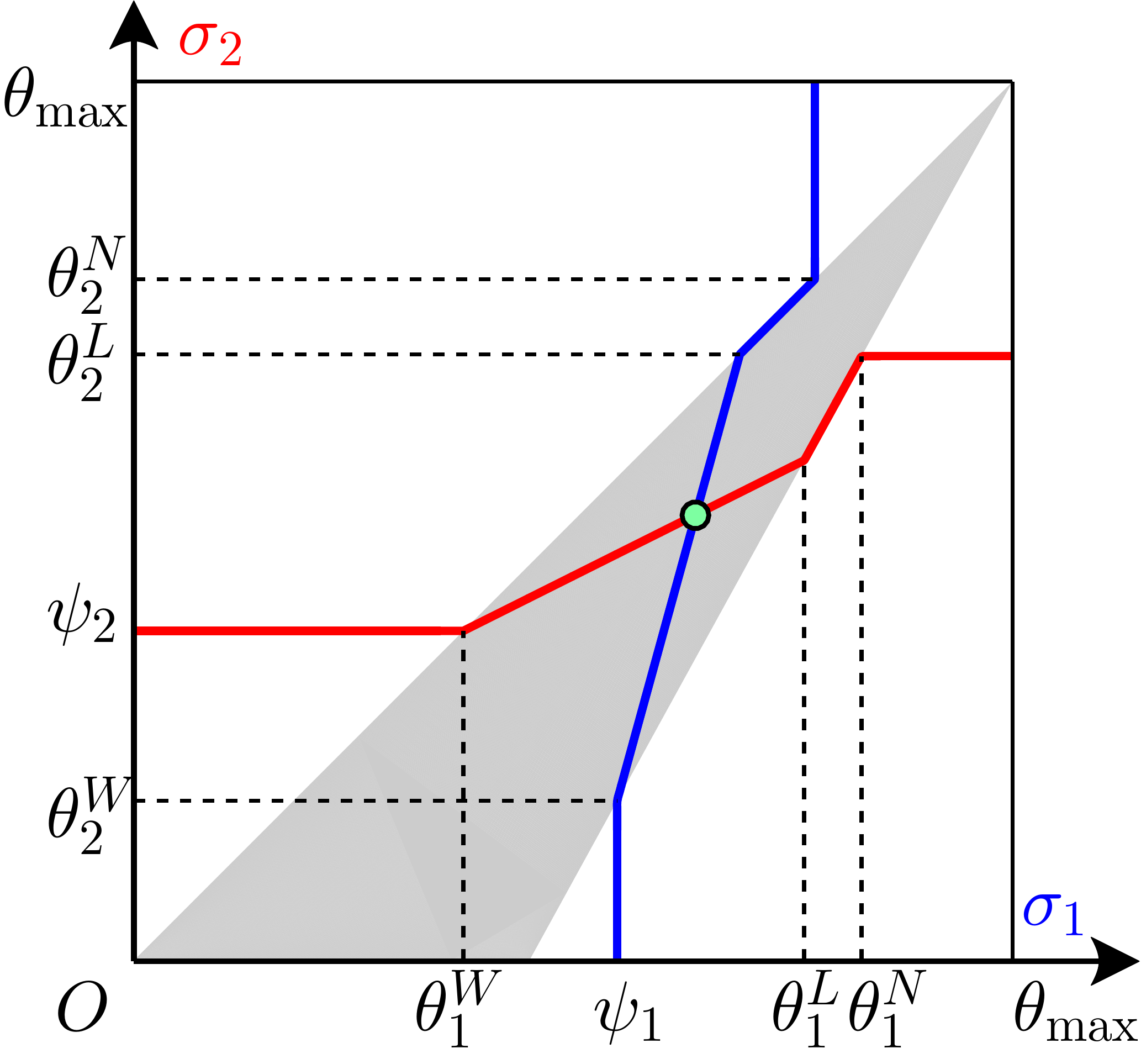}}
	\subfigure[$(\costqos_1,\costqos_2)\in\Psi_2^\text{WM}$]{\label{fig: R7}\includegraphics[height=0.182\linewidth]{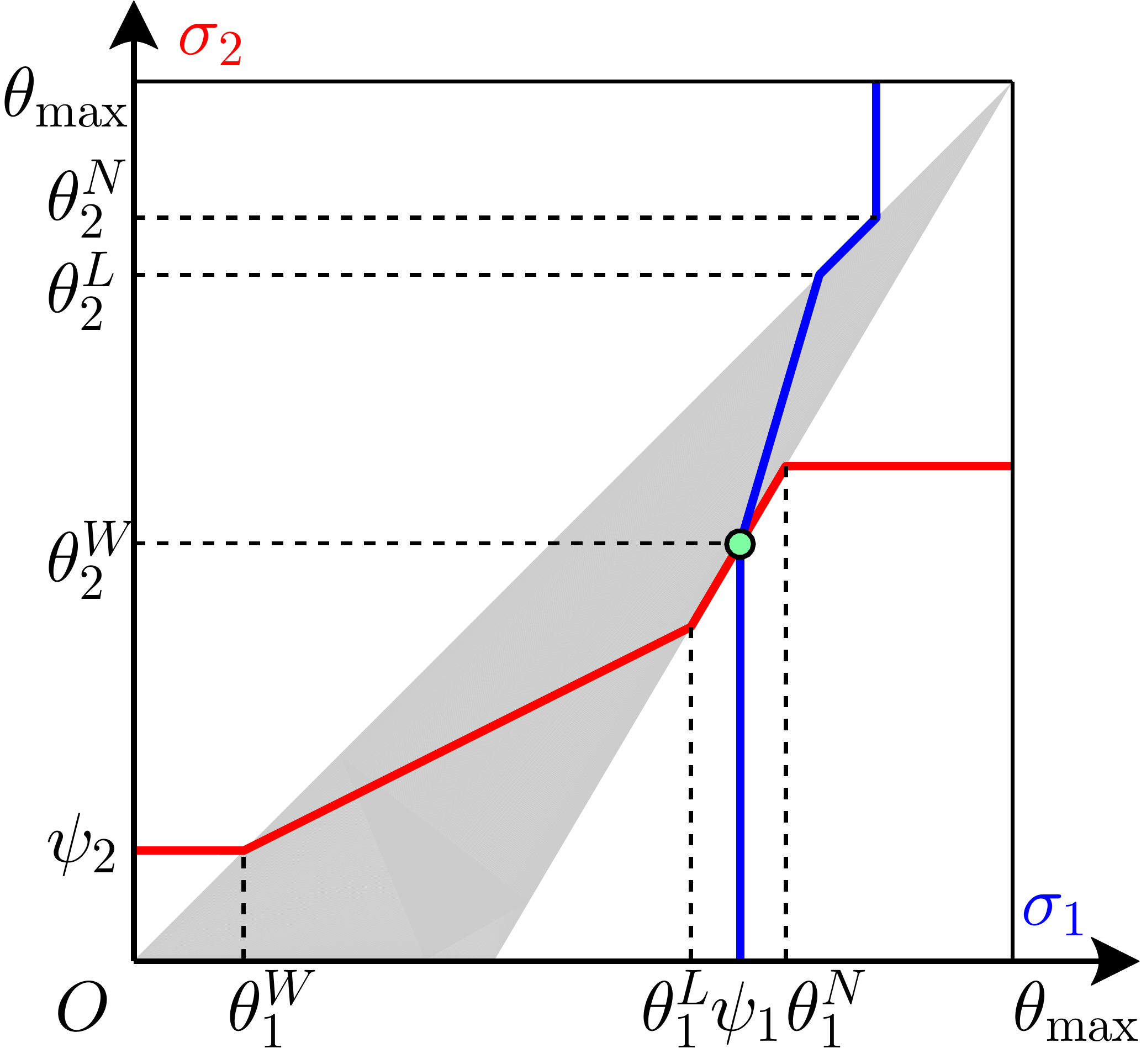}}
	\subfigure[$(\costqos_1,\costqos_2)\in\Psi_2^\text{SM}$]{\label{fig: R8}\includegraphics[height=0.182\linewidth]{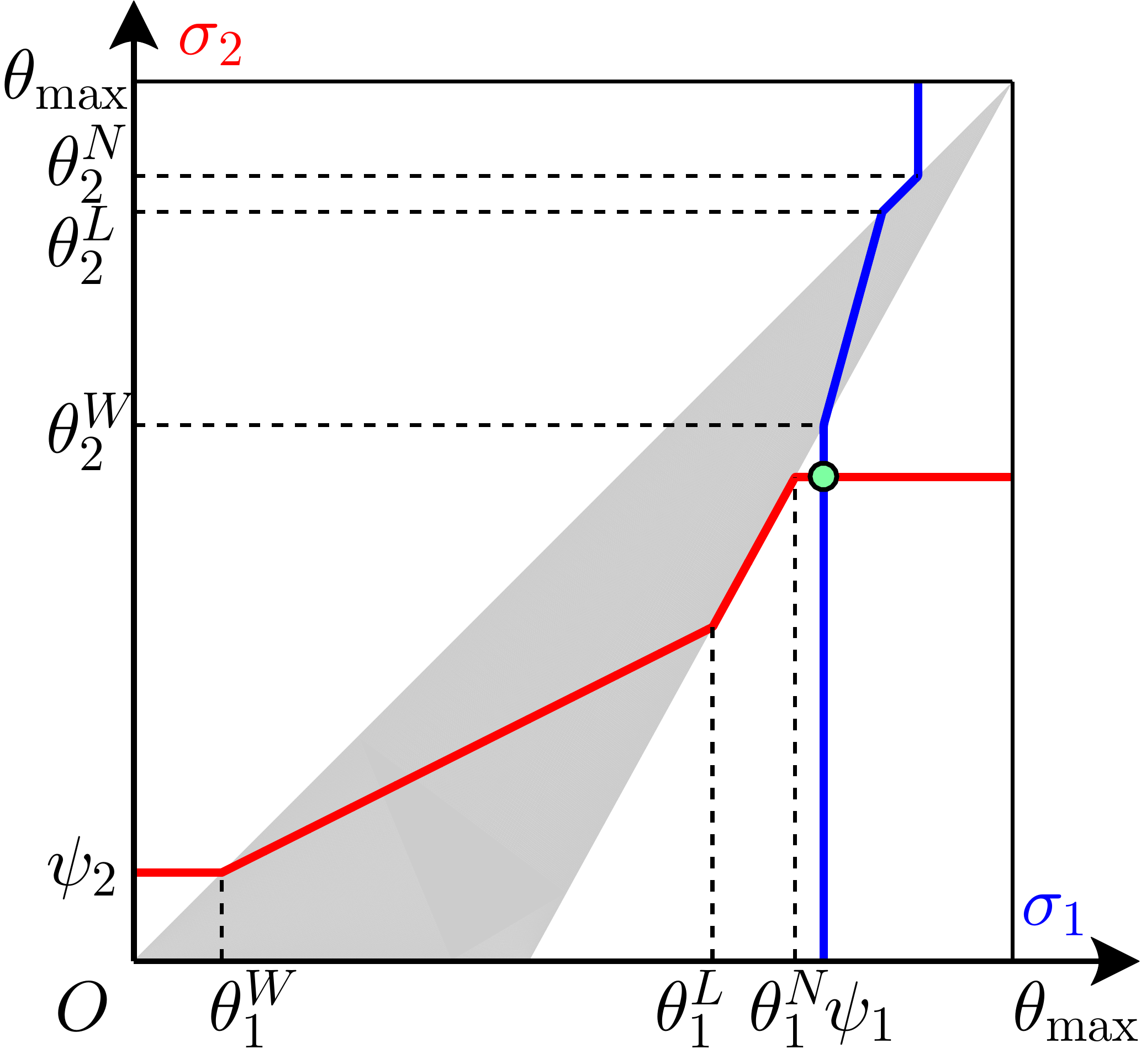}}
	\caption{The illustration of Game \ref{Game: threshold} equilibriums for the five $(\costqos_1,\costqos_2)$ regimes shown in Fig. \ref{fig: Structure}. }
	\label{fig: structure}
\end{figure*}

\subsection{Best Response Analysis}\label{Subsection: Best Response Analysis}
Since we have assumed that  MNO-1 is stronger and MNO-2 is weaker, the two MNOs'  best responses will be different. 
We will investigate the best response of MNO-2 and MNO-1 in Section \ref{Subsubsection: Best Response of Weaker MNO-2} and Section \ref{Subsubsection: Best Response of Stronger MNO-1}, respectively.

\subsubsection{Best Response of the Weaker MNO-2}\label{Subsubsection: Best Response of Weaker MNO-2}
To facilitate the analysis of the weaker MNO-2's best response to the stronger MNO-1's threshold user choice, we first introduce  the MNO-1's \textit{winning} threshold $\val_1^W\eq \costqos_2$, \textit{no-influence} threshold $\val_1^N\eq\xi(\Mechanism)\valthr_2^\text{MP}(\mechanism_2) + (1-\xi(\Mechanism))\valmax$, and \textit{losing} threshold $\val_1^L$ that satisfies
\begin{equation}\label{Equ: three points of MNO-1}
\textstyle 
\frac{\val_1^L-(1-\xi(\Mechanism))\valmax}{\xi(\Mechanism)} 
-\frac{ 1 - H\left( \frac{\val_1^L-(1-\xi(\Mechanism))\valmax}{\xi(\Mechanism)} \right)}{ h\left( \frac{\val_1^L-(1-\xi(\Mechanism))\valmax}{\xi(\Mechanism)}\right) } = \costqos_2,
\end{equation}
where $\val_1^L$ is unique for an arbitrary $\val$ distribution with the increasing failure rate (IFR).\footnote{The IFR condition means that $h(\val)/[1 - H(\val)]$ increases in $\val$. Many commonly used distributions (e.g., uniform distribution, normal distribution, and gamma distribution) satisfy the IFR condition.}

Lemma \ref{Lemma: theta_2^*} presents the the weaker MNO-2's best response.
The proof is given in Appendix \ref{Appendix: Partition BRs}.
\begin{lemma}[Best Response of MNO-2] \label{Lemma: theta_2^*}
	Given MNO-1's threshold user type $\valthr_1$,  MNO-2 maximizes its profit $\profit_2(\Mechanism,\Threshold)$ by choosing a  threshold user type $\valthr_2^{*}(\Mechanism,\valthr_1) $ as follows: 
	\begin{equation}
		\valthr_2^{*}(\Mechanism,\valthr_1)=
		\begin{cases} 
		\costqos_2, & \text{if } \valthr_1\in [0,\val_1^W) , \\
		\hat{\valthr}_2			     ,  & \text{if } \valthr_1\in [\val_1^W,\val_1^L),\\
		\frac{\valthr_1 + (\xi(\Mechanism)-1)\valmax}{\xi(\Mechanism)},	& \text{if } \valthr_1 \in [\val_1^L,\val_1^N),\\
		\valthr_2^\text{MP}(\mechanism_2), 		& \text{if }\valthr_1 \in [\val_1^N,\valmax].
		\end{cases}
	\end{equation} 
	Here $\hat{\valthr}_2$ solves $\hat{\valthr}_2-\frac{ H( \valeq(\valthr_1,\hat{\valthr}_2) ) - H(\hat{\valthr}_2)}{\frac{\xi(\Mechanism)}{1-\xi(\Mechanism)}h( \valeq(\valthr_1,\hat{\valthr}_2) ) +h(\hat{\valthr}_2)} =\costqos_2$, where $h(\cdot)$ and $H(\cdot)$ represent the PDF and CDF of users' data valuation $\val$, respectively. 
\end{lemma}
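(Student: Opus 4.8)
The plan is to fix the stronger MNO-1's threshold $\valthr_1$ and treat $\profit_2(\Mechanism,\Threshold)$ in (\ref{Equ: Profit-2 III}) as a function of the single variable $\valthr_2\in[\costqos_2,\valmax]$, then locate its global maximizer. Since $\QoS_2\usage_{\mechanism_2}>0$ is a positive constant, I would factor it out and study $g(\valthr_2)\eq(\valthr_2-\costqos_2)\big[H(\valeq(\valthr_1,\valthr_2))-H(\valthr_2)\big]$. Invoking Theorem \ref{Theorem: Market Partition}, the market share of MNO-2, and hence $g$, is piecewise according to the region in which $(\valthr_1,\valthr_2)$ lies: for $\valthr_2\le\frac{\valthr_1-(1-\xi(\Mechanism))\valmax}{\xi(\Mechanism)}$ we are in $\Sigma_1$ and MNO-2 is an effective monopolist over $[\valthr_2,\valmax]$, so $g$ reduces to $(\valthr_2-\costqos_2)[1-H(\valthr_2)]$; for $\frac{\valthr_1-(1-\xi(\Mechanism))\valmax}{\xi(\Mechanism)}<\valthr_2<\valthr_1$ we are in $\Sigma_3$ and $g$ is the sharing expression above with $\valeq<\valmax$; and for $\valthr_2\ge\valthr_1$ we are in $\Sigma_2$ and $g\equiv0$. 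The first step is to check that these pieces agree at the two seams $\valthr_2=\frac{\valthr_1-(1-\xi(\Mechanism))\valmax}{\xi(\Mechanism)}$ (where $\valeq=\valmax$) and $\valthr_2=\valthr_1$ (where $\valeq=\valthr_2$), so that $g$ is continuous on $[\costqos_2,\valmax]$.

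Next I would analyze each smooth piece. For the monopoly piece, differentiating gives the first-order condition $\valthr_2-\frac{1-H(\valthr_2)}{h(\valthr_2)}=\costqos_2$, whose root is exactly the monopoly optimum $\valthr_2^\text{MP}(\mechanism_2)$; the increasing-failure-rate (IFR) property forces $\frac{1-H}{h}$ to decrease, hence $\frac{1-H}{h}-(\valthr_2-\costqos_2)$ to be strictly decreasing, which makes this piece single-peaked with peak at $\valthr_2^\text{MP}(\mechanism_2)$. For the sharing piece, differentiating and using $\frac{\partial\valeq}{\partial\valthr_2}=-\frac{\xi(\Mechanism)}{1-\xi(\Mechanism)}$ yields precisely the stationarity equation that defines $\hat{\valthr}_2$ in the statement; I would again use IFR to show this piece is single-peaked, so that $\hat{\valthr}_2$ is its unique interior maximizer whenever it lies in $\big(\frac{\valthr_1-(1-\xi(\Mechanism))\valmax}{\xi(\Mechanism)},\valthr_1\big)$.

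The crux is to patch the two pieces at the seam and determine which candidate wins as $\valthr_1$ varies. Comparing the one-sided derivatives there shows the sharing (right) derivative never exceeds the monopoly (left) derivative, so $g$ has a concave kink and its global maximizer is pinned down by the signs of these derivatives at the seam. This produces the three transition values of $\valthr_1$. When $\valthr_1<\val_1^W=\costqos_2$, every feasible $\valthr_2\ge\costqos_2$ forces $\valthr_2>\valthr_1$, i.e. the $\Sigma_2$ region, so $g\equiv0$ and MNO-2 is indifferent, whence the convention $\valthr_2^{*}=\costqos_2$. Raising $\valthr_1$ past $\val_1^N$ makes the seam reach $\valthr_2^\text{MP}(\mechanism_2)$, after which the unconstrained monopoly peak becomes feasible and dominates, giving $\valthr_2^{*}=\valthr_2^\text{MP}(\mechanism_2)$. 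The intermediate switch, which I identify with $\val_1^L$ in (\ref{Equ: three points of MNO-1}) and characterize by the change of sign of the sharing derivative at the seam, separates the interior sharing optimum $\hat{\valthr}_2$ (for $\valthr_1\in[\val_1^W,\val_1^L)$) from the boundary optimum $\frac{\valthr_1+(\xi(\Mechanism)-1)\valmax}{\xi(\Mechanism)}$ (for $\valthr_1\in[\val_1^L,\val_1^N)$), at which MNO-2 sits exactly on the $\Sigma_1/\Sigma_3$ frontier.

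I expect the main obstacle to be two-fold. First, establishing single-peakedness (quasi-concavity) of the sharing-region profit is not immediate, because its derivative mixes $h(\valeq)$ and $h(\valthr_2)$ evaluated at two arguments that move in opposite directions as $\valthr_2$ increases; the IFR assumption must be leveraged carefully, together with the uniqueness of $\val_1^L$ asserted in the footnote, to rule out spurious stationary points. Second, the global comparison across the kink must be done cleanly: I must verify that whenever the interior sharing optimum is feasible it indeed beats the best monopoly value attainable on $\big[\costqos_2,\frac{\valthr_1-(1-\xi(\Mechanism))\valmax}{\xi(\Mechanism)}\big]$, and conversely, so that the four-way case split is exhaustive and the transitions occur exactly at $\val_1^W$, $\val_1^L$, and $\val_1^N$.
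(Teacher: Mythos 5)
Your proposal is correct and is, in essence, the proof the paper itself intends: the paper only writes out the mirror statement (Lemma \ref{Lemma: theta_1^*}, MNO-1's best response) in Appendix \ref{Appendix: Partition BRs} and dismisses Lemma \ref{Lemma: theta_2^*} with ``the proof of the two MNOs' best responses is similar.'' Your route --- split $\profit_2$ by the regions of Theorem \ref{Theorem: Market Partition} into a monopoly piece on $\Sigma_1$, a sharing piece on $\Sigma_3$, and zero on $\Sigma_2$; verify continuity at the seams; derive the piecewise first-order conditions (which reproduce the $\hat{\valthr}_2$ equation and the defining equation of $\valthr_2^\text{MP}(\mechanism_2)$); and pin down the global maximizer via the one-sided derivatives at the $\Sigma_1/\Sigma_3$ seam --- is exactly the paper's template, executed more carefully: the paper's proof of Lemma \ref{Lemma: theta_1^*} signs the derivative only inside the coexistence region and never performs the across-the-kink comparison explicitly. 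You are also right about where the real work lies: for MNO-1 the derivative's sign factor depends on the valuation distribution only through $\valeq$, so IFR gives single crossing immediately, whereas for MNO-2 it mixes $h(\valeq)$ and $h(\valthr_2)$ moving in opposite directions; quasi-concavity of the sharing piece is the one step that does \emph{not} transfer verbatim, and it is left open by you and by the paper alike.

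One concrete discrepancy you should be aware of: your characterization of the switch between the second and third branches (the sharing-side derivative vanishing at the seam, i.e.\ the $\hat{\valthr}_2$ equation evaluated at $\hat{\valthr}_2=s$, $\valeq=\valmax$) reads
\begin{equation*}
s-\frac{1-H(s)}{h(s)+\frac{\xi(\Mechanism)}{1-\xi(\Mechanism)}\,h(\valmax)}=\costqos_2,
\qquad s=\frac{\val_1^L-(1-\xi(\Mechanism))\valmax}{\xi(\Mechanism)},
\end{equation*}
whereas (\ref{Equ: three points of MNO-1}) as printed omits the $\frac{\xi(\Mechanism)}{1-\xi(\Mechanism)}h(\valmax)$ term. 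Taken literally, (\ref{Equ: three points of MNO-1}) is the monopoly first-order condition at the point $s$, so under IFR it forces $s=\valthr_2^\text{MP}(\mechanism_2)$, i.e.\ $\val_1^L=\val_1^N$, which would collapse the lemma's third branch entirely; this is already visible for the uniform distribution of Fig.~\ref{fig: Best response of MNO-2}, where $h(\valmax)=1/\valmax\neq 0$. So your ``identification with (\ref{Equ: three points of MNO-1})'' does not literally hold (the two definitions coincide only when $h(\valmax)=0$); rather, your derivation yields the internally consistent definition of $\val_1^L$ and exposes what appears to be a typo in the paper's equation, not a flaw in your argument.
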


Lemma \ref{Lemma: theta_2^*} is applicable to an arbitrary $\val$ distribution satisfying the IFR.
For the illustration purpose, Fig. \ref{fig: Best response of MNO-2} plots the MNO-2's best response under a uniform distribution of $\val$, i.e., $h(\val)=1/\valmax$. 
Specifically, the red line segments (i.e., BR2-a, BR2-b, BR2-c, and BR2-d) denote $\valthr_2^{*}(\Mechanism,\valthr_1)$. 
Next we discuss the physical meanings of the four different parts of best response in Lemma \ref{Lemma: theta_2^*}.

\begin{itemize}
	\item \textbf{BR2-a}: MNO-2 gives up the competition  and obtains a zero market share, i.e., $(\valthr_1,\valthr_2^{*}(\Mechanism,\valthr_1))\in\Sigma_2$, if MNO-1's  threshold user type is smaller than its \textit{winning} threshold, i.e., $\valthr_1< \val_1^W$.
	\item \textbf{BR2-b}: MNO-2 shares the market with MNO-1, i.e., $(\valthr_1,\valthr_2^{*}(\Mechanism,\valthr_1))\in\Sigma_3$, if MNO-1's threshold user type is between its \textit{winning} and \textit{losing} thresholds, i.e., $\val_1^W\le\valthr_1<\val_1^L$.
	\item \textbf{BR2-c}: MNO-2 leaves MNO-1 a zero market share, i.e., $(\valthr_1,\valthr_2^{*}(\Mechanism,\valthr_1))\in\Sigma_1$, if MNO-1's threshold user type is between its \textit{losing} and \textit{no-influence} thresholds, i.e., $\val_1^L\le\valthr_1<\val_1^N$.
	\item \textbf{BR2-d}: MNO-2 becomes a monopoly in the market (deciding  its threshold user type without considering the existence of MNO-1), i.e., $\valthr_2^{*}(\Mechanism,\valthr_1)=\valthr_2^\text{MP}(\mechanism_2)$, if MNO-1's threshold user type is no smaller than its \textit{no-influence} threshold, i.e., $\val_1^N\le\valthr_1\le\valmax$.
\end{itemize}

\subsubsection{Best Response of the Stronger MNO-1}\label{Subsubsection: Best Response of Stronger MNO-1}
Now we consider the stronger MNO-1's best response to the weaker MNO-2.
Similarly, we first define MNO-2's \textit{winning} threshold $\val_2^W\eq  \frac{\costqos_1 + (\xi(\Mechanism)-1)\valmax}{\xi(\Mechanism)} $, \textit{no-influence} threshold $\val_2^N \eq \valthr_1^\text{MP}(\mechanism_1)$, and \textit{losing} threshold $\val_2^L$ that satisfies 
\begin{equation}
\textstyle \val_2^L-\frac{ 1-H\left( \val_2^L \right) }{\frac{1}{1-\xi(\Mechanism)}h\left( \val_2^L \right)} = \costqos_1 ,
\end{equation}
where $\val_2^L$ is unique for an arbitrary $\val$ distribution with the IFR.

\begin{lemma}[Best Response of MNO-1] \label{Lemma: theta_1^*}
	Given MNO-2's threshold user type $\valthr_2$,  MNO-1 maximizes its profit $\profit_1(\Mechanism,\Threshold)$ by the threshold user type $\valthr_1^{*}(\Mechanism,\valthr_2)$, which satisfies
	\begin{equation}
	\valthr_1^{*}(\Mechanism,\valthr_2) = 
	\begin{cases}
	\costqos_1,		&\text{if }\valthr_2\in[0,\val_2^W),\\
	\hat{\valthr}_1,		&\text{if }\valthr_2 \in[\val_2^W,\val_2^L),\\
	\valthr_2,				&\text{if }\valthr_2 \in[\val_2^L,\val_2^N),\\
	\valthr_1^\text{MP}(\mechanism_1),	&\text{if }\valthr_2 \in[\val_2^N,\valmax],
	\end{cases}
	\end{equation}  
	where $\hat{\valthr}_1$ solves $\hat{\valthr}_1-\frac{ 1-H( \valeq(\hat{\valthr}_1,\valthr_2) ) }{\frac{1}{1-\xi(\Mechanism)}h( \valeq(\hat{\valthr}_1,\valthr_2) )} = \costqos_1$.
\end{lemma}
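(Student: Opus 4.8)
The plan is to fix MNO-2's threshold $\valthr_2$ and treat $\profit_1(\Mechanism,\Threshold)$ as a one-dimensional function of MNO-1's own choice $\valthr_1\in[\costqos_1,\valmax]$, exploiting the three-region partition of Theorem \ref{Theorem: Market Partition}. First I would substitute the partition into the profit. Using the threshold definition (\ref{Equ: val_n}), the per-subscriber profit of MNO-1 collapses to the $\val$-independent quantity $\QoS_1\usage_{\mechanism_1}(\valthr_1-\costqos_1)$, so the total profit equals this quantity times the mass of subscribers. In $\Sigma_2$ (where $\valthr_1\le\valthr_2$) MNO-1 is the sole survivor with share $[\valthr_1,\valmax]$, hence its profit equals the monopoly form $m(\valthr_1)\eq\QoS_1\usage_{\mechanism_1}(\valthr_1-\costqos_1)\big(1-H(\valthr_1)\big)$; in $\Sigma_3$ its share is $[\valeq,\valmax]$, giving the shared form $g(\valthr_1)\eq\QoS_1\usage_{\mechanism_1}(\valthr_1-\costqos_1)\big(1-H(\valeq(\valthr_1,\valthr_2))\big)$; and in $\Sigma_1$ the share is empty, so the profit is zero. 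I would then verify that the objective is continuous: $g$ and $m$ coincide on the boundary $\valthr_1=\valthr_2$ (there $\valeq=\valthr_2$), and $g\to0$ on the $\Sigma_1$ boundary $\valthr_1=\xi(\Mechanism)\valthr_2+(1-\xi(\Mechanism))\valmax$ (there $\valeq=\valmax$).

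Next I would optimize each piece. Differentiating $g$ with ${\rm d}\valeq/{\rm d}\valthr_1=1/(1-\xi(\Mechanism))$ yields a first-order condition that is exactly the equation defining $\hat{\valthr}_1$ in the statement. The crucial structural input is the IFR assumption: because $[1-H(\cdot)]/h(\cdot)$ is decreasing, the map $\valthr_1\mapsto\valthr_1-(1-\xi(\Mechanism))\frac{1-H(\valeq)}{h(\valeq)}$ is strictly increasing in $\valthr_1$, so it crosses $\costqos_1$ at most once and $g$ is single-peaked with the unique interior maximizer $\hat{\valthr}_1$. The monopoly objective $m$ is single-peaked by the same reasoning, with its peak at $\valthr_1^\text{MP}(\mechanism_1)$, whose defining equation is that of $\hat{\valthr}_1$ with $\xi(\Mechanism)$ set to $0$.

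The heart of the proof is gluing the two pieces and reading off the four cases from the position of $\valthr_2$ relative to $\val_2^W,\val_2^L,\val_2^N$. I would first identify these breakpoints analytically: setting $\valthr_1=\valthr_2$ (so $\valeq=\valthr_2$) in the $\hat{\valthr}_1$ equation recovers precisely the defining equation of $\val_2^L$, so $\val_2^L$ is the value of $\valthr_2$ at which $\hat{\valthr}_1$ meets the boundary $\valthr_1=\valthr_2$; and $\val_2^W$ is the value at which the cheapest choice $\valthr_1=\costqos_1$ sits on the $\Sigma_1$ boundary $\costqos_1=\xi(\Mechanism)\valthr_2+(1-\xi(\Mechanism))\valmax$. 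Comparing the increasing map above with its monopoly analogue gives the ordering $\val_2^W\le\val_2^L\le\val_2^N=\valthr_1^\text{MP}(\mechanism_1)$. Then: for $\valthr_2\ge\val_2^N$ the monopoly peak lies inside $\Sigma_2$ and is attainable, so $\valthr_1^{*}=\valthr_1^\text{MP}(\mechanism_1)$; for $\val_2^L\le\valthr_2<\val_2^N$ the peak of $m$ lies beyond $\valthr_2$ while on $\Sigma_3$ we now have $\hat{\valthr}_1\le\valthr_2$ so $g$ is decreasing there, making both pieces maximal at the shared boundary, $\valthr_1^{*}=\valthr_2$; for $\val_2^W\le\valthr_2<\val_2^L$ we have $\hat{\valthr}_1>\valthr_2$ strictly interior to $\Sigma_3$, and since $g$ rises from $g(\valthr_2)=m(\valthr_2)$ to its peak while $m$ cannot exceed $m(\valthr_2)$ on $\Sigma_2$, the interior point wins, $\valthr_1^{*}=\hat{\valthr}_1$; finally for $\valthr_2<\val_2^W$ every feasible $\valthr_1$ lies in $\Sigma_1$ with zero profit, so the convention $\valthr_1^{*}=\costqos_1$ applies.

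The main obstacle is this gluing step rather than either single-region optimization. What must be established carefully is the ordering $\val_2^W\le\val_2^L\le\val_2^N$ and the strict comparison $g(\hat{\valthr}_1)>m(\valthr_2)$ in the sharing regime; both follow from comparing the two inverse-hazard-rate expressions together with the single-peakedness guaranteed by IFR. The kink of the objective at $\valthr_1=\valthr_2$ — where the $\Sigma_3$ slope is steeper than the $\Sigma_2$ slope by the factor $1/(1-\xi(\Mechanism))$ — is exactly what makes $\valthr_1=\valthr_2$ a genuine maximizer over the entire middle interval of $\valthr_2$ rather than at a single point. Since this argument mirrors that of Lemma \ref{Lemma: theta_2^*}, I would present it by adapting that proof once these monotonicity facts are in place.
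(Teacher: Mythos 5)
Your proposal is correct and follows essentially the same route as the paper's proof: both reduce the problem to a one-dimensional piecewise optimization over $\valthr_1$ using the partition of Theorem \ref{Theorem: Market Partition}, derive the interior first-order condition defining $\hat{\valthr}_1$, invoke the IFR property for single-crossing/single-peakedness, and obtain the breakpoints $\val_2^W$, $\val_2^L$, $\val_2^N$ by evaluating that condition at the region boundaries (the paper does this via the boundary functions $J(\valthr_2)$ and $K(\valthr_2)$, which coincide with your geometric identification of the breakpoints). Your explicit gluing of the $\Sigma_2$ monopoly-form piece $m(\valthr_1)$ with the coexistence piece is slightly more detailed than the paper's treatment, which handles the $\Sigma_2$ comparison implicitly, but it is the same argument in substance.
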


Lemma \ref{Lemma: theta_1^*} applies to an arbitrary $\val$ distribution satisfying the IFR. Fig. \ref{fig: Best response of MNO-1} illustrates the results in Lemma \ref{Lemma: theta_1^*} under a uniform distribution. 
For an easy comparison with Fig. \ref{fig: Best response of MNO-2}, in Fig. \ref{fig: Best response of MNO-1} we plot the best response $\valthr_1^{*}(\Mechanism,\valthr_2)$ on the horizontal axis and the variable $\valthr_2$ on the vertical axis.
Moreover, Fig. \ref{fig: Best response of MNO-1} contains two cases (sub-figures): 
\begin{itemize}
	\item Fig. \ref{fig: Best response of MNO-1 2}: When MNO-1 has a relatively large cost-QoS ratio, i.e., $\costqos_1>(1-\xi(\Mechanism))\valmax$, the corresponding   insights are similar to Fig. \ref{fig: Best response of MNO-2}.
	\item Fig. \ref{fig: Best response of MNO-1 1}: When MNO-1 has a small cost-QoS ratio, i.e., $\costqos_1\le(1-\xi(\Mechanism))\valmax$,  MNO-2's \textit{winning} threshold $\val_2^W$ is always negative.
	This means that no matter how small the MNO-2's threshold user type $\valthr_2$ is, MNO-1 can always get a positive market share, i.e., $(\valthr_1^{*}(\Mechanism,\valthr_2),\valthr_2)\notin\Sigma_1$ for all $\valthr_2$.
	This is possible as MNO-2 is the weaker one. Fig. \ref{fig: Best response of MNO-1 2} represents a degenerated case of Fig. \ref{fig: Best response of MNO-1 1}. 
\end{itemize}

\subsection{Equilibrium Analysis}\label{Subsection: Equilibrium Analysis}
Based on the above best responses, i.e., $\valthr_1^{*}(\Mechanism,\valthr_2)$ and $\valthr_2^{*}(\Mechanism,\valthr_1)$ characterized in Lemmas \ref{Lemma: theta_2^*} and \ref{Lemma: theta_1^*}, next we will study the threshold equilibrium of \textit{Game \ref{Game: threshold}}, denoted by $\Threshold^*=\{\valthr_1^*,\valthr_2^*\}$.

Theorem \ref{Theorem: Equilibrium Regime} characterizes the different outcomes of \textit{Game \ref{Game: threshold}} (based on the $(\costqos_1,\costqos_2)$ plane) and the corresponding equilibrium $\Threshold^*=\{\valthr_1^*,\valthr_2^*\}$ of each outcome.
Moreover, Fig. \ref{fig: Structure} and Fig. \ref{fig: structure} illustrate the general results of Theorem \ref{Theorem: Equilibrium Regime} assuming a uniform distribution of $\val$.
In Fig. \ref{fig: Structure}, the horizontal and vertical axises correspond to $\costqos_1$ and $\costqos_2$, respectively.
The  five $(\costqos_1,\costqos_2)$ regions correspond to the different outcomes of \textit{Game \ref{Game: threshold}}.
In Fig. \ref{fig: structure}, each sub-figure represents the two MNOs' best responses under one of the five outcomes in Fig. \ref{fig: Structure}. 

\begin{theorem} \label{Theorem: Equilibrium Regime}
	Game \ref{Game: threshold} has five different types of equilibrium based on the values of $(\costqos_1,\costqos_2)$, as illustrated Fig. \ref{fig: Structure}.
	\begin{enumerate}
		
		\item MNO-1's strong monopoly regime $(\costqos_1,\costqos_2)\in\Psi_1^\text{SM}=\{(\costqos_1,\costqos_2): \costqos_2 > \val_2^N(\costqos_1,\xi) \}$: 
			MNO-2 obtains a  zero market share.
			The equilibrium $\Threshold^*(\Mechanism)$ is
			\begin{equation}
			\Threshold^*(\Mechanism)=\big\{\valthr_1^\text{MP}(\mechanism_1),\costqos_2\big\},
			\end{equation}
			which is illustrated by the green circle in Fig. \ref{fig: R4}.
			MNO-2 gives up the competition, thus MNO-1 can decide its threshold user type \textit{without} considering the impact of MNO-2.
		
		\item MNO-1's weak monopoly regime $(\costqos_1,\costqos_2)\in\Psi_1^\text{WM}=\{(\costqos_1,\costqos_2): \val_2^L(\costqos_1,\xi) < \costqos_2 \le \val_2^N(\costqos_1,\xi) \}$:
			MNO-2 obtains a zero market share.
			The equilibrium $\Threshold^*(\Mechanism)$ is 
			\begin{equation}
				\Threshold^*=\big\{\costqos_2,\costqos_2\big\},
			\end{equation}
			which is illustrated by the green circle in Fig. \ref{fig: R5}.
			MNO-2 still tries to compete for market share, but in vain.
			However, MNO-1 has to decide its threshold user type \textit{considering} the impact of MNO-2.

		\item Coexistence regime $(\costqos_1,\costqos_2)\in\Psi^\text{C}=\{(\costqos_1,\costqos_2): \costqos_2\le\val_2^L(\costqos_1,\xi), \costqos_1 \le \val_1^L(\costqos_2,\xi) \}$.
			Both MNOs share the market. 
			The equilibrium $\Threshold^*(\Mechanism)$ solves
				\begin{equation}\label{Equ: Coexistence}
				\left\{
				\begin{aligned}
				&\textstyle 
				\valthr_1^*-\frac{ 1-H\left( \valeq(\valthr_1^*,\valthr_2^*) \right) }{\frac{1}{1-\xi(\Mechanism)}h\left( \valeq(\valthr_1^*,\valthr_2^*) \right)} =\psi_1 ,    \\
				&\textstyle
				\valthr_2^*-\frac{ H\left( \valeq(\valthr_1^*,\valthr_2^*) \right) - H\left(\valthr_2^*\right)}{\frac{\xi(\Mechanism)}{1-\xi(\Mechanism)} h\left( \valeq(\valthr_1^*,\valthr_2^*) \right) +h\left(\valthr_2^*\right)} =\psi_2,\\
				\end{aligned}
				\right.	
				\end{equation}
			which is illustrated by the green circle in Fig. \ref{fig: R6}.
			Both MNOs get strictly positive market share.

		\item MNO-2's weak monopoly regime $(\costqos_1,\costqos_2)\in\Psi_2^\text{WM}=\{(\costqos_1,\costqos_2): \val_1^L(\costqos_2,\xi) < \costqos_1 \le \val_1^N(\costqos_2,\xi) \}$.
			MNO-1 obtains a  zero market share.
			The equilibrium $\Threshold^*(\Mechanism)$ is 
			\begin{equation}
				\Threshold^*(\Mechanism)=\Big\{\costqos_1, \textstyle \frac{\costqos_1 + \left(\xi(\Mechanism)-1\right)\valmax}{\xi(\Mechanism)}  \Big\},
			\end{equation}
			which is illustrated by the green circle in Fig. \ref{fig: R7}.
			MNO-1 still tries to compete for market share, but in vain.
			However, MNO-2 has to decide its threshold user type \textit{considering} the impact of MNO-1.
		
		\item MNO-2's strong monopoly regime $(\costqos_1,\costqos_2)\in\Psi_2^\text{SM}=\{(\costqos_1,\costqos_2): \costqos_1 > \val_1^N(\costqos_2,\xi) \}$.
			MNO-2 obtains a  zero market.
			The equilibrium $\Threshold^*(\Mechanism)$ is
			\begin{equation}
				\Threshold^*(\Mechanism)=\big\{\costqos_1,\valthr_2^\text{MP}(\mechanism_2)\big\}.
			\end{equation}
			which is illustrated by the green circle in Fig. \ref{fig: R8}.
			MNO-1 gives up the competition, thus MNO-2 decides its threshold user type \textit{without} considering the impact of MNO-1.
	\end{enumerate}
\end{theorem}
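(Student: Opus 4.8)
The plan is to characterize the equilibrium $\Threshold^*=\{\valthr_1^*,\valthr_2^*\}$ as the fixed point of the two best-response maps already established in Lemma \ref{Lemma: theta_2^*} and Lemma \ref{Lemma: theta_1^*}, i.e., the pair $(\valthr_1^*,\valthr_2^*)$ satisfying $\valthr_1^*=\valthr_1^{*}(\Mechanism,\valthr_2^*)$ and $\valthr_2^*=\valthr_2^{*}(\Mechanism,\valthr_1^*)$ simultaneously. Geometrically, I would overlay the two best-response curves in the common $(\valthr_1,\valthr_2)$ plane (as in Fig. \ref{fig: structure}) and locate their intersection. Since each best response is piecewise, the qualitative type of the intersection---and hence the equilibrium---is dictated by which pieces of the two curves meet, which in turn depends only on $(\costqos_1,\costqos_2)$ and $\xi(\Mechanism)$.

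First, I would record the structural facts that make the overlay well-defined: each best response is non-decreasing in the opponent's threshold, and under the IFR assumption the three breakpoints of each curve are ordered, $\val_n^W\le\val_n^L\le\val_n^N$, so the four pieces of each best response join continuously. In particular, the ``coexistence'' pieces ($\hat{\valthr}_2$ in Lemma \ref{Lemma: theta_2^*} and $\hat{\valthr}_1$ in Lemma \ref{Lemma: theta_1^*}) are strictly increasing, the shut-out piece of MNO-2 has slope $1/\xi(\Mechanism)>1$, the corresponding piece of MNO-1 is the line $\valthr_1=\valthr_2$, and the remaining pieces are constant.

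Second, I would carry out the case analysis by comparing the breakpoints of one curve with the active region of the other, thereby reading off the five regimes. The boundary functions $\val_2^N(\costqos_1,\xi),\val_2^L(\costqos_1,\xi)$ and $\val_1^L(\costqos_2,\xi),\val_1^N(\costqos_2,\xi)$ appearing in the statement are exactly the thresholds at which the intersection migrates from one pair of pieces to the next. For instance, when $\costqos_2>\val_2^N(\costqos_1,\xi)=\valthr_1^\text{MP}(\mechanism_1)$, MNO-1's monopoly choice already shuts MNO-2 out, so the fixed point is $\{\valthr_1^\text{MP}(\mechanism_1),\costqos_2\}$ (regime $\Psi_1^\text{SM}$); lowering $\costqos_2$ into $(\val_2^L,\val_2^N]$ forces MNO-1 onto its $\valthr_1=\valthr_2$ piece and yields $\{\costqos_2,\costqos_2\}$ (regime $\Psi_1^\text{WM}$), with the two MNO-1-losing regimes following symmetrically. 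The coexistence regime $\Psi^\text{C}$ is the complementary region in which both curves meet on their interior increasing pieces, so the equilibrium solves the two stationarity conditions (\ref{Equ: Coexistence}).

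The main obstacle will be establishing uniqueness---both that the two curves cross exactly once in each regime and, most delicately, that the nonlinear system (\ref{Equ: Coexistence}) has a single solution in $\Psi^\text{C}$. I would handle this by exploiting the IFR property: it guarantees that the first-order-condition expressions of the form $\val-[1-H(\val)]/h(\val)$ are strictly increasing, which makes each best-response piece a strictly monotone function of the opponent's threshold with slope controlled well enough that two monotone curves can intersect only once. A secondary (bookkeeping) task is to verify that the five regions $\Psi_1^\text{SM},\Psi_1^\text{WM},\Psi^\text{C},\Psi_2^\text{WM},\Psi_2^\text{SM}$ are pairwise disjoint and exhaust the whole domain $(\costqos_1,\costqos_2)\in[0,\valmax]^2$, which follows once the breakpoint functions are shown to be monotone and to nest consistently.
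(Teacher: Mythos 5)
Your proposal is correct and follows essentially the same route as the paper: the paper gives no separate proof of Theorem \ref{Theorem: Equilibrium Regime}, but obtains it exactly as you describe, by taking the fixed point of the best responses from Lemmas \ref{Lemma: theta_2^*} and \ref{Lemma: theta_1^*} and reading off which pieces of the two curves intersect in each $(\costqos_1,\costqos_2)$ regime, as illustrated in Figs.~\ref{fig: Structure} and \ref{fig: structure}. Your attention to uniqueness via IFR and to the disjointness/exhaustiveness of the five regions is, if anything, more careful than the paper's own (largely graphical) treatment.
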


Note that the threshold users type $\sigma_n$ corresponds to different combinations of the subscription fee and the per-unit fee (according to Definition 1).
Therefore, the uniqueness of the equilibrium in Game 1 does not necessarily imply  the unique pricing equilibrium in terms of the subscription fee and the per-unit fee.

So far we have characterized the equilibrium $\Threshold^*(\Mechanism)$ in Stage II under the data mechanism $\Mechanism$.
Next we move on to the MNOs' data mechanism selection in Stage I.

\section{MNOs' Data Mechanism Selection in Stage I\label{Section: MNO's Data Mechanism Selection in Stage I}}
In Stage I, the two MNOs will decide their data mechanisms $\Mechanism=\{\mechanism_1,\mechanism_2\}$, considering the responses from Stages II and III. 
Notice that we no longer needs Assumption \ref{Assumption: xi}, which is only used to facilitate the analysis in Stages II and III (without loss of generality). 
Moreover, we make Assumption \ref{Assumption: QoS} in this section.
\begin{assumption}\label{Assumption: QoS}
	MNO-1's average QoS is no worse than that of MNO-2, i.e., $\QoS_1\ge\QoS_2$.
\end{assumption}

Therefore, we will refer to MNO-1 and MNO-2 as the \textit{high-QoS MNO} and the \textit{low-QoS MNO}, respectively.
Note that \textit{Assumption \ref{Assumption: QoS} is not a technical assumption that limits our contributions}, since we can always switch the indices of the two MNOs if $\QoS_1<\QoS_2$.

Note that the two MNOs' costs ($c_1$ and $c_2$) will also play important roles in the equilibrium analysis. 
We will capture how the QoS values and costs interact with each other in Theorem \ref{Theorem: mechanism equilibrium} and Fig. \ref{fig: Mechanism structure} at the end of Section \ref{Section: MNO's Data Mechanism Selection in Stage I}. 

We model the two MNOs' data mechanism selection as the following game.
\begin{game}[Data Mechanism Selection in Stage I]\label{Game: Mechanism}
	The two MNOs' data mechanism selection in Stage I can be modeled as the following game:
	\begin{itemize} 
		\item Players: MNO-$n$ for both $n=1,2$.
		\item Strategies: Each MNO-$n$ decides its data mechanism $\mechanism_n$ from $\{\trad,\roll\}$.
		\item Preferences: Each MNO-$n$ obtains a profit $\profit_n(\Mechanism)$.
	\end{itemize}
\end{game}

In Game \ref{Game: Mechanism}, each MNO-$n$ decides its data mechanism $\mechanism_n$ to maximize its own profit $\profit_n(\Mechanism)$, considering the threshold equilibrium $\Threshold^*(\Mechanism)$ (derived in Theorem \ref{Theorem: Equilibrium Regime}) in Stage II.
To analyze Game \ref{Game: Mechanism},  we need to specify the two-by-two profit matrix as shown in Table \ref{table: MNOs' profits in data mechanisms selection}, where the strategy of each MNO is the data mechanism $\mechanism_n\in\{\trad,\roll\}$.
Since it is a two-by-two profit matrix, we can directly go through each of  the four possible outcomes to check whether it is an equilibrium.
Recall that our earlier work \cite{Zhiyuan2017pricing,Zhiyuan2018TMC} showed that a \textit{monopoly} MNO should always choose the rollover mechanism $\roll$ to maximize its profit.
For the \textit{duopoly} market considered here, however, we will show that this is not always the case, i.e.,  $\Mechanism=\{\roll,\roll\}$ is not always the equilibrium of Game \ref{Game: Mechanism}. 

\begin{table} 
	\setlength{\abovecaptionskip}{0pt}
	\setlength{\belowcaptionskip}{2pt}
	\renewcommand{\arraystretch}{1.2}		
	\caption{MNOs' profits in Game \ref{Game: Mechanism}.} 
	\label{table: MNOs' profits in data mechanisms selection}
	\centering
	\begin{tabular}{|c|c|c|c|c|c|c|c|c|c|}
		\hline
		\backslashbox{$\mechanism_1$}{$\mechanism_2$}		& $\trad$ 		& $\roll$ 	 					 \\
		\hline
		$\trad$		& $\profit_1(\trad,\trad)$, $\profit_2(\trad,\trad)$ 	& $\profit_1(\trad,\roll)$, $\profit_2(\trad,\roll)$	\\
		\hline
		$\roll$		& $\profit_1(\roll,\trad)$, $\profit_2(\roll,\trad)$	& $\profit_1(\roll,\roll)$, $\profit_2(\roll,\roll)$ \\
		\hline
	\end{tabular} 
\end{table}

Next we first introduce the potential market partitions at Game \ref{Game: Mechanism} equilibrium in Section \ref{Subsection: Market Partition at Game 2 Equilibrium}, then we present the equilibrium $\Mechanism^*=\{\mechanism_1^*,\mechanism_2^*\}$ in Section \ref{Subsection: Single-MNO-Surviving} and Section \ref{Subsection: Coexistence}.
Finally, we graphically illustrate the equilibrium $\Mechanism^*$ in Section \ref{Subsection: Equilibrium Illustration}.

\subsection{Market Partition at Game \ref{Game: Mechanism} Equilibrium} \label{Subsection: Market Partition at Game 2 Equilibrium}
We characterize the three possible market partitions at a Game \ref{Game: Mechanism} equilibrium in Theorem \ref{Theorem: Mechanism market partition}, depending on the two MNOs' average QoS (i.e., $\QoS_1$ and $\QoS_2$) and costs (i.e., $c_1$ and $c_2$).
The proof is given in Appendix \ref{Appendix: partition at equilibrium}.
\begin{theorem}\label{Theorem: Mechanism market partition}
	There exist two threshold costs $C^\textit{Single}_1(\QoS_1,\QoS_2,c_2)$ and $C^\textit{Single}_2(\QoS_1,\QoS_2,c_1)$, such that the market partitions at a Game \ref{Game: Mechanism} equilibrium has three possibilities.
	\begin{enumerate}
		\item MNO-1 Surviving: If MNO-1 experiences an extremely small cost, i.e., $c_1<C^\textit{Single}_1(\QoS_1,\QoS_2,c_2)$, then MNO-2 obtains zero market share.
		\item MNO-2 Surviving: If MNO-2 experiences an extremely small cost, i.e., $c_2<C^\textit{Single}_2(\QoS_1,\QoS_2,c_1)$, then MNO-1 obtains zero market share.
		\item Coexistence: If the two MNOs' costs are comparable, i.e., $c_1\ge C^\textit{Single}_1(\QoS_1,\QoS_2,c_2)$ and $c_2\ge C^\textit{Single}_2(\QoS_1,\QoS_2,c_1)$, then the two MNOs share the market.
	\end{enumerate}
	
	Moreover, the two threshold costs are given by
	\begin{equation}
	C^\textit{Single}_1(\QoS_1,\QoS_2,c_2)\eq\textstyle
	\QoS_1\left[ \frac{c_2}{\QoS_2} - \left(1-\frac{\QoS_2\usage_{\trad}}{\QoS_1\usage_{\roll}}\right)\cdot\frac{1-H\left(\frac{c_2}{\QoS_2}\right)}{h\left(\frac{c_2}{\QoS_2}\right)} \right],
	\end{equation}
	
	\begin{equation}
	\begin{aligned}
	&C^\textit{Single}_2(\QoS_1,\QoS_2,c_1)\eq\textstyle
	\max\bigg\{ 
	\textstyle \QoS_2\left[ \frac{c_1}{\QoS_1} - \left(1-\frac{\QoS_1\usage_{\trad}}{\QoS_2\usage_{\roll}}\right)\cdot\frac{1-H\left(\frac{c_1}{\QoS_1}\right)}{h\left(\frac{c_1}{\QoS_1}\right)} \right], \\
	&\qquad\qquad\textstyle c_1-(\QoS_1-\QoS_2)\valmax - \QoS_2\cdot\frac{ 1-H\left( \frac{c_1-(\QoS_1-\QoS_2)\valmax}{\QoS_2} \right)  }{ h\left( \frac{c_1-(\QoS_1-\QoS_2)\valmax}{\QoS_2} \right) }		\bigg\}.
	\end{aligned}
	\end{equation}	
\end{theorem}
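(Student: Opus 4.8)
The plan is to reduce the Stage~I partition question entirely to the Stage~II characterization in Theorem~\ref{Theorem: Equilibrium Regime}, exploiting the fact that the cost--QoS ratios $\costqos_1=c_1/\QoS_1$ and $\costqos_2=c_2/\QoS_2$ are \emph{independent} of the chosen data mechanisms, so that a profile $\Mechanism$ enters the Stage~II equilibrium only through the single scalar $\xi(\Mechanism)=\QoS_2\usage_{\mechanism_2}/(\QoS_1\usage_{\mechanism_1})$. First I would record, from Theorem~\ref{Theorem: Equilibrium Regime} together with its index-swapped counterpart (legitimate here because Assumption~\ref{Assumption: xi} is dropped in this section), the two ``surviving'' conditions in compact form: MNO-2 is squeezed out exactly when $\costqos_2>\val_2^L(\costqos_1,\xi(\Mechanism))$, and MNO-1 is squeezed out exactly when $\costqos_1>\val_1^L(\costqos_2,\xi(\Mechanism))$; otherwise the two coexist. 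This turns the statement into a question about how these two boundaries move as $\Mechanism$ ranges over the four profiles.

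Next I would establish the ordering and monotonicity that drive the result. Using $\usage_{\roll}>\usage_{\trad}$ from~(\ref{Equ: alpha flexibility}) and Assumption~\ref{Assumption: QoS} ($\QoS_1\ge\QoS_2$), the four profiles order the induced $\xi$, with the minimum attained at $(\roll,\trad)$, the maximum at $(\trad,\roll)$, and $\xi=\QoS_2/\QoS_1\le 1$ for the two same-mechanism profiles. I would then show, by implicit differentiation of the defining relation $\val_2^L-(1-\xi)\frac{1-H(\val_2^L)}{h(\val_2^L)}=\costqos_1$ and the IFR property, that $\val_2^L$ is strictly decreasing in $\xi$ (and analogously for $\val_1^L$). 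Intuitively, pushing $\xi$ toward $1$ makes the two offers similar and enlarges the one-MNO-surviving (Bertrand-like) regions, whereas a small $\xi$ favours vertical differentiation and hence coexistence.

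The decisive step couples this monotonicity with the equilibrium incentives of Game~\ref{Game: Mechanism}. I would invoke our monopoly benchmark \cite{Zhiyuan2018TMC}: an MNO left with the whole served market strictly prefers the rollover mechanism, so a dominating MNO plays $\roll$. For the ``MNO-1 Surviving'' case, fix $\mechanism_1=\roll$; MNO-2's two responses are then $(\roll,\trad)$ and $(\roll,\roll)$, and for MNO-2 to have no profitable mechanism deviation it must be squeezed out under \emph{both}. Since $\val_2^L$ decreases in $\xi$, the binding response is the most-differentiated one, $(\roll,\trad)$, with the smallest $\xi(\roll,\trad)=\QoS_2\usage_{\trad}/(\QoS_1\usage_{\roll})$. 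Imposing $\costqos_2>\val_2^L(\costqos_1,\xi(\roll,\trad))$ and solving the boundary $\val_2^L=\costqos_2$ for $c_1$ returns exactly the stated threshold $C^\textit{Single}_1$; by the $\xi$-monotonicity the condition then propagates to every profile, so MNO-2 receives zero share at every equilibrium of Game~\ref{Game: Mechanism}.

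The ``MNO-2 Surviving'' case is symmetric in spirit but is where I expect the genuine difficulty. Fixing $\mechanism_2=\roll$, the high-QoS MNO-1's two responses, $(\roll,\roll)$ with $\xi=\QoS_2/\QoS_1\le1$ and $(\trad,\roll)$ with $\xi=\QoS_2\usage_{\roll}/(\QoS_1\usage_{\trad})$ possibly exceeding $1$, can place $\xi(\Mechanism)$ on opposite sides of $1$. Consequently the ``MNO-1 out'' condition must be read from two \emph{different} Stage II branches: the $\val_1^L$ boundary on the $\xi<1$ side and a $\valmax$-dependent, $\val_1^N$-type boundary obtained from the index-swapped regime on the $\xi>1$ side. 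Assembling these two boundaries into a single threshold on $c_2$ (recorded as the maximum of two expressions in $C^\textit{Single}_2$), while simultaneously verifying that the associated profile is an actual equilibrium, i.e.\ accounting for MNO-1's incentive to re-enter under its alternative mechanism, is the crux of the argument. The proof would close by checking exhaustiveness: the complement of the two surviving regions in the $(c_1,c_2)$ plane is precisely the coexistence region, and in each surviving region the dominating MNO cannot gain by switching to $\trad$ while the squeezed MNO cannot gain by any deviation.
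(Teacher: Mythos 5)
Your proposal follows essentially the same route as the paper's proof in Appendix~\ref{Appendix: partition at equilibrium}: the paper substitutes the degenerate boundary conditions ($\valthr_1=\valthr_2$, resp.\ $\valthr_1=\xi(\Mechanism)\valthr_2+(1-\xi(\Mechanism))\valmax$) into the coexistence fixed point (\ref{Equ: Coexistence}), which is equivalent to your use of the losing thresholds $\val_1^L,\val_2^L$ from Theorem~\ref{Theorem: Equilibrium Regime}; it likewise exploits that $\Mechanism$ enters only through $\xi(\Mechanism)$, binds $C^\textit{Single}_1$ at the most-differentiated profile $(\roll,\trad)$ via the same monotonicity-in-$\xi$ observation, and assembles $C^\textit{Single}_2$ from the same two branches you identify ($(\roll,\roll)$ with MNO-1 attracting the high end, and $(\trad,\roll)$ with QoS-flip). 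Your treatment of $C^\textit{Single}_1$ and of coexistence-as-complement is complete and matches the paper.

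The step you defer for $C^\textit{Single}_2$ --- why the two branches combine into a \emph{maximum}, together with ``accounting for MNO-1's incentive to re-enter under its alternative mechanism'' --- is precisely the step the paper does not carry out either: its proof simply declares that ``combining'' (\ref{Proof Equ: c_2 inequality 1}) and (\ref{Proof Equ: c_2 inequality 2}) yields the max, with no incentive argument. Your instinct that this is the crux is right, and sharply so. Writing $g(x)\eq x-\frac{1-H(x)}{h(x)}$ (increasing under IFR), the $(\roll,\roll)$ expression equals $\QoS_2\,g\big(\frac{c_1-(\QoS_1-\QoS_2)\valmax}{\QoS_2}\big)$ while the $(\trad,\roll)$ expression equals $\QoS_2\,g\big(\frac{c_1}{\QoS_1}\big)+\QoS_1\frac{\usage_{\trad}}{\usage_{\roll}}\cdot\frac{1-H(c_1/\QoS_1)}{h(c_1/\QoS_1)}$; since $\frac{c_1-(\QoS_1-\QoS_2)\valmax}{\QoS_2}\le\frac{c_1}{\QoS_1}$ whenever $c_1\le\QoS_1\valmax$, the former never exceeds the latter. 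Hence for $c_2$ strictly between the two expressions, MNO-1 earns strictly positive profit whenever it plays $\roll$ (both under $(\roll,\roll)$ and $(\roll,\trad)$) yet is squeezed out under $(\trad,\roll)$. Then $(\trad,\roll)$ cannot be an equilibrium of Game~\ref{Game: Mechanism} (MNO-1 profitably deviates to $\roll$), and since playing $\roll$ guarantees MNO-1 a positive profit, \emph{every} equilibrium on this strip gives MNO-1 a positive market share --- so the ``zero share independent of his choice'' conclusion of Lemma~\ref{Lemma: Mechanism equilibrium MNO-2} would actually bind at the \emph{minimum} of the two expressions, not the maximum. Completing your plan therefore requires either supplying the re-entry argument you name (which replaces the maximum by the minimum, i.e., the $(\roll,\roll)$ expression) or weakening the claimed conclusion on that strip. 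Neither your outline nor the paper does this; your proposal mirrors the published proof step for step, and the one step you flag as open is a genuine gap in that proof as well.
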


\begin{figure*}
	\centering
	\begin{minipage}{0.26\textwidth}
		\setlength{\abovecaptionskip}{13pt}
		\centering
		\includegraphics[width=0.93\linewidth]{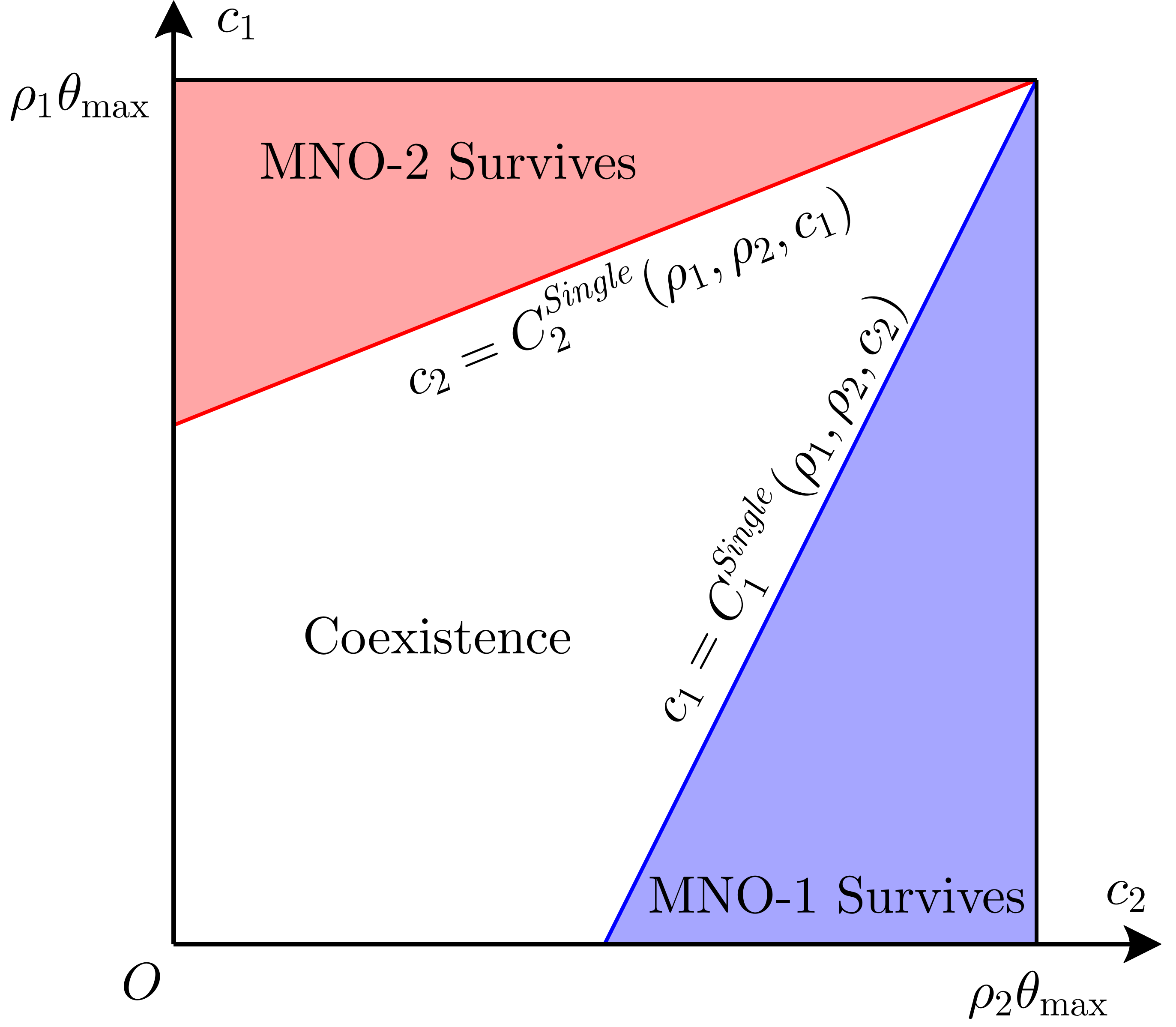}
		\caption{Illustration of Theorem \ref{Theorem: Mechanism market partition}.}
		\label{fig: MechanismEQ_0}
	\end{minipage} 
	\begin{minipage}{0.73\textwidth}
		\setlength{\abovecaptionskip}{-1pt}
		\setlength{\belowcaptionskip}{0pt}
		\subfigure[ $0<\QoS_2\le\hat{\QoS}$ ]{\label{fig: MechanismEQ_1}\includegraphics[height=0.29\linewidth]{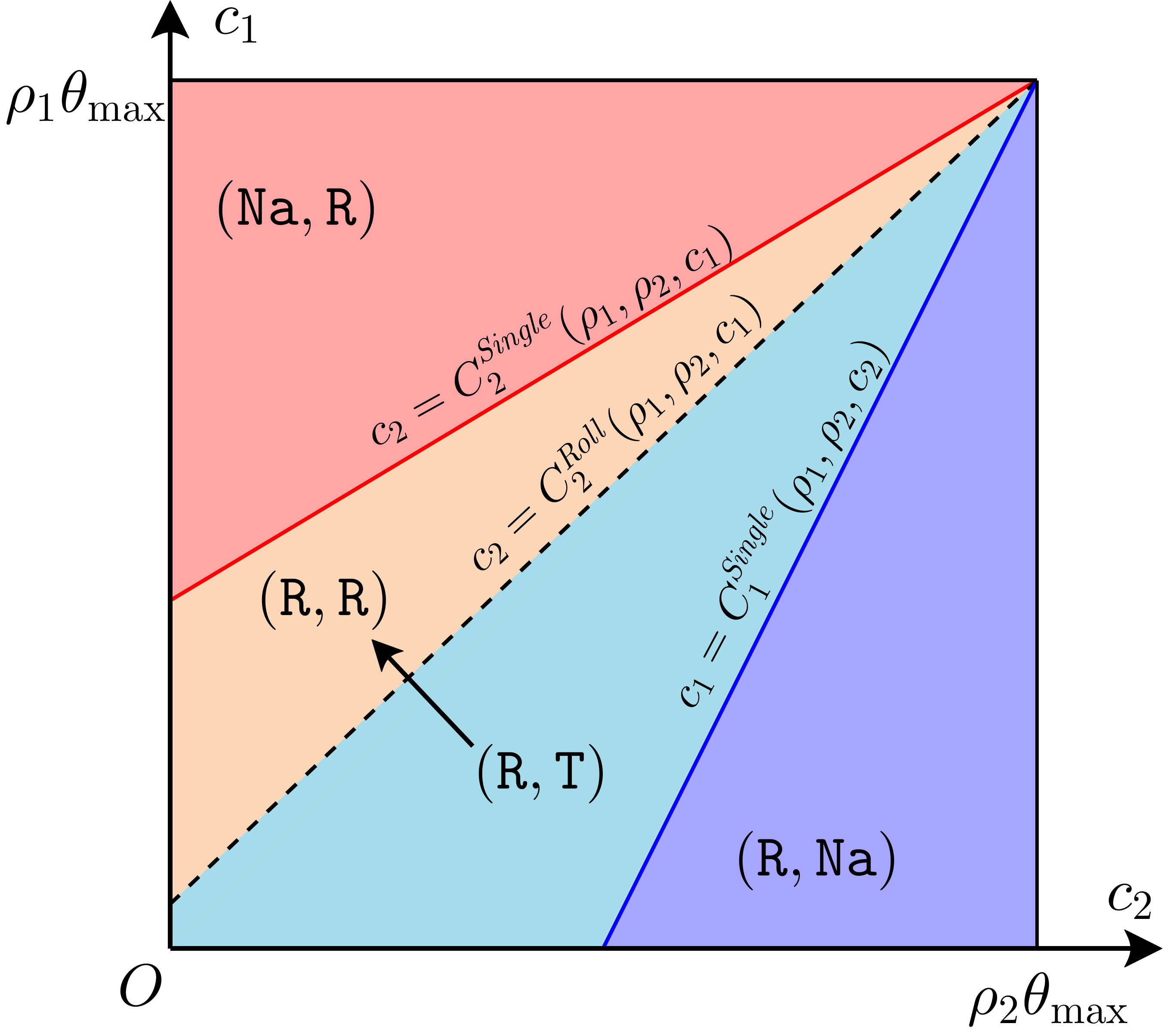}} 
		\subfigure[ $\hat{\QoS}<\QoS_2\le\tilde{\QoS}$ ]{\label{fig: MechanismEQ_2}\includegraphics[height=0.29\linewidth]{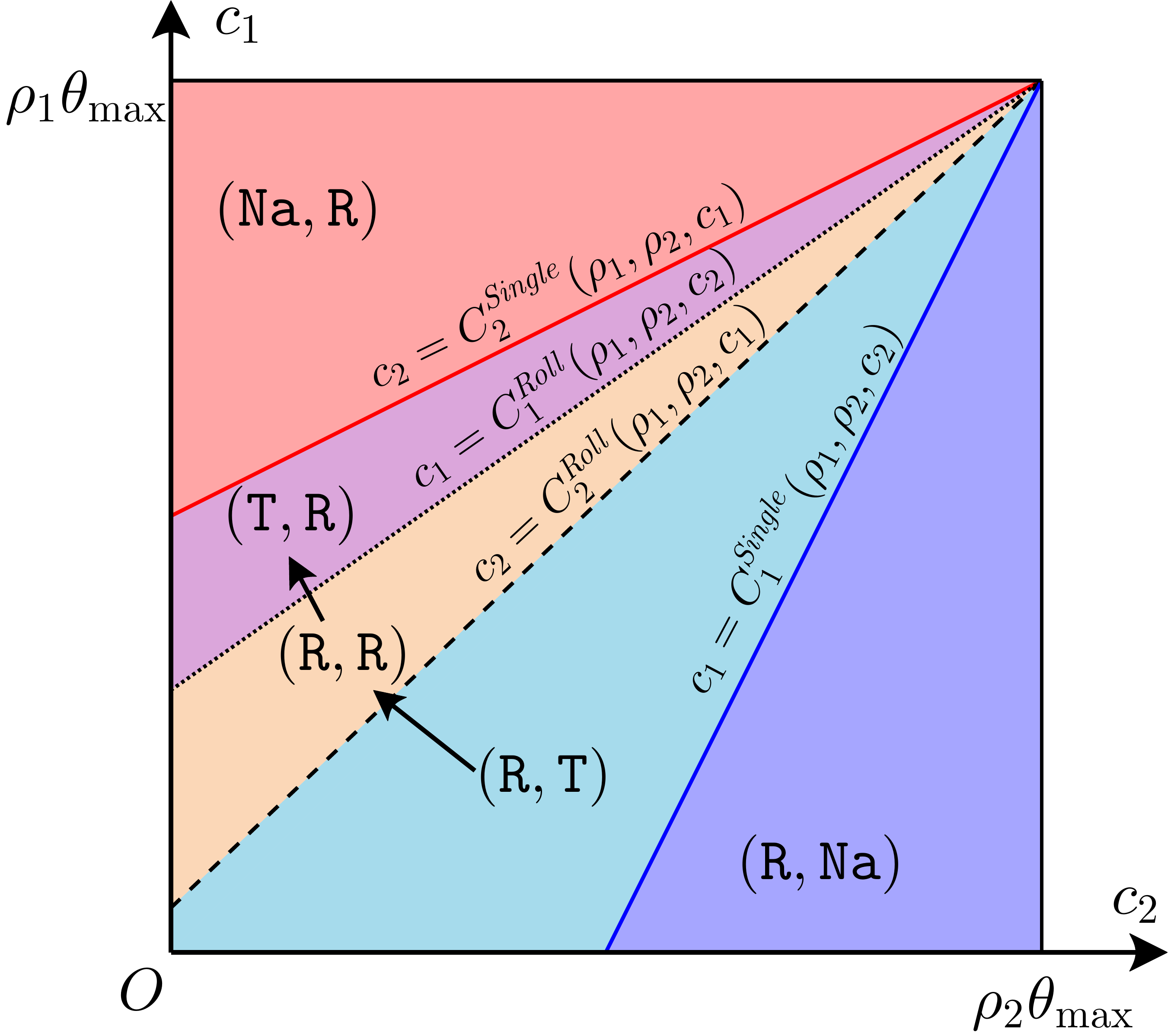}} 
		\subfigure[ $\tilde{\QoS}<\QoS_2\le\QoS_1$ ]{\label{fig: MechanismEQ_3}\includegraphics[height=0.29\linewidth]{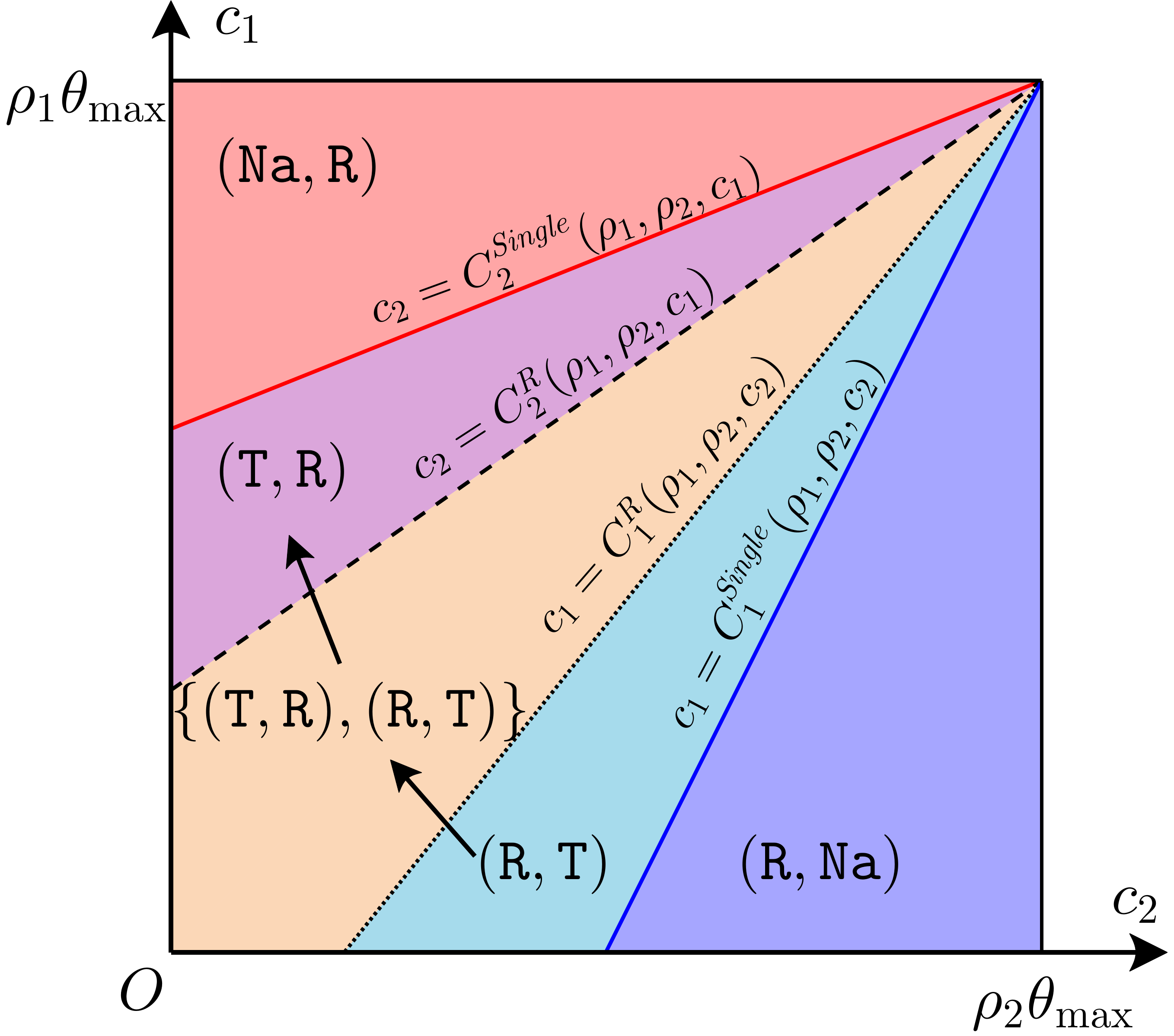}}
		\caption{Illustration of Theorem \ref{Theorem: mechanism equilibrium}.}
		\label{fig: Mechanism structure}
	\end{minipage} 
\end{figure*}

We illustrate Theorem \ref{Theorem: Mechanism market partition} in Fig. \ref{fig: MechanismEQ_0} under a uniform distribution of $\val$.
Specifically, the horizontal and vertical axis corresponds to MNO-2's cost $c_2$ and MNO-1's cost $c_1$, respectively.
The two shaded (i.e., red and blue) areas represent that only one  MNO obtains a positive market share.
The white area represents that the two MNOs share the market.
In Section \ref{Subsection: Single-MNO-Surviving} and Section \ref{Subsection: Coexistence}, we will present the detailed equilibrium $\Mechanism^*$ for each area in Fig. \ref{fig: MechanismEQ_0}.
Before that, let us introduce  $\na$ in Definition \ref{Definition: Na} for notation simplicity.
\begin{definition} \label{Definition: Na}
	The notation $\na$ is used to describe the equilibriums in the following cases:
	\begin{itemize}
		\item The equilibrium  $\Mechanism^*=(\mechanism_1^*,\na)$ represents that both $(\mechanism_1^*,\trad)$ and $(\mechanism_1^*,\roll)$ are the equilibriums of Game \ref{Game: Mechanism}.
		\item The equilibrium  $\Mechanism^*=(\na,\mechanism_2^*)$ represents that both $(\trad,\mechanism_2^*)$ and $(\roll,\mechanism_2^*)$ are the equilibriums of Game \ref{Game: Mechanism}.
	\end{itemize}
	
\end{definition}

\subsection{Single-MNO-Surviving}\label{Subsection: Single-MNO-Surviving}
We introduce the data mechanism equilibrium $\Mechanism^*$ of the single-MNO-surviving case in Lemma \ref{Lemma: Mechanism equilibrium MNO-1} and Lemma \ref{Lemma: Mechanism equilibrium MNO-2}.
The proofs are given in Appendix \ref{Appendix: partition at equilibrium}.
\begin{lemma}\label{Lemma: Mechanism equilibrium MNO-1}
	If $c_1<C^\textit{Single}_1(\QoS_1,\QoS_2,c_2)$, then the high-QoS MNO-1 obtains a positive market share, the low-QoS MNO-2 obtains a zero market share independent of his choice and the data mechanism equilibrium  is
		\begin{equation}\label{Equ: Mechanim Equilibrium R Na}
				\Mechanism^{*}=(\roll,\na).
		\end{equation}
		Furthermore, we have 
		\begin{equation}\label{Equ: Profit Effect R Na}
			\profit_1(\roll,\trad)=\profit_1(\roll,\roll).
		\end{equation}
\end{lemma}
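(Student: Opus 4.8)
The plan is to reduce the hypothesis $c_1<C^\textit{Single}_1$ to a statement about the regime structure of Game \ref{Game: threshold}, and then read off the mechanism equilibrium from Theorem \ref{Theorem: Equilibrium Regime}. First I would observe that, with $\mechanism_1=\roll$ fixed, MNO-1 is the stronger operator under either of MNO-2's choices: by Assumption \ref{Assumption: QoS} and (\ref{Equ: alpha flexibility}), $\xi(\roll,\trad)=\QoS_2\usage_\trad/(\QoS_1\usage_\roll)<\xi(\roll,\roll)=\QoS_2/\QoS_1\le1$. Writing $g_\xi(\val)\eq\val-(1-\xi)\frac{1-H(\val)}{h(\val)}$, the IFR assumption makes $g_\xi$ strictly increasing in $\val$, so MNO-2's losing threshold is $\val_2^L=g_\xi^{-1}(\costqos_1)$. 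Inverting the defining relation of $\val_2^L$ shows that the squeeze-out condition $\costqos_2>\val_2^L$ (i.e. $(\costqos_1,\costqos_2)\in\Psi_1^\text{SM}\cup\Psi_1^\text{WM}$) is equivalent to $g_\xi(\costqos_2)>\costqos_1$; multiplying through by $\QoS_1$ with $\xi=\xi(\roll,\trad)$ recovers precisely $c_1<C^\textit{Single}_1(\QoS_1,\QoS_2,c_2)$. Hence the hypothesis says exactly that, under $(\roll,\trad)$, MNO-2 obtains a zero market share.

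Next I would extend this to $(\roll,\roll)$ by monotonicity in $\xi$. Since $g_\xi(\val)$ is pointwise increasing in $\xi$ (the subtracted term $(1-\xi)\frac{1-H}{h}\ge0$ shrinks as $\xi$ grows) and $\xi(\roll,\roll)>\xi(\roll,\trad)$, the inequality $g_{\xi(\roll,\trad)}(\costqos_2)>\costqos_1$ forces $g_{\xi(\roll,\roll)}(\costqos_2)>\costqos_1$. Thus MNO-2 is squeezed out under $(\roll,\roll)$ as well, so its share is zero for either of its choices and $\profit_2(\roll,\trad)=\profit_2(\roll,\roll)=0$. MNO-2 is therefore indifferent between $\trad$ and $\roll$ whenever MNO-1 plays $\roll$, which is exactly the content of the symbol $\na$ in Definition \ref{Definition: Na}. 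This establishes the zero-share and indifference parts, modulo the remaining step that $\roll$ is in fact MNO-1's choice.

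For the ``furthermore'' identity (\ref{Equ: Profit Effect R Na}), I would note that $(\roll,\trad)$ and $(\roll,\roll)$ share the same $\costqos_2$ and the same no-influence threshold $\val_2^N=\valthr_1^\text{MP}(\roll)$ (which depends only on $\mechanism_1=\roll$); hence both profiles lie in the \emph{same} sub-regime, either $\Psi_1^\text{SM}$ for both or $\Psi_1^\text{WM}$ for both. Reading MNO-1's equilibrium profit off Theorem \ref{Theorem: Equilibrium Regime} gives $\profit_1(\roll,\mechanism_2)=\profit_1^\text{MP}(\roll)$ in the strong-monopoly sub-regime and $\profit_1(\roll,\mechanism_2)=\QoS_1\usage_\roll(\costqos_2-\costqos_1)[1-H(\costqos_2)]$ in the weak-monopoly sub-regime (using $\valthr_1^*=\valeq=\costqos_2$). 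Both expressions are independent of $\mechanism_2$, which yields $\profit_1(\roll,\trad)=\profit_1(\roll,\roll)$.

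It remains to confirm that $\roll$ is MNO-1's best response, i.e. $\profit_1(\roll,\mechanism_2)\ge\profit_1(\trad,\mechanism_2)$ for each $\mechanism_2$; this is the step I expect to be the main obstacle. A clean universal bound is that any duopoly profit is dominated by the corresponding single-MNO profit, $\profit_1(\mechanism_1,\mechanism_2)\le\profit_1^\text{MP}(\mechanism_1)$, because a surviving MNO-2 only raises the neutral type $\valeq$ above $\valthr_1$ and thereby shrinks MNO-1's share $[1-H(\valeq)]$. Combined with the monopoly rollover-dominance $\profit_1^\text{MP}(\roll)\ge\profit_1^\text{MP}(\trad)$ from our earlier single-MNO analysis, this settles the strong-monopoly sub-regime immediately, since there $\profit_1(\roll,\mechanism_2)=\profit_1^\text{MP}(\roll)$. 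The delicate case is the weak-monopoly sub-regime, where $\profit_1(\roll,\mechanism_2)=\QoS_1\usage_\roll(\costqos_2-\costqos_1)[1-H(\costqos_2)]$ lies strictly below $\profit_1^\text{MP}(\roll)$, so the crude monopoly bound on the $\trad$-deviation is too lossy. Here I would instead track the regime that the deviation $(\trad,\mechanism_2)$ lands in: since switching MNO-1 to $\trad$ only \emph{raises} $\xi$ and weakens MNO-1, the natural expectation is that the deviation keeps MNO-1 in its own weak-monopoly regime, where Theorem \ref{Theorem: Equilibrium Regime} gives profit $\QoS_1\usage_\trad(\costqos_2-\costqos_1)[1-H(\costqos_2)]<\profit_1(\roll,\mechanism_2)$ by (\ref{Equ: alpha flexibility}); the remaining sub-cases, in which the deviation pushes MNO-1 into a coexistence or strong-monopoly regime, I would rule out by a short case split exploiting the unimodality of the single-MNO profit $\QoS_1\usage_{\mechanism_1}(\val-\costqos_1)[1-H(\val)]$ under IFR together with $\costqos_2\le\valthr_1^\text{MP}(\roll)$.
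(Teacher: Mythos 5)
Your first three paragraphs follow the same route as the paper's own proof---identify $c_1<C^\textit{Single}_1(\QoS_1,\QoS_2,c_2)$ with the squeeze-out condition under $\xi(\roll,\trad)$, extend it to $(\roll,\roll)$ by monotonicity in $\xi$, and read the profits off Theorem \ref{Theorem: Equilibrium Regime}---but your treatment of the identity (\ref{Equ: Profit Effect R Na}) is in fact \emph{more} careful than the paper's. The paper's argument (Appendix \ref{Appendix: partition at equilibrium}) claims the hypothesis forces the strong-monopoly regime $\Psi_1^\text{SM}$ via the chain
\begin{equation*}
\valthr_1^\text{MP}(\roll)-\tfrac{1-H(\valthr_1^\text{MP}(\roll))}{h(\valthr_1^\text{MP}(\roll))} \;<\; \costqos_2-\Bigl(1-\tfrac{\QoS_2\usage_\trad}{\QoS_1\usage_\roll}\Bigr)\tfrac{1-H(\costqos_2)}{h(\costqos_2)} \;<\; \costqos_2-\tfrac{1-H(\costqos_2)}{h(\costqos_2)},
\end{equation*}
whose second inequality runs the wrong way: since $0<1-\QoS_2\usage_\trad/(\QoS_1\usage_\roll)<1$, dropping that factor makes the subtracted term larger, not smaller. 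The hypothesis therefore only places $(\costqos_1,\costqos_2)$ in $\Psi_1^\text{SM}\cup\Psi_1^\text{WM}$, and your observation that $(\roll,\trad)$ and $(\roll,\roll)$ necessarily land in the \emph{same} sub-regime (because $\val_2^N=\valthr_1^\text{MP}(\roll)$ does not depend on $\mechanism_2$), with an $\mechanism_2$-independent profit formula in each, is exactly the repair the paper's argument needs.

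The genuine gap is in your fourth paragraph, and only in one branch of it. For $\mechanism_2=\trad$ your reasoning closes: the deviation $(\trad,\trad)$ raises $\xi$ while keeping it at most $1$, squeeze-out of MNO-2 is preserved, the strong/weak boundary $\valthr_1^\text{MP}$ does not move (the equation in Lemma \ref{Lemma: Monopoly Optimal Pricing} does not involve the mechanism), so the deviation profit is the same expression with $\usage_\trad$ in place of $\usage_\roll$ and is strictly smaller by (\ref{Equ: alpha flexibility}). But for $\mechanism_2=\roll$ the deviation $(\trad,\roll)$ can have $\xi(\trad,\roll)=\QoS_2\usage_\roll/(\QoS_1\usage_\trad)>1$---this is precisely the QoS-flip profile---and then Lemmas \ref{Lemma: theta_2^*}--\ref{Lemma: theta_1^*} and Theorem \ref{Theorem: Equilibrium Regime} no longer apply as stated; the roles of the two MNOs swap. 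In the weak-monopoly sub-regime ($\costqos_2\le\valthr_1^\text{MP}(\roll)$) the flipped game never squeezes MNO-2 out: the role-swapped analogue of the losing threshold in (\ref{Equ: three points of MNO-1}) evaluates to $\xi'\valthr_1^\text{MP}(\roll)+(1-\xi')\valmax$ with $\xi'=\QoS_1\usage_\trad/(\QoS_2\usage_\roll)$, which strictly exceeds $\valthr_1^\text{MP}(\roll)\ge\costqos_2$. Hence the deviation lands in coexistence (or with MNO-1 squeezed out), and in the coexistence sub-case you must bound MNO-1's flipped-game coexistence profit by $\QoS_1\usage_\roll(\costqos_2-\costqos_1)\left[1-H(\costqos_2)\right]$. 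Neither your universal bound by MNO-1's monopoly profit under $\trad$ (which can exceed the weak-monopoly payoff) nor unimodality under IFR accomplishes this; it requires analyzing the fixed point of (\ref{Equ: Coexistence}) with the roles reversed, so it is not a ``short case split.'' To be fair, this step is a gap in the paper as well: the paper never verifies that $\roll$ is MNO-1's best response, but simply asserts $\Mechanism^*=(\roll,\na)$ inside the proof of Theorem \ref{Theorem: Mechanism market partition}.
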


\begin{lemma}\label{Lemma: Mechanism equilibrium MNO-2}
	If $c_2<C^\textit{Single}_2(\QoS_1,\QoS_2,c_1)$, then the low-QoS MNO-2 obtains a positive market share, the high-QoS MNO-1 obtains a zero market share independent of his choice and the data mechanism equilibrium is
		\begin{equation}\label{Equ: Mechanim Equilibrium Na R}
		\Mechanism^{*}=(\na,\roll).
		\end{equation}
	Furthermore, we have 
		\begin{equation}\label{Equ: Profit Effect Na R}
		\profit_2(\roll,\roll)\le\profit_2(\trad,\roll).
		\end{equation}
\end{lemma}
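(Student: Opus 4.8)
The plan is to work directly with the $2\times 2$ profit matrix of Game \ref{Game: Mechanism} (Table \ref{table: MNOs' profits in data mechanisms selection}) under the hypothesis $c_2<C^\textit{Single}_2$. The backbone is Theorem \ref{Theorem: Mechanism market partition}(2), which I assume: the hypothesis places $(c_1,c_2)$ in the MNO-2-surviving region, so whenever MNO-2 is active its rival is driven to a zero share. To make this usable I would translate ``surviving'' into the Stage-II regime language of Theorem \ref{Theorem: Equilibrium Regime}: given a mechanism profile $\Mechanism$, MNO-1 is excluded exactly when $(\costqos_1,\costqos_2)$ lies in MNO-2's (weak or strong) monopoly regime, i.e. $\costqos_1>\val_1^L(\costqos_2,\xi(\Mechanism))$. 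Since $\xi(\Mechanism)$ moves with MNO-1's own mechanism, I would check this exclusion separately for the two profiles $(\trad,\roll)$ and $(\roll,\roll)$; the two cost expressions combined in $C^\textit{Single}_2$ are precisely the two boundary conditions arising from these two values of $\xi(\Mechanism)$.

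Next I would establish that MNO-2's dominant mechanism is $\roll$, which pins down the equilibrium as $(\na,\roll)$. Fixing MNO-1's mechanism and using the surviving-region partition $\Sigma_1$ of Theorem \ref{Theorem: Market Partition}, MNO-2 captures the whole top segment $[\valthr_2,\valmax]$, so (\ref{Equ: Profit-2 III}) collapses to the monopoly-type objective $\QoS_2\usage_{\mechanism_2}(\valthr_2-\costqos_2)\,[1-H(\valthr_2)]$. Switching MNO-2 from $\trad$ to $\roll$ raises $\usage_{\mechanism_2}$ by (\ref{Equ: alpha flexibility}) and raises $\xi(\Mechanism)$, and the latter relaxes MNO-2's competitive constraint, pushing its equilibrium threshold toward the unconstrained monopoly optimum $\valthr_2^\text{MP}(\roll)$. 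Invoking the IFR-based quasi-concavity of the monopoly objective, both effects are profit-increasing, so $\profit_2(\mechanism_1,\roll)\ge\profit_2(\mechanism_1,\trad)$ for each $\mechanism_1$. Against MNO-2 playing $\roll$, MNO-1 earns zero in both $(\trad,\roll)$ and $(\roll,\roll)$ since it is excluded in each, hence is indifferent, so both profiles are equilibria and $\Mechanism^{*}=(\na,\roll)$.

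For the profit comparison (\ref{Equ: Profit Effect Na R}) I would hold MNO-2 at $\roll$ and vary only MNO-1's mechanism. If MNO-2 is a strong monopoly, its threshold equals $\valthr_2^\text{MP}(\roll)$ irrespective of $\mechanism_1$ and the two profits coincide. If MNO-2 is a weak monopoly, its threshold is the constrained value $\valthr_2^{*}=[\costqos_1+(\xi(\Mechanism)-1)\valmax]/\xi(\Mechanism)=\valmax-(\valmax-\costqos_1)/\xi(\Mechanism)$, which is increasing in $\xi(\Mechanism)$. Because $\usage_\trad<\usage_\roll$ sits in the denominator of $\xi(\Mechanism)$, we get $\xi(\trad,\roll)>\xi(\roll,\roll)$ and hence $\valthr_2^{*}(\trad,\roll)>\valthr_2^{*}(\roll,\roll)$, with both thresholds lying on the increasing branch (below $\valthr_2^\text{MP}(\roll)$) of the common monopoly objective. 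Monotonicity there yields $\profit_2(\roll,\roll)\le\profit_2(\trad,\roll)$, as claimed.

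The main obstacle is the exclusion step of the first paragraph: proving that $c_2<C^\textit{Single}_2$ forces MNO-1's share to zero under \emph{both} of its mechanism choices, not merely one. Because MNO-1's mechanism shifts $\xi(\Mechanism)$ and therefore the coexistence/monopoly boundary $\val_1^L(\cdot)$, the binding case is MNO-1's most competitive response $\roll$ (which maximizes $\QoS_1\usage_{\mechanism_1}$), and I must verify that the way the two cost expressions are combined in $C^\textit{Single}_2$ genuinely enforces simultaneous exclusion rather than exclusion for a single profile; reconciling the two $\xi$-dependent boundaries is the delicate part. Once that region statement is secured, the monotonicity and quasi-concavity arguments above are routine under the IFR assumption.
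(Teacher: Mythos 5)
Your decomposition follows the paper's own proof: the $(\na,\roll)$ structure is delegated to Theorem \ref{Theorem: Mechanism market partition} (whose proof checks the two profiles $(\trad,\roll)$ and $(\roll,\roll)$ against the Stage-II regimes of Theorem \ref{Theorem: Equilibrium Regime}), and (\ref{Equ: Profit Effect Na R}) is obtained by comparing MNO-2's constrained threshold across those two profiles. The problem is the step you yourself flag as delicate: it is a genuine gap, and it cannot be ``verified'' in the form you hope. Exclusion of MNO-1 under $(\roll,\roll)$ and under $(\trad,\roll)$ impose two \emph{different} upper bounds on $c_2$ --- exactly the two expressions appearing inside $C^\textit{Single}_2$ --- and ``zero share independent of his choice'' requires $c_2$ below \emph{both}, i.e.\ below their minimum, whereas $C^\textit{Single}_2$ is defined as their \emph{maximum}. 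The two bounds genuinely separate: for uniform $\val$ with $\valmax=1$, $\QoS_1=1$, $\QoS_2=0.95$, $\usage_\trad/\usage_\roll=0.9$, $c_1=0.6$, the $(\trad,\roll)$ bound is $\approx 0.55$ while the $(\roll,\roll)$ bound is $0.15$. At $c_2=0.3$ the hypothesis of the lemma holds, yet at the profile $(\roll,\roll)$ the point $(\costqos_1,\costqos_2)=(0.6,0.316)$ lies in the coexistence regime $\Psi^\text{C}$ (one computes $\val_1^L\approx 0.675>\costqos_1$ and $\val_2^L\approx 0.619>\costqos_2$), so MNO-1 earns a strictly positive profit by playing $\roll$; then $(\trad,\roll)$ is not an equilibrium of Game \ref{Game: Mechanism}, and the exclusion claim and the $(\na,\roll)$ claim cannot both hold. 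The paper's own proof never confronts this: inside the proof of Theorem \ref{Theorem: Mechanism market partition} it derives the two boundary equations and combines them with a $\max$ by fiat, so you have located precisely the unproved step of the published argument --- but your plan of ``checking the two profiles separately'' cannot be completed from the stated hypothesis unless the $\max$ is replaced by a $\min$ or an additional argument (absent from both your sketch and the paper) is supplied.

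A second, repairable error sits in your third paragraph. The weak-monopoly formula $[\costqos_1+(\xi(\Mechanism)-1)\valmax]/\xi(\Mechanism)$ is the $\Psi_2^\text{WM}$ expression of Theorem \ref{Theorem: Equilibrium Regime} and presupposes Assumption \ref{Assumption: xi}, i.e.\ $\xi(\Mechanism)<1$. Under $(\trad,\roll)$, the configuration that generates the first expression of $C^\textit{Single}_2$ (boundary $\valthr_1=\valthr_2$) is the QoS-flip one, $\xi(\trad,\roll)=\QoS_2\usage_\roll/(\QoS_1\usage_\trad)>1$; after relabeling, MNO-2 is then the \emph{stronger} operator and its constrained weak-monopoly threshold is $\costqos_1$ (the relabeled equilibrium $\{\costqos_1,\costqos_1\}$), not your formula. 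Your conclusion survives: $\costqos_1\ge\valmax-(\valmax-\costqos_1)/\xi(\roll,\roll)$ because $\xi(\roll,\roll)\le 1$, both thresholds lie below $\valthr_2^\text{MP}(\roll)$ on the increasing branch of MNO-2's single-peaked objective, and the dangerous mixed case ($(\trad,\roll)$ weak while $(\roll,\roll)$ strong) is impossible because $(\roll,\roll)$-strong forces $\costqos_1>\valthr_2^\text{MP}(\roll)$; but the step must be rewritten with the correct threshold. Once repaired, your monotonicity argument for (\ref{Equ: Profit Effect Na R}) is considerably more rigorous than the paper's, which justifies that inequality with a single sentence (the weak-monopoly regime is possible, hence MNO-1 ``can'' depress MNO-2's profit).
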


Lemma \ref{Lemma: Mechanism equilibrium MNO-1} and Lemma \ref{Lemma: Mechanism equilibrium MNO-2} reveal the impact of MNO-1's QoS advantage (i.e., $\QoS_1\ge\QoS_2$):
\begin{itemize}
	\item When the low-QoS MNO-2 obtains a zero market share (i.e., $c_1<C_1^\textit{Single}(\QoS_1,\QoS_2,c_2)$) as in Lemma \ref{Lemma: Mechanism equilibrium MNO-1}, its choice between $\trad$ and $\roll$ has no effect on the high-QoS MNO-1, i.e., $\profit_1(\roll,\trad)=\profit_1(\roll,\roll)$.
	
	\item When the high-QoS MNO-1 obtains a zero market share (i.e., $c_2<C^\textit{Single}_2(\QoS_1,\QoS_2,c_1)$) as in Lemma \ref{Lemma: Mechanism equilibrium MNO-2}, it can reduce the low-QoS MNO-2's profit by choosing the rollover  mechanism $\roll$, i.e., $\profit_2(\roll,\roll)\le\profit_2(\trad,\roll)$.
\end{itemize}


\subsection{Coexistence}\label{Subsection: Coexistence}
Now we consider the case where both the MNOs obtain positive market shares at Game \ref{Game: Mechanism} equilibrium, i.e., $c_1\ge C^\textit{Single}_1(\QoS_1,\QoS_2,c_2)$ and $c_2\ge C^\textit{Single}_2(\QoS_1,\QoS_2,c_1)$.

To facilitate our later discussion, we first introduce the QoS-Flip phenomenon in Section \ref{Subsubsection: QoS-Flip Phenomenon}.
\subsubsection{QoS-Flip Phenomenon}\label{Subsubsection: QoS-Flip Phenomenon}
A counter-intuitive result that we will elaborate is  that the high-QoS MNO-1 will not always attract the high valuation users.
According to Theorem \ref{Theorem: Market Partition}, MNO-$1$ attracts the high valuation users if $\QoS_1\usage_{\mechanism_1}>\QoS_2\usage_{\mechanism_2}$, otherwise, MNO-$2$  attracts the high valuation users if $\QoS_1\usage_{\mechanism_1}<\QoS_2\usage_{\mechanism_2}$.
Recall that $\usage_{\mechanism_n}$ defined in (\ref{Equ: alpha}) depends on the data mechanism $\mechanism_n\in\{\trad,\roll\}$.
The inequality (\ref{Equ: alpha flexibility}) indicates 
\begin{itemize}
	\item The low-QoS MNO-2 can attract the high valuation users under the data mechanism $\Mechanism=\{\trad,\roll\}$ if $\QoS_2>\QoS_1\usage_{\trad}/\usage_{\roll}$.
	\item The high-QoS MNO-1 can attract the high valuation users under the data mechanism $\Mechanism\in\{(\trad,\trad),(\roll,\trad),(\roll,\roll)\}$.
\end{itemize}

Therefore, the rollover mechanism may reverse the MNO-2's QoS disadvantage.
We refer to this phenomenon as \textit{QoS-flip}, defined as follows:
\begin{definition}[QoS-flip]
	The QoS-flip happens if the low-QoS MNO-2 attracts the high valuation users, i.e., $\QoS_2>\QoS_1\usage_{\trad}/\usage_{\roll}$, under the data mechanism $\Mechanism=\{\trad,\roll\}$.
\end{definition}

In the following analysis for the equilibrium of the coexistence case, we will explain  when QoS-flip happens.

\subsubsection{Equilibrium of Coexistence Case}

To present the equilibrium in the coexistence case clearly, we need to use two cost thresholds (i.e., ${C}^\textit{Roll}_1(\QoS_1,\QoS_2,c_2)$ and ${C}^\textit{Roll}_2(\QoS_1,\QoS_2,c_1)$) and two QoS thresholds (i.e., $\tilde{\QoS}$ and $\hat{\QoS}$), which depend on both MNOs' QoS (i.e., $\QoS_1$ and $\QoS_2$) and costs (i.e., $c_1$ and $c_2$).
Due to the complexity of the MNOs' two-dimensional heterogeneity in QoS $\QoS_n$ and cost $c_n$, as well as the users' heterogeneity in the data valuation $\val$, there is no closed-form expression for ${C}^\textit{Roll}_1$, ${C}^\textit{Roll}_2$, $\tilde{\QoS}$, and $\hat{\QoS}$.
Nevertheless, we will explain how to compute them numerically in Appendix \ref{Appendix: numerical compute}.

Theorem \ref{Theorem: mechanism equilibrium} presents the data mechanism equilibrium of the coexistence case.
\begin{theorem}\label{Theorem: mechanism equilibrium}
	Under the coexistence case of Game \ref{Game: Mechanism}, there exist two QoS thresholds $\tilde{\QoS}>\hat{\QoS}$ such that the equilibrium $\Mechanism^{*}$ has three different possibilities: 
	\begin{enumerate}
		\item When MNO-1 has a \textbf{large} QoS advantage over MNO-2, i.e., $0<\QoS_2\le\hat{\QoS}$, the equilibrium $\Mechanism^{*}$ (as shown in Fig. \ref{fig: MechanismEQ_1}) is
			\begin{itemize}
				\setlength{\itemsep}{3pt}
				\item $(\roll,\trad)$ if MNO-2 experiences a  large cost, i.e., $ c_2>{C}^\textit{Roll}_2(\QoS_1,\QoS_2,c_1) $; 
				\item $(\roll,\roll)$ if MNO-2 experiences a  small cost, i.e., $ c_2\le {C}^\textit{Roll}_2(\QoS_1,\QoS_2,c_1) $.
			\end{itemize}
		\item When MNO-1 has a \textbf{small} QoS advantage over MNO-2, i.e., $\hat{\QoS}<\QoS_2\le\tilde{\QoS}$, then the equilibrium $\Mechanism^{*}$ (as shown in Fig. \ref{fig: MechanismEQ_2}) is
			\begin{itemize}
				\setlength{\itemsep}{3pt}
				\item $(\roll,\trad)$ if MNO-2 experiences a large cost, i.e., $ c_2>{C}^\textit{Roll}_2(\QoS_1,\QoS_2,c_1) $;
				\item $(\roll,\roll)$ if both MNO-1 and MNO-2 experience small and comparable costs, i.e., $ c_1\le {C}^\textit{Roll}_1(\QoS_1,\QoS_2,c_2) $ and $ c_2\le {C}^\textit{Roll}_2(\QoS_1,\QoS_2,c_1) $; 
				\item $(\trad,\roll)$ if MNO-1 experiences a large cost, i.e., $ c_1 > {C}^\textit{Roll}_1(\QoS_1,\QoS_2,c_2) $.
			\end{itemize}
		\item When MNO-1 has a \textbf{negligible} QoS advantage over MNO-2, i.e., $\tilde{\QoS}<\QoS_2\le\QoS_1$, then the equilibrium $\Mechanism^{*}$ (as shown in Fig. \ref{fig: MechanismEQ_3}) is
			\begin{itemize}
				\setlength{\itemsep}{3pt}
				\item $(\roll,\trad)$ if MNO-1 experiences a small cost, i.e., $ c_1 < {C}^\textit{Roll}_1(\QoS_1,\QoS_2,c_2) $;
				\item $\{(\roll,\trad),(\trad,\roll)\}$ if both MNO-1 and MNO-2 experience large and comparable costs, i.e., $ c_1\ge {C}^\textit{Roll}_1(\QoS_1,\QoS_2,c_2) $ and $ c_2\ge {C}^\textit{Roll}_2(\QoS_1,\QoS_2,c_1) $; 
				\item $(\trad,\roll)$ if MNO-2 experiences a small cost, i.e., $ c_2<{C}^\textit{Roll}_2(\QoS_1,\QoS_2,c_1) $.
			\end{itemize}
	\end{enumerate}		
\end{theorem}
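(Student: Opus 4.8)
The plan is to treat Game \ref{Game: Mechanism} as a $2\times2$ game and determine its pure-strategy equilibria by comparing, for each player, its profit under $\roll$ versus $\trad$ while holding the opponent's mechanism fixed. For each of the four mechanism profiles $\Mechanism\in\{(\trad,\trad),(\trad,\roll),(\roll,\trad),(\roll,\roll)\}$ I would first invoke Theorem \ref{Theorem: Equilibrium Regime} to obtain the Stage-II threshold equilibrium $\Threshold^*(\Mechanism)$; since we are in the coexistence case, this is the solution of the implicit system (\ref{Equ: Coexistence}). Substituting $\Threshold^*(\Mechanism)$ into (\ref{Equ: Profit-1 III})--(\ref{Equ: Profit-2 III}) yields the four profit entries of Table \ref{table: MNOs' profits in data mechanisms selection}. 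I would then define the two profit-difference functions
\begin{equation*}
\Delta_1(\mechanism_2)\eq\profit_1(\roll,\mechanism_2)-\profit_1(\trad,\mechanism_2),\quad \Delta_2(\mechanism_1)\eq\profit_2(\mechanism_1,\roll)-\profit_2(\mechanism_1,\trad),
\end{equation*}
whose signs fully encode the best responses: MNO-$n$ adopts $\roll$ exactly when the corresponding $\Delta_n>0$ (ties handling the boundaries), and the equilibria are the intersections of these best responses.

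Next I would locate the structural boundary created by the QoS-flip. Because the data caps are common, $\usage_{\trad}$ and $\usage_{\roll}$ from (\ref{Equ: alpha}) are the same for both MNOs, so under the asymmetric profile $\{\trad,\roll\}$ MNO-2 becomes the ``stronger'' player in the sense of Assumption \ref{Assumption: xi} precisely when $\QoS_2\usage_{\roll}>\QoS_1\usage_{\trad}$. I expect $\tilde{\QoS}=\QoS_1\usage_{\trad}/\usage_{\roll}$ to be exactly this flip threshold: for $\QoS_2\le\tilde{\QoS}$ MNO-1 remains the high-$\QoS\usage$ player across all four profiles and Theorem \ref{Theorem: Equilibrium Regime} applies directly, whereas for $\QoS_2>\tilde{\QoS}$ the profile $(\trad,\roll)$ must be re-analyzed with the two indices swapped. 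The second threshold $\hat{\QoS}$ I would obtain by pinning down the smallest $\QoS_2$ at which the $(\trad,\roll)$ outcome can be sustained as an equilibrium, i.e.\ the $\QoS_2$ value at which MNO-1's incentive to differentiate (play $\trad$ against $\roll$) first turns positive for some admissible cost pair. The cost thresholds $C^\textit{Roll}_1(\QoS_1,\QoS_2,c_2)$ and $C^\textit{Roll}_2(\QoS_1,\QoS_2,c_1)$ are then the unique roots in $c_1$ and $c_2$, respectively, of the relevant indifference equations $\Delta_1=0$ and $\Delta_2=0$; uniqueness follows from monotonicity of each $\Delta_n$ in the relevant cost, which I would establish by implicit differentiation of (\ref{Equ: Coexistence}) together with the IFR hypothesis.

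With these thresholds in hand, the proof reduces to bookkeeping. In each QoS regime I would tabulate the signs of $\Delta_1(\trad)$, $\Delta_1(\roll)$, $\Delta_2(\trad)$, $\Delta_2(\roll)$ as the cost pair $(c_1,c_2)$ ranges over the coexistence region, using (\ref{Equ: alpha flexibility}) to fix the direction of each comparison, and read off the equilibrium from the best-response intersection. Regime~3 ($\tilde{\QoS}<\QoS_2\le\QoS_1$) is where the flipped profile makes both asymmetric outcomes simultaneously stable for large comparable costs, yielding the anti-coordination pair $\{(\roll,\trad),(\trad,\roll)\}$; I would verify this by checking that neither MNO can profitably deviate from either asymmetric profile. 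Finally I would confirm the ordering $\tilde{\QoS}>\hat{\QoS}$ asserted in the statement.

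The main obstacle is that the coexistence profits have no closed form---$\Threshold^*(\Mechanism)$ is defined only implicitly through (\ref{Equ: Coexistence})---so every sign and monotonicity claim about $\Delta_1,\Delta_2$ must be argued by comparative statics rather than direct computation. In particular, establishing that each $\Delta_n$ crosses zero exactly once (so that $C^\textit{Roll}_1,C^\textit{Roll}_2$ are well defined) and that the QoS-flip re-indexing is consistent at the boundary $\QoS_2=\tilde{\QoS}$ will require careful use of the IFR condition and continuity of the implicitly defined equilibrium. This is the delicate heart of the argument; the remaining regime-by-regime case split is mechanical once the monotonicity of the profit differences is secured.
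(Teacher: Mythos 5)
Your overall framework---treating Game \ref{Game: Mechanism} as a $2\times2$ game, eliminating $(\trad,\trad)$ via $\usage_{\roll}>\usage_{\trad}$, and defining $C^\textit{Roll}_1,C^\textit{Roll}_2$ as the roots of the indifference conditions $\profit_1(\trad,\roll)=\profit_1(\roll,\roll)$ (in $c_1$) and $\profit_2(\roll,\trad)=\profit_2(\roll,\roll)$ (in $c_2$)---is exactly the paper's route: its proof consists of Lemma \ref{Proof Lemma: strategy} (that $\profit_1(\roll,\trad)>\profit_1(\trad,\trad)$, so $(\trad,\trad)$ is never an equilibrium) plus these same threshold definitions. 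However, your identification of the QoS thresholds contains a genuine error that would prevent the bookkeeping from reproducing the theorem. You set $\tilde{\QoS}=\QoS_1\usage_{\trad}/\usage_{\roll}$, the QoS-flip point, and claim that for all $\QoS_2\le\tilde{\QoS}$ MNO-1 remains the high-$\QoS\usage$ player in all four profiles. But then your own dominance logic (the Lemma \ref{Proof Lemma: strategy} argument applied with $\mechanism_2=\roll$ held fixed) shows that whenever no flip occurs, switching MNO-1 from $\trad$ to $\roll$ raises $\QoS_1\usage_{\mechanism_1}$, lowers $\xi$, and strictly raises MNO-1's profit, so $\Delta_1(\roll)>0$ throughout the coexistence region and $(\trad,\roll)$ can never be an equilibrium below your $\tilde{\QoS}$. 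This contradicts the statement you are proving: in regime 2, i.e.\ $\hat{\QoS}<\QoS_2\le\tilde{\QoS}$, the theorem asserts that $(\trad,\roll)$ \emph{is} the equilibrium once $c_1>C^\textit{Roll}_1$, and the paper explicitly notes (Section \ref{Subsection: Equilibrium Illustration}) that the QoS-flip already happens at this regime-2 equilibrium. Hence the flip threshold $\QoS_1\usage_{\trad}/\usage_{\roll}$ must sit at or below $\hat{\QoS}$---it is the natural candidate for your $\hat{\QoS}$, matching your own verbal definition of it---and cannot be $\tilde{\QoS}$. With your assignment, regimes 1 and 2 become indistinguishable (neither admits $(\trad,\roll)$), so the three-way structure cannot emerge.

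What actually separates regime 2 from regime 3 in the paper is not the flip but the geometry of the two indifference curves: $\tilde{\QoS}$ is defined as the QoS level at which the region $\{c_1\le C^\textit{Roll}_1(\QoS_1,\QoS_2,c_2)\}\cap\{c_2\le C^\textit{Roll}_2(\QoS_1,\QoS_2,c_1)\}$ supporting $(\roll,\roll)$ degenerates, via the fixed-point condition $C^\textit{Roll}_1(\QoS_1,\tilde{\QoS},C^\textit{Roll}_2(\QoS_1,\tilde{\QoS},c_1))=c_1$; above $\tilde{\QoS}$ the curves instead overlap so that $c_1\ge C^\textit{Roll}_1$ and $c_2\ge C^\textit{Roll}_2$ can hold simultaneously, which is precisely what sustains the anti-coordination pair $\{(\roll,\trad),(\trad,\roll)\}$. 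To repair your proof, keep the $\Delta_n$ machinery, place the flip threshold at the regime-1/regime-2 boundary, and derive $\tilde{\QoS}$ from the relative position of the curves $c_1=C^\textit{Roll}_1(c_2)$ and $c_2=C^\textit{Roll}_2(c_1)$ inside the coexistence region. One further caveat: the monotonicity and single-crossing properties you call the delicate heart of the argument are not actually established in the paper either---it concedes there is no closed form and computes $C^\textit{Roll}_1$, $C^\textit{Roll}_2$, $\hat{\QoS}$, $\tilde{\QoS}$ numerically---so on that point your plan is more demanding than the published argument rather than missing something it contains.
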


So far we have completely characterized the data mechanism equilibrium $\Mechanism^{*}$ of Game \ref{Game: Mechanism} in Theorems \ref{Theorem: Mechanism market partition} and \ref{Theorem: mechanism equilibrium}.
Next we will use the uniformly distributed market to illustrate the equilibrium $\Mechanism^*$.

\subsection{Equilibrium Illustration}\label{Subsection: Equilibrium Illustration}
Fig. \ref{fig: Mechanism structure} visualizes the equilibrium $\Mechanism^*$ under a uniform distribution of $\val$. 
Specifically, the three sub-figures in Fig. \ref{fig: Mechanism structure} corresponds to different levels of MNO-1's QoS advantage, i.e., large for Fig. \ref{fig: MechanismEQ_1}, small for Fig. \ref{fig: MechanismEQ_2}, and negligible for Fig. \ref{fig: MechanismEQ_3}.
In each sub-figure, the horizontal and vertical axises correspond to MNO-2's cost $c_2$ and MNO-1's cost $c_1$, respectively.
We label the corresponding equilibrium $\Mechanism^*$ on the $(c_2,c_1)$ plane in each sub-figure.

\textbf{MNO-1 Surviving:} 
In each sub-figure of Fig. \ref{fig: Mechanism structure}, the blue region marked by $(\roll,\na)$ represents that MNO-1 becomes a monopoly in the market and leaves MNO-2 a zero market share no matter which data mechanism MNO-2 adopts.
This is because that MNO-1 has enough cost advantage, i.e., $c_1<C^\textit{Single}_1(\QoS_1,\QoS_2,c_2)$.
In this case, there are two equilibriums, i.e., $(\roll,\trad)$ and $(\roll,\roll)$.

\textbf{MNO-2 Surviving:} 
In each sub-figure of Fig. \ref{fig: Mechanism structure}, the red region marked by $(\na,\roll)$ represents that MNO-2 becomes a monopoly in the market and leaves MNO-1 zero market share no matter which data mechanism MNO-1 adopts, since MNO-2 has a large cost advantage, i.e., $c_2<C^\textit{Single}_2(\QoS_1,\QoS_2,c_1)$.
Similarly, there are two equilibriums, i.e., $(\trad,\roll)$ and $(\roll,\roll)$.

\textbf{Coexistence:}
When two MNOs' costs are comparable, i.e., $c_1\ge C^\textit{Single}_1(\QoS_1,\QoS_2,c_2)$ and $c_2\ge C^\textit{Single}_2(\QoS_1,\QoS_2,c_1)$, they will share the market.
In this case, the equilibrium structure also depends on their QoS difference:
\begin{itemize}
	\item Fig. \ref{fig: MechanismEQ_1}: MNO-1 has a \textbf{large} QoS advantage over MNO-2, i.e., $0<\QoS_2\le\hat{\QoS}$.
			According to the arrow in Fig. \ref{fig: MechanismEQ_1}, the equilibrium of Game \ref{Game: Mechanism} gradually changes from $(\roll,\trad)$ to $(\roll,\roll)$ as MNO-1's cost $c_1$ increases or MNO-2's cost $c_2$ decreases.
			That is, the large QoS advantage enables MNO-1 to adopt rollover mechanism $\roll$ all the time, but the low-QoS MNO-2 has the opportunity of upgrading to the rollover mechanism $\roll$ as its cost advantage increases which compensates its QoS disadvantage.
	\item Fig. \ref{fig: MechanismEQ_2}: MNO-1 has a \textbf{small} QoS advantage over MNO-2, i.e., $\hat{\QoS}<\QoS_2\le\tilde{\QoS}$.
			Based on the arrows in Fig. \ref{fig: MechanismEQ_2}, as MNO-1's cost $c_1$ increases or MNO-2's cost $c_2$ decreases, the corresponding equilibrium changes according to the order: $(\roll,\trad)$ $\rightarrow$ $(\roll,\roll)$ $\rightarrow$ $(\trad,\roll)$.
			Different from the case of Fig. \ref{fig: MechanismEQ_1}, in Fig. \ref{fig: MechanismEQ_2} the QoS-flip phenomenon happens at the equilibrium $(\trad,\roll)$, since MNO-1's QoS advantage is not large enough and experiences a large cost. 
	\item Fig. \ref{fig: MechanismEQ_3}: MNO-1 has a \textbf{negligible} QoS advantage over MNO-2, i.e., $\tilde{\QoS}<\QoS_2\le\QoS_1$.	 
			As MNO-1's cost $c_1$ increases or MNO-2's cost $c_2$ decreases, i.e., the arrows in Fig. \ref{fig: MechanismEQ_3}, the corresponding equilibrium changes according to the order: $(\roll,\trad)$ $\rightarrow$ $\{(\roll,\trad),(\trad,\roll)\}$ $\rightarrow$ $(\trad,\roll)$.
			We note that the symmetric equilibriums $\{(\roll,\trad),(\trad,\roll)\}$ arise between $(\roll,\trad)$ and $(\trad,\roll)$, instead of the $(\roll,\roll)$ in Fig. \ref{fig: MechanismEQ_2}.
			This is because that the negligible QoS advantage makes the two MNOs more homogeneous and the head-to-head market competition will reduce the profits of both MNOs.
			Similar to the anti-coordination game (e.g., the hawk-dove game), no matter who adopts the rollover mechanism $\roll$, it is a best choice for the competitor to choose the traditional mechanism $\trad$.
\end{itemize}


So far, we have finished the analysis for the three-stage competition model, and revealed that $\Mechanism=\{\roll,\roll\}$ is \textbf{not} always the equilibrium in a competitive market, which is consistent with our practical observations mentioned in Section  \ref{Subsection: Background and Motivation}.

\section{Numerical Results\label{Section: Numerical Results}}
Next we will simulate the MNOs' data mechanism equilibriums based on empirical data in Section \ref{Subsubsection: Data Mechanism Selection in Stage I} and evaluate the effects of rollover mechanism and market competition in Section \ref{Subsection: Impact of Rollover Mechanism}.
Before that, let us first introduce our simulation setting for mobile users and MNOs.

For mobile users, we adopt the empirical results from the previous literatures to model the monthly data demand, data valuation, and network substitutability. 
Specifically, we follow the data analysis results in \cite{lambrecht2007does} and assume that users' monthly data demand follows a truncated log-normal distribution with mean $\dmean=10^3$ on the interval $[0,10^4]$, i.e., the mean value is $\dmean=1$GB and the maximal usage is $D=10$GB.
Moreover, we adopt the empirical study in \cite{Zhiyuan2018TMC} by assuming that $\val$ follows a Gamma distribution with the shape parameter $k=4.5$ and the scale parameter $r=0.11$, and assuming  $\cut=0.8$.

For the MNOs, we assume  that both MNOs offer a 1GB data plan, i.e., $\dcap_1=\dcap_2=1$GB, and MNO-1 provides a better QoS than MNO-2.
For simplicity, we normalize MNO-1's QoS, i.e., $\QoS_1=1$, and consider three cases where MNO-2 provides different QoS, i.e., $\QoS_2\in\{0.91,0.95,0.99\}$.
Mathematically, $\QoS_2=0.91$ corresponds to the large QoS advantage case (mentioned in Theorem \ref{Theorem: mechanism equilibrium}).
The choices of $\QoS_2=0.95$ and $\QoS_2=0.99$ correspond to the small QoS advantage case and the negligible QoS advantage case, respectively.
Due to space limit, we will only show the results of  $\QoS_2=0.91$ in the main paper and refer interested readers to Appendix \ref{Appendix: Data Mechanism Equilibrium} for the results of $\QoS_2=\{0.95,0.99\}$.

\subsection{Data Mechanism Equilibrium \label{Subsubsection: Data Mechanism Selection in Stage I}}
We illustrate the data mechanism equilibrium $\Mechanism^*$ in Game \ref{Game: Mechanism} by varying MNO-1's cost $c_1$, given MNO-2's cost $c_2=40$ RMB/GB.
\begin{figure}  
	\centering
	\setlength{\abovecaptionskip}{0pt}
	\subfigure[Data mechanism equilibrium $\Mechanism^{*}$.]{\label{fig: EQ_mechanism_3}{\includegraphics[width=1\linewidth]{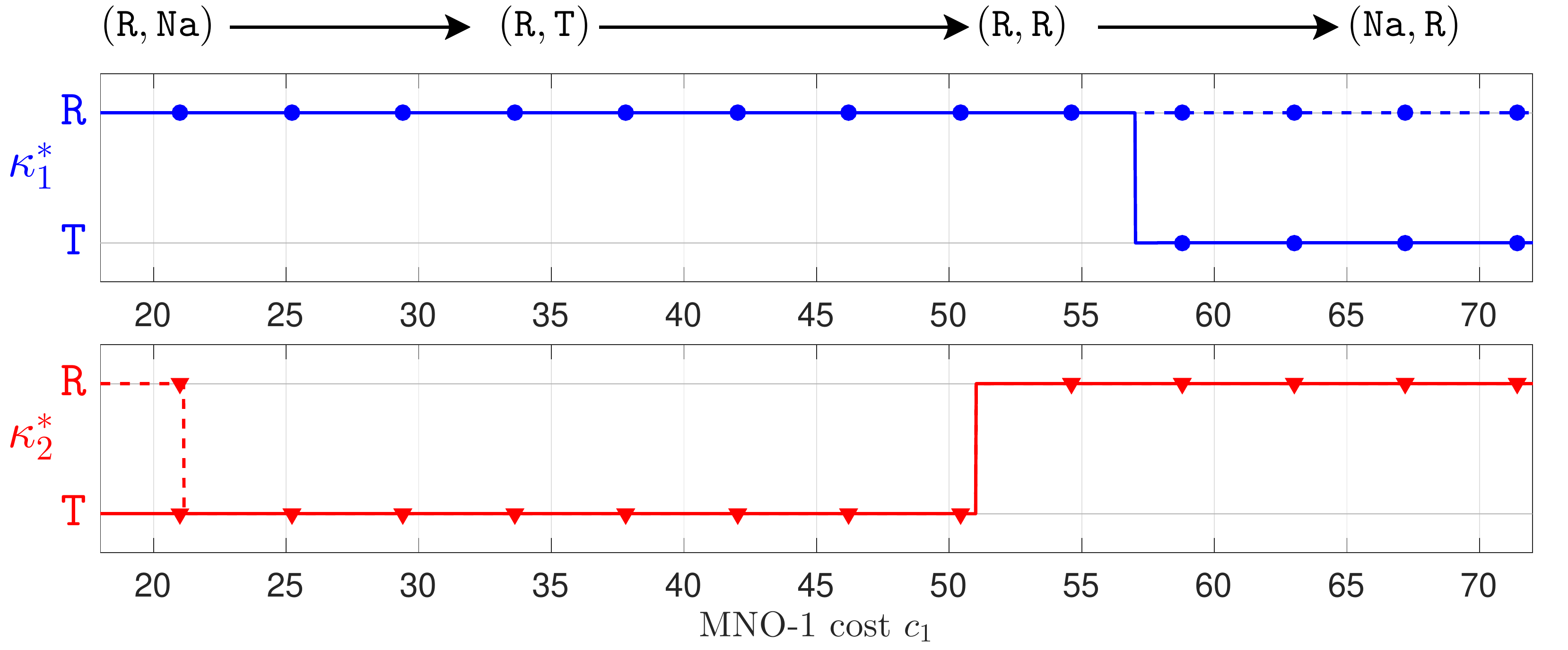}}}
	\subfigure[MNOs' profits under $\Mechanism^{*}$.]{\label{fig: EQ_mechanism_3_Profit}{\includegraphics[width=1\linewidth]{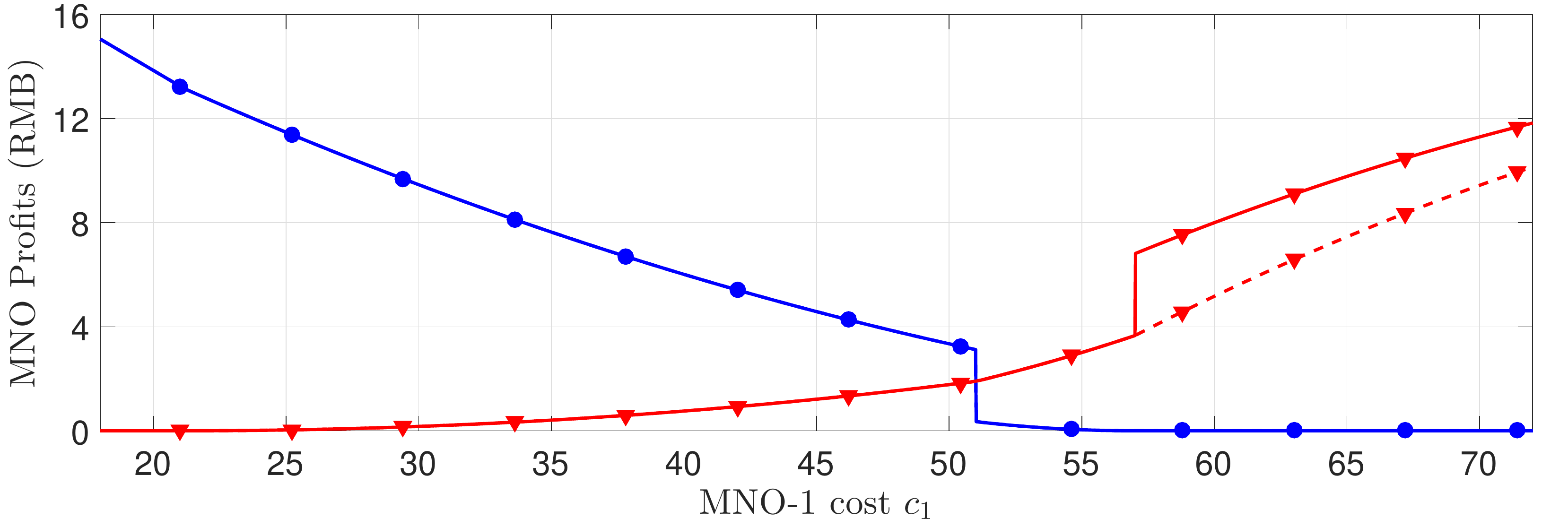}}} 
	\caption{Game \ref{Game: Mechanism} equilibrium when $\QoS_2=0.91$.}
	\label{fig: EQ rho=0.91}
\end{figure}

In Fig. \ref{fig: EQ_mechanism_3}, we plot the data mechanism equilibrium $\Mechanism^{*}$ versus MNO-1's cost $c_1$, and label $\Mechanism^{*}$ on the top of the figure.
Specifically, the blue circle lines represent MNO-1's data mechanism $\mechanism_1^{*}$, the red triangle lines represents MNO-2's data mechanism $\mechanism_2^{*}$.
Moreover, we use two line styles (i.e., solid and dash) when there are two equilibriums.

Fig. \ref{fig: EQ_mechanism_3_Profit} shows the two MNOs' profits under the equilibrium $\Mechanism^{*}$.
Specifically, the blue circle curves and red triangle curves represent MNO-1's profit $\profit_1(\Mechanism^{*})$ and MNO-2's profit $\profit_2(\Mechanism^{*})$, respectively.
The solid (dash, respectively) curves  in Fig. \ref{fig: EQ_mechanism_3_Profit} correspond to the equilibriums plotted by the solid (dash, respectively) lines in Fig. \ref{fig: EQ_mechanism_3}.
Overall, as MNO-1's cost $c_1$ increases, its profit (i.e., the blue circle curve) eventually decreases to zero, while MNO-2's profit (i.e., the red circle curves) will increase.
Moreover, MNO-1 experiences a significant profit drop when the equilibrium changes from $(\roll,\trad)$ to $(\roll,\roll)$ at $c_1=51$ RMB/GB.
In addition, the following observations validate Lemma \ref{Lemma: Mechanism equilibrium MNO-1} and Lemma \ref{Lemma: Mechanism equilibrium MNO-2}:
\begin{itemize}
	\item When MNO-1 obtains zero market share ($c_1>53$ RMB/GB), the two equilibriums $(\trad,\roll)$ and $(\roll,\roll)$ lead to different profits for MNO-2, i.e., $\profit_2(\roll,\roll)\le\profit_2(\trad,\roll)$, which shows that the high-QoS MNO-1 may reduce MNO-2's profit by choosing the rollover mechanism $\roll$, even though it obtains zero market share. 
	\item When MNO-2 obtains zero market share ($c_1<23$ RMB/GB), the blue solid curve and the blue dash curve overlap, which means that the two equilibriums $(\roll,\trad)$ and $(\roll,\roll)$ lead to the same profit for MNO-1, i.e., $\profit_1(\roll,\trad) = \profit_1(\roll,\roll)$.
\end{itemize}

\begin{figure}  
	\centering
		\setlength{\abovecaptionskip}{0pt}
		\setlength{\belowcaptionskip}{0pt}
	\subfigure[MNO-1]{\label{fig: ProfitCompare_3_MNO1}{\includegraphics[width=0.49\linewidth]{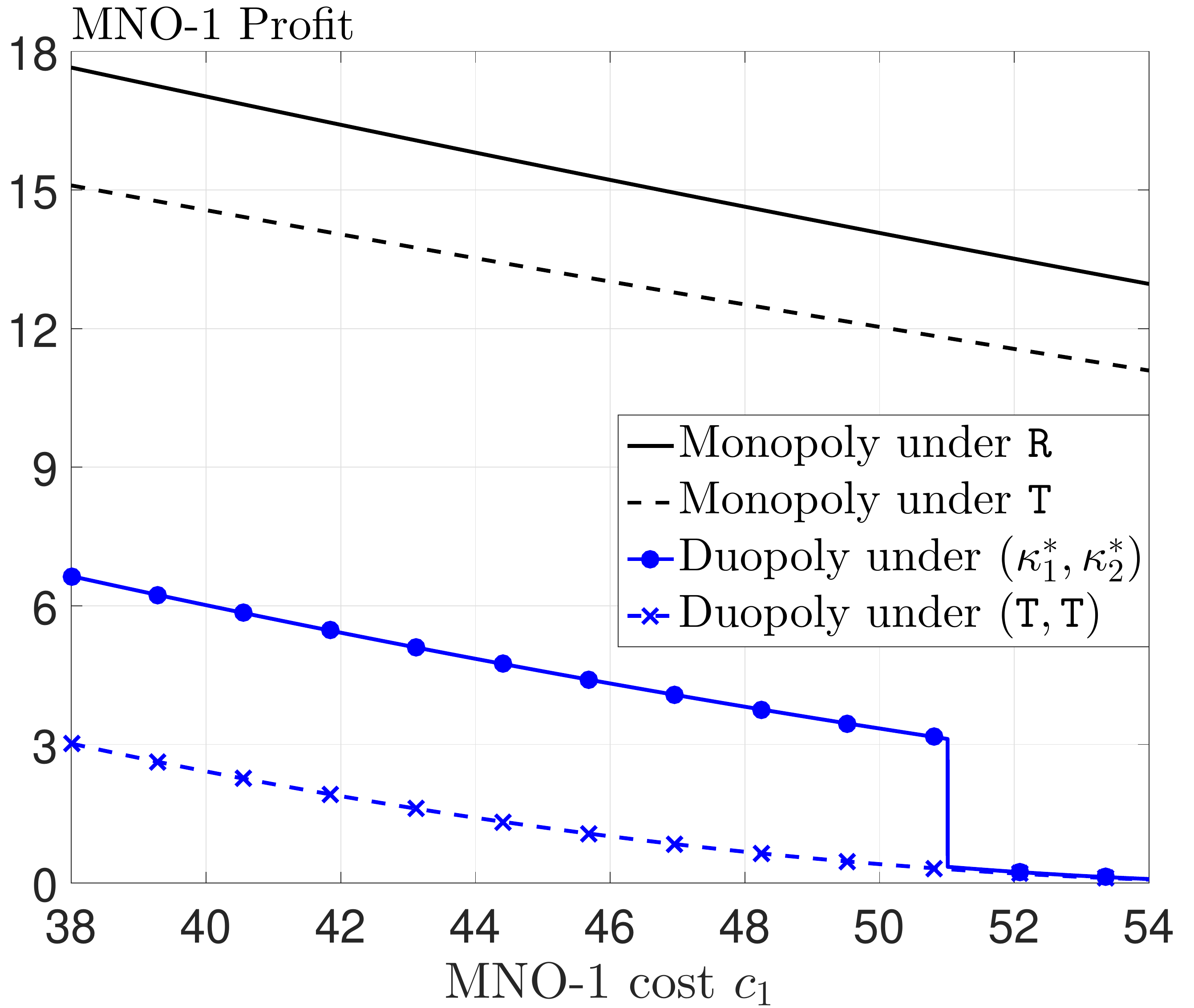}}}
	\subfigure[MNO-2]{\label{fig: ProfitCompare_3_MNO2}{\includegraphics[width=0.49\linewidth]{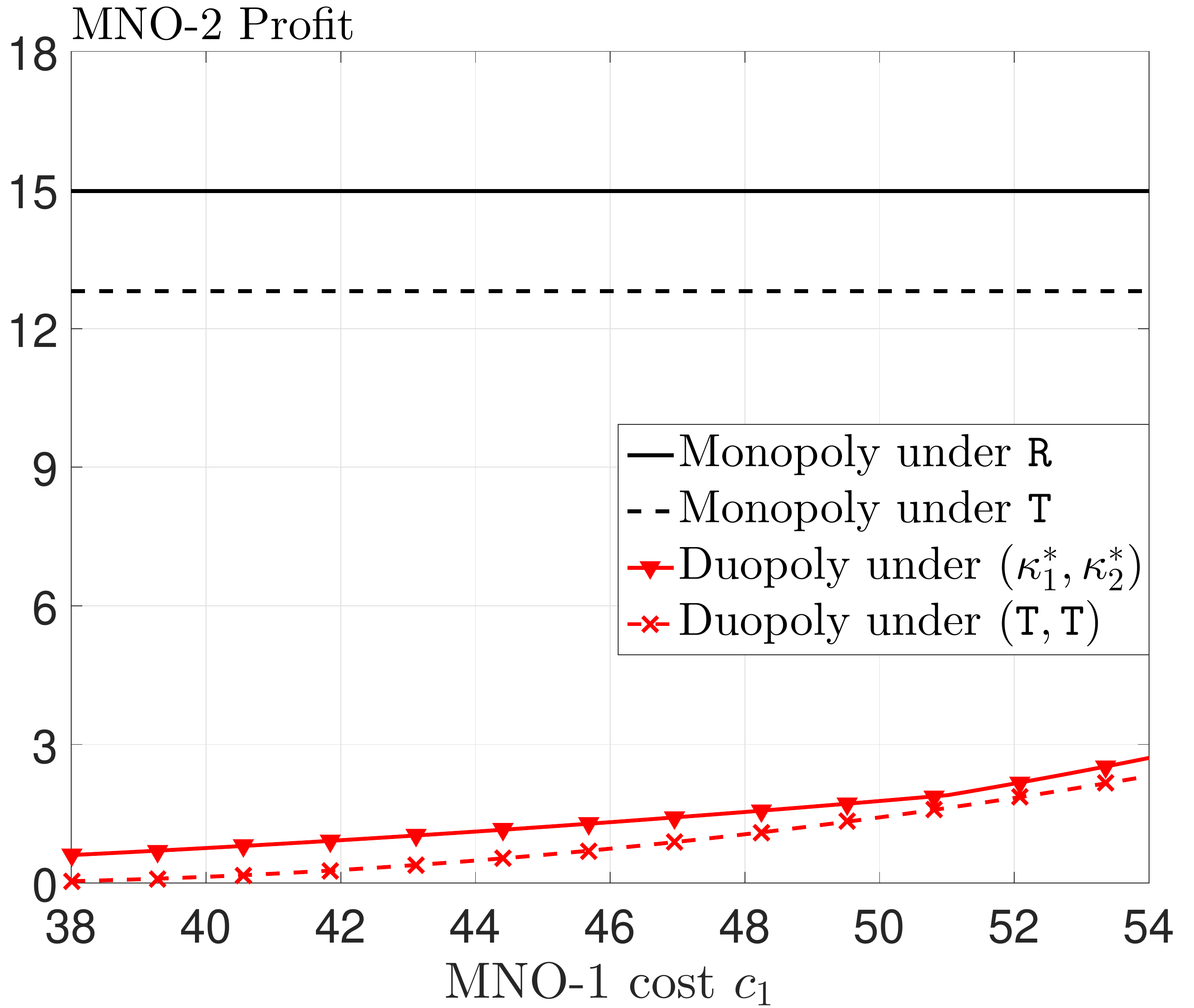}}} 
	\caption{Impact of rollover mechanism on duopoly market.}
	\label{fig: Impact of rollover mechanism}
\end{figure}

\subsection{Impact of Rollover Mechanism and Competition}\label{Subsection: Impact of Rollover Mechanism}
Now we evaluate the impact of the rollover mechanism and the market competition on the MNOs' profits.

Fig. \ref{fig: ProfitCompare_3_MNO1} plots MNO-1's profit versus its cost $c_1$ in four scenarios.
Specifically, the two black curves without markers correspond to MNO-1's monopoly market under data mechanism $\trad$ and $\roll$.
The blue circle curve represents MNO-1's profit in the duopoly market under the equilibrium $\Mechanism^*$ shown in Fig. \ref{fig: EQ_mechanism_3}.
Essentially, the blue circle curve is the same as that in Fig. \ref{fig: EQ_mechanism_3_Profit}.
The blue cross curve corresponds to MNO-1's profit in the duopoly market under fixed data mechanisms $(\trad,\trad)$.
In this case, the MNOs only compete on price (but not on the data mechanism choice).
By comparing the black curves with the blue curves, we find that the market competition significantly reduces MNO-1's profit.
By comparing the two blue curves with markers, we find that the rollover mechanism significantly  increases MNO-1's profit $274\%$ (on average) in the market competition.

Fig. \ref{fig: ProfitCompare_3_MNO2} plots MNO-2's profit versus the cost $c_1$ in similar scenarios.
Specifically, the two black curves without markers correspond to MNO-2's monopoly market.
The two red curves  correspond to the cases in the duopoly market. 
The red triangle curve is the same as that in Fig. \ref{fig: EQ_mechanism_3_Profit}.
The red cross curve corresponds to MNO-2's profit in the duopoly market under fixed data mechanism $(\trad,\trad)$.
By comparing the black curves with the red curves, we find that the market competition significantly reduces MNO-2's profit.
The profit decrement of MNO-2 is larger than that of MNO-1, since MNO-1 has the QoS advantage and attracts high-valuation users.
By comparing the two red curves with markers, we find that the rollover mechanism increases MNO-1's profit $188\%$ (on average) in the market competition.

The above findings also hold for the small QoS advantage and negligible QoS advantage cases.
Please refer to Appendix \ref{Appendix: Data Mechanism Equilibrium} for more details.

\section{Conclusions and Future Work\label{Section: Conclusions and Future Work}}
In this paper, we studied the duopoly competition in the telecommunication market in terms of the MNOs' rollover mechanism adoption and the pricing decisions.
Different from the monopoly market, where the MNO always increase its profit by adopting the rollover mechanism, the data mechanism equilibrium in the duopoly market is much more complicated. 
Roughly speaking, the high-QoS MNO would gradually abandon the rollover mechanism as its QoS advantage diminishes (due to its increasing cost or the competitor's decreasing cost).

In the future, we will extend the results of this paper in the following aspects.
First, we would like to collaborate with MNOs and extend the current analysis by investigating real world data.
Second, we will consider a more realistic oligopoly market and analyze the competition among multiple MNOs.
We provide some preliminary results along this direction in Appendix \ref{Appendix: Oligopoly Market}.
Third, we will study the MNOs' sequential data mechanism adoptions by following the example in \cite{duan2015economic}.

\bibliographystyle{IEEEtran}
\bibliography{ref}

\vspace{-30pt}
\begin{IEEEbiography}[{	\includegraphics[width=1in,height=1.25in,clip,keepaspectratio]{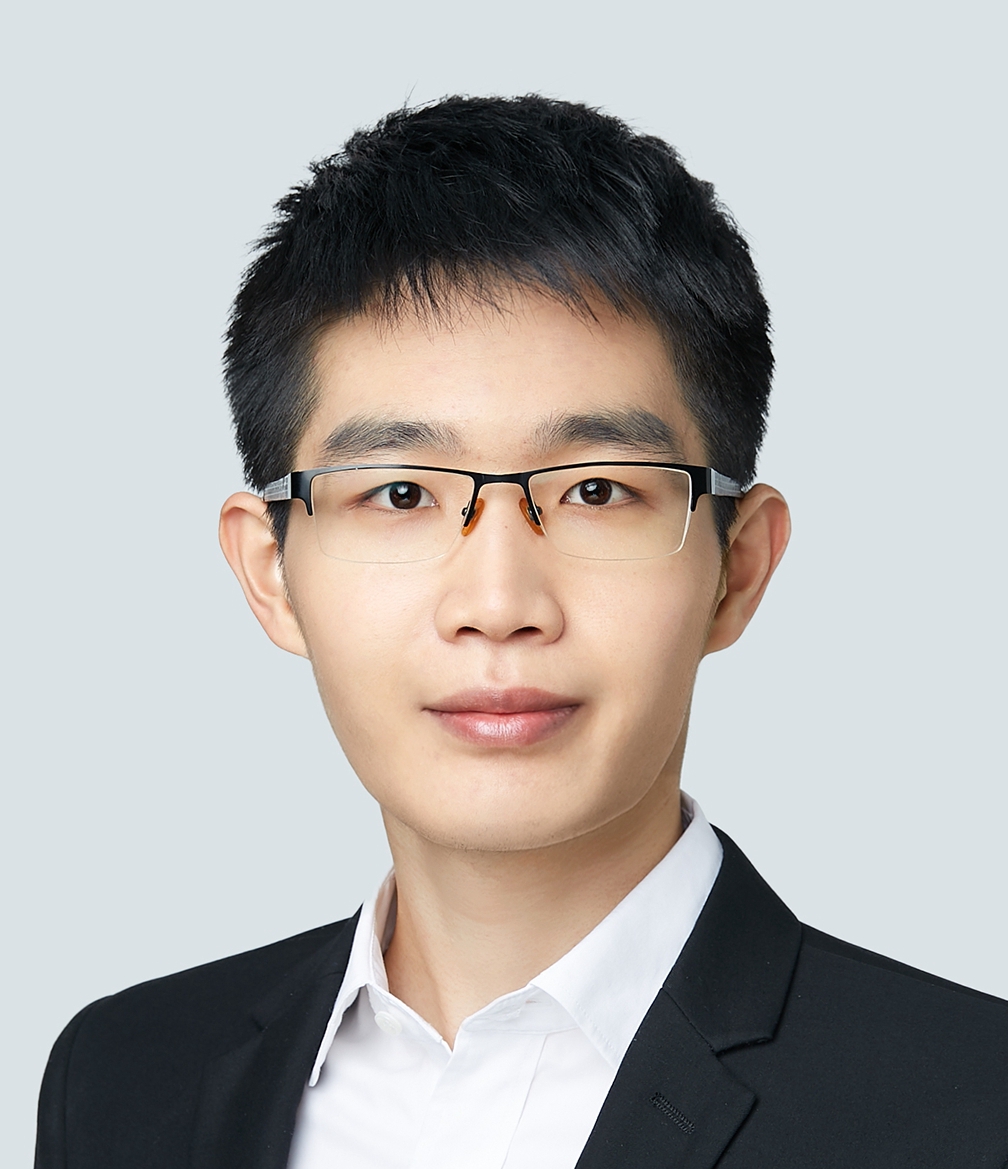}	}]{Zhiyuan Wang}
	received the B.S. degree from Southeast University, Nanjing, China, in 2016.
	He is currently working toward the Ph.D. degree with the Department of Information Engineering, The Chinese University of Hong Kong, Shatin, Hong Kong.  
	His research interests include the field of network economics and game theory, with current emphasis on smart data pricing and fog computing. 
	He is the recipient of the Hong Kong PhD Fellowship.
\end{IEEEbiography}

\vspace{-30pt}
\begin{IEEEbiography}[{	\includegraphics[width=1in,height=1.25in,clip,keepaspectratio]{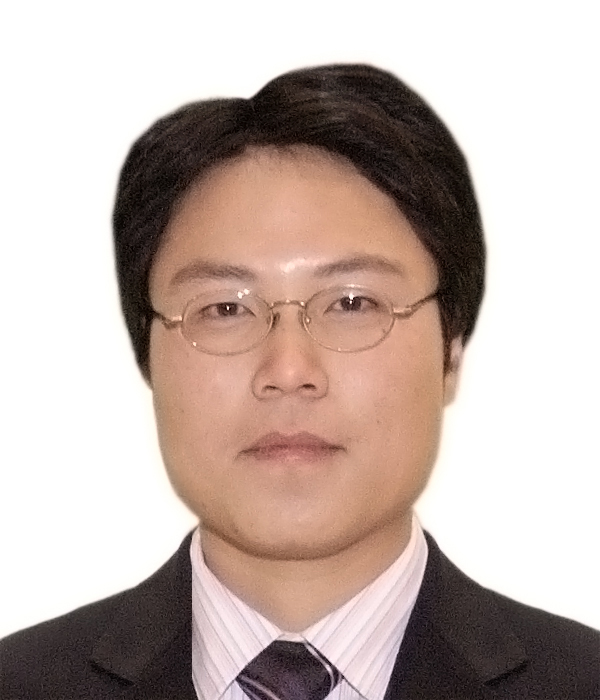}	}]	{Lin Gao}
	(S'08-M'10-SM'16) is an Associate Professor with the School of Electronic and Information Engineering, Harbin Institute of Technology, 	Shenzhen, China. He received the Ph.D. degree in Electronic Engineering from Shanghai Jiao Tong University in 2010. His main research
	interests are in the area of network economics 	and games, with applications in wireless communications 	and networking. 
	He received the	IEEE ComSoc Asia-Pacific Outstanding Young 	Researcher Award in 2016.
\end{IEEEbiography}

\vspace{-25pt}
\begin{IEEEbiography}
	[{	\includegraphics[width=1in,height=1.25in,clip,keepaspectratio]{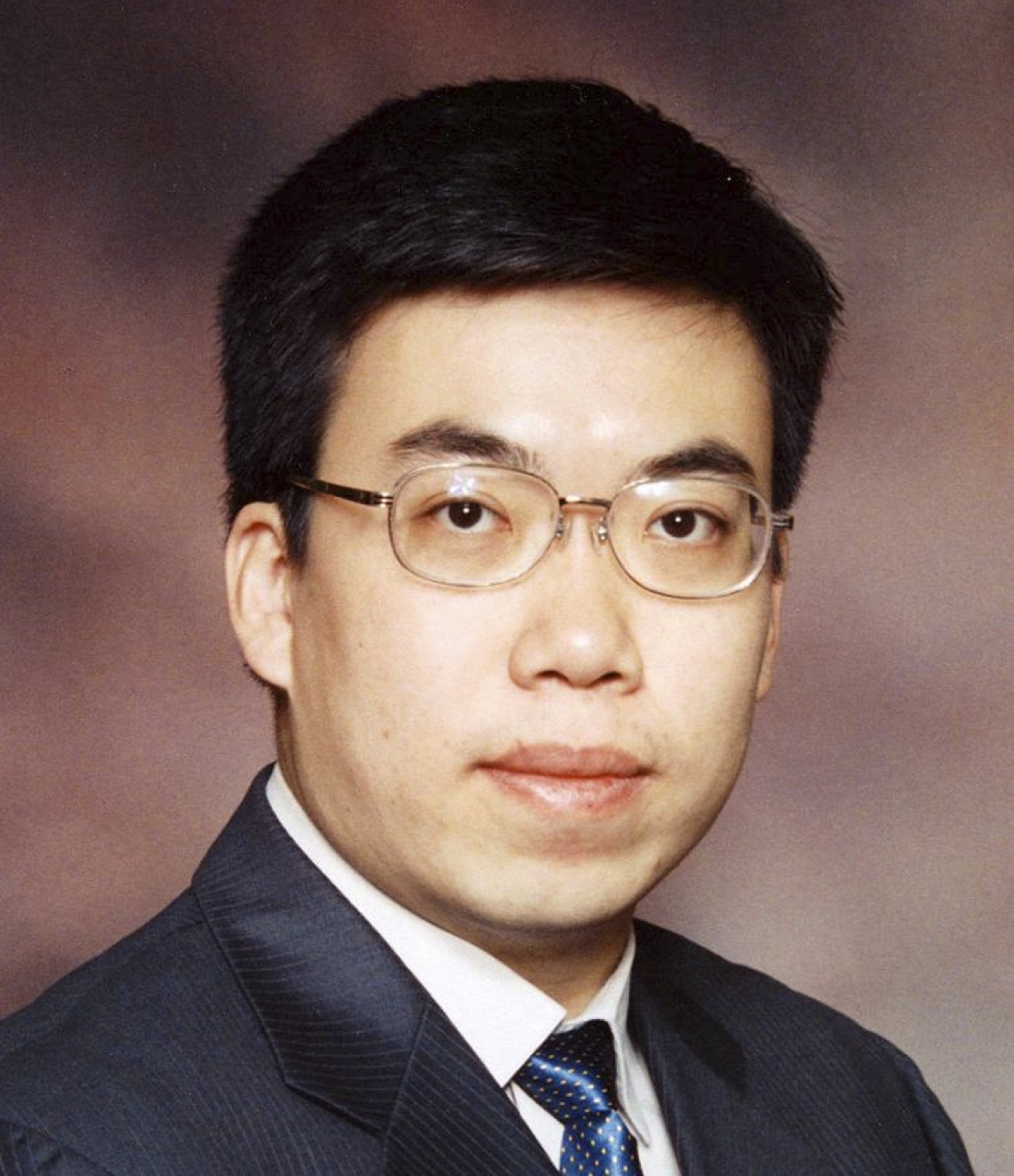}	}]	{Jianwei Huang}
	(F'16) is a Presidential Chair Professor and Associate Dean of the School of Science and Engineering, The Chinese University of Hong Kong, Shenzhen. 
	He is also a Professor in the 	Department of Information Engineering at The 	Chinese University of Hong Kong. 
	He is the
	co-author of 9 Best Paper Awards, including 	IEEE Marconi Prize Paper Award in Wireless
	Communications 2011. He has co-authored six 	books, including the textbook on ``Wireless Network
	Pricing''. He has served as the Chair of 	IEEE Technical Committee on Cognitive Networks and Technical Committee on Multimedia Communications. 
	He has been  an IEEE ComSoc 	Distinguished Lecturer and a Thomson Reuters 	Highly Cited Researcher.
\end{IEEEbiography}

\newpage
\appendices

\section{Oligopoly Market\label{Appendix: Oligopoly Market}}
In this section, we consider the competitive market with $N$ MNOs, denoted by $\mathcal{N}=\{1,2,...,N\}$.
Each user makes his subscription decision given the $N$ MNOs' pricing strategies $\pricing=\{s_n, \forall n\in \mathcal{N}\}$ and the data mechanisms $\Mechanism=\{\mechanism_n, \forall n\in \mathcal{N}\}$.
Without loss of generality, we suppose that $\QoS_1\usage_{\mechanism_1}>\QoS_2\usage_{\mechanism_2}>...>\QoS_N\usage_N$.
The type-$\val$ user will subscribe to MNO-$n$, denoted by $\val\in\subscription_n$, if and only if MNO-$n$ brings him an non-negative and the largest payoff among the $N$ MNOs, i.e.,
\begin{equation}
\left\{
\begin{aligned}
& \payoffexp_n(\plan_n,\val)\ge0, \\
& \payoffexp_n(\plan_n,\val)\ge\payoffexp_m(\plan_m,\val),\forall\ m\ne n.
\end{aligned}
\right.
\end{equation}

To facilitate later discussion on the market partition among the $N$ MNOs, we follow (\ref{Equ: xi}) and define $\xi_n^m$ as
\begin{equation}
\xi_n^m\eq\frac{\QoS_m\usage_{\mechanism_m}}{\QoS_n\usage_{\mechanism_n}},\ \forall\ n,m\in\mathcal{N}.
\end{equation}
We further express the neutral user type between MNO-$n$ and MNO-$m$, denoted by $\valeq_n^m$, as follows 
\begin{equation} \label{Equ: valeq n m}
\valeq_n^m(\valthr_n,\valthr_m)=\frac{\valthr_n-\xi_n^m\cdot\valthr_m}{1-\xi_n^m}.
\end{equation}

The market partition for $N$ competitive MNOs are much more complicated compared with the duopoly case.
Recall that there three market partitions in the duopoly market as discussed in Theorem \ref{Theorem: Market Partition}.
In the oligopoly case, however, there are much more competition outcomes in terms of which MNO would obtain a zero market share.
Therefore, we cannot enumerate all of the outcomes one by one.
Nevertheless, we summarize the coexistence outcome in Theorem \ref{Theorem: Partition N}.

\begin{theorem}[Coexistence of $N$ MNOs]\label{Theorem: Partition N}
	Consider  the $N$ MNOs' threshold user types $\Threshold=\{\valthr_n, \forall n\in \mathcal{N}\}$ under the data mechanism selection  $\Mechanism=\{\mechanism_n,\forall n\in\mathcal{N}\}$ and the pricing strategy $\pricing=\{s_n,\forall n\in\mathcal{N}\}$. 
	All of the MNOs obtain strictly positive market share as following
	\begin{equation} 
	\left\{
	\begin{aligned}
	& \subscription_1=[\valeq_1^2,\valmax],\\
	& \subscription_n=[\valeq_n^{n+1},\valeq_{n-1}^n], \forall\ n=2,3,...,N-1,\\
	& \subscription_N=[\valthr_N,\valeq_{N-1}^N],
	\end{aligned}
	\right.
	\end{equation}
	if and only if the corresponding threshold user types $\valthr_n$ ($n\in\mathcal{N}$) satisfies
	{\small \begin{equation} 
		\left\{
		\begin{aligned}
		& 0\le \valthr_N<\valthr_{N-1}, \\
		& (1-\xi_{n-1}^{n+1})\valthr_n
		< (1-\xi_{n}^{n+1})\valthr_{n-1} + (\xi_n^{n+1}-\xi_{n-1}^{n+1})\valthr_{n+1},\\
		&  \qquad\qquad\qquad\qquad\qquad\qquad\qquad\quad \forall\ n=2,3,...,N-1,\\
		& (1-\xi_1^2)\valmax+\xi_1^2\valthr_2 > \valthr_1 .
		\end{aligned}
		\right.
		\end{equation}}
\end{theorem}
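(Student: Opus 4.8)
The plan is to exploit the fact that, by (\ref{Equ: system model payoffexp a}) and Definition \ref{Definition: Threshold User Type}, every MNO's expected payoff is affine in the user type: $\payoffexp_n(\mechanism_n,s_n,\val)=\QoS_n\usage_{\mechanism_n}(\val-\valthr_n)$, a line through $(\valthr_n,0)$ with slope $k_n\eq\QoS_n\usage_{\mechanism_n}$. By the ordering hypothesis $\QoS_1\usage_{\mechanism_1}>\dots>\QoS_N\usage_N$ these slopes are strictly decreasing, $k_1>k_2>\dots>k_N>0$, and the neutral type $\valeq_n^m$ of (\ref{Equ: valeq n m}) is precisely the abscissa where lines $n$ and $m$ cross. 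A type-$\val$ user subscribes to the MNO attaining the pointwise maximum of these lines whenever that maximum is non-negative; hence the coexistence outcome is equivalent to the assertion that all $N$ lines appear on the non-negative part of their upper envelope, in the slope order $N,N-1,\dots,1$ from left to right.

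First I would translate ``line $n$ appears on the envelope'' into local conditions. For the two extreme MNOs this is a non-emptiness check: MNO-1 occupies $[\valeq_1^2,\valmax]$, which is non-degenerate iff $\valeq_1^2<\valmax$, and substituting (\ref{Equ: valeq n m}) shows this is exactly $(1-\xi_1^2)\valmax+\xi_1^2\valthr_2>\valthr_1$, the last displayed inequality; similarly $\subscription_N=[\valthr_N,\valeq_{N-1}^N]$ is non-degenerate iff $\valeq_{N-1}^N-\valthr_N=(\valthr_{N-1}-\valthr_N)/(1-\xi_{N-1}^N)>0$, i.e.\ $\valthr_N<\valthr_{N-1}$, while $\valthr_N\ge0$ keeps the threshold in range, together giving the first displayed inequality. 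For an interior index $n$, line $n$ survives on the envelope iff the payoff of MNO-$n$ at the type $\valeq_{n-1}^{n+1}$ strictly exceeds the common payoff of MNO-$(n-1)$ and MNO-$(n+1)$ there. Evaluating both sides with the line form $k_n(\val-\valthr_n)$ and clearing the positive factor $1-\xi_{n-1}^{n+1}$ reduces this, after routine algebra, exactly to the middle inequality $(1-\xi_{n-1}^{n+1})\valthr_n<(1-\xi_n^{n+1})\valthr_{n-1}+(\xi_n^{n+1}-\xi_{n-1}^{n+1})\valthr_{n+1}$.

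The crux of the argument, and the step I expect to be the main obstacle, is upgrading these local (consecutive-triple) conditions to the global claim that on each $\subscription_n$ MNO-$n$ beats \emph{every} other MNO, not merely its two neighbours. Here I would invoke convexity of the upper envelope: since the lines are already indexed by decreasing slope, $\max_n\payoffexp_n$ is a convex piecewise-linear function, and for objects ordered by slope the global convexity of its boundary is equivalent to local convexity at each consecutive triple (equivalently, the points $(k_n,k_n\valthr_n)$ encoding the lines lie on a common lower convex hull iff every consecutive triple turns the correct way). I would make this precise by showing, by induction on $n$, that the interior conditions force the crossover abscissae to be strictly increasing, $\valthr_N<\valeq_{N-1}^N<\valeq_{N-2}^{N-1}<\dots<\valeq_1^2<\valmax$; monotone crossovers together with the slope order then guarantee that line $n$ is the pointwise maximum throughout $[\valeq_n^{n+1},\valeq_{n-1}^n]$, yielding the claimed intervals.

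Finally I would close the non-negativity and the converse. On $[\valthr_N,\valmax]$ the envelope dominates line $N$, which is positive there, so every subscriber in the partition earns a non-negative payoff; for $\val<\valthr_N$ the relation $\valthr_N\le\valthr_m$ for all $m$ (a by-product of MNO-$N$ being the maximiser at $\valthr_N$) makes all payoffs negative, so no user outside $[\valthr_N,\valmax]$ subscribes, confirming the partition covers exactly the intended types. For necessity I would reverse each equivalence: if any of the three displayed conditions fails, the corresponding interval becomes empty or the corresponding line drops below the envelope, so that MNO obtains zero market share, contradicting coexistence. Assembling the two directions yields the stated ``if and only if''.
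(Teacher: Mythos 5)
Your proposal is correct, but note that the paper itself never proves this statement: Theorem \ref{Theorem: Partition N} appears in Appendix \ref{Appendix: Oligopoly Market} as a preliminary result, stated without proof. So the only in-paper point of comparison is the duopoly analogue, Theorem \ref{Theorem: Market Partition}, whose proof in Appendix \ref{Appendix: Partition BRs} uses the same geometric picture you start from (affine payoffs $\payoffexp_n=\QoS_n\usage_{\mechanism_n}(\val-\valthr_n)$ and crossing conditions on the neutral type), but for $N=2$ only pairwise comparisons are needed, so the paper never has to confront the step you correctly identify as the crux: showing that the consecutive-triple conditions imply that MNO-$n$ beats \emph{every} rival on $[\valeq_n^{n+1},\valeq_{n-1}^n]$, not just MNO-$(n-1)$ and MNO-$(n+1)$. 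Your handling of that step is sound: the middle displayed inequality is (as your algebra shows, and as I verified) equivalent to $\valeq_n^{n+1}<\valeq_{n-1}^{n+1}$, hence by the three-line lemma to $\valeq_n^{n+1}<\valeq_{n-1}^n$; once the crossover abscissae form the increasing chain $\valthr_N<\valeq_{N-1}^N<\dots<\valeq_1^2<\valmax$, domination over non-adjacent MNOs follows by chaining the pairwise inequalities ($\ell_m\le\ell_{m+1}\le\dots\le\ell_n$ on the interval for $m<n$, and symmetrically for $m>n$), which is exactly the content of your convex-envelope/lower-hull invocation. Your treatment of the boundary items is also right: $\valeq_{N-1}^N-\valthr_N=(\valthr_{N-1}-\valthr_N)/(1-\xi_{N-1}^N)$ handles the first condition, $\valeq_1^2<\valmax$ rearranges to the third, and the observation that MNO-$N$ maximizing at $\valthr_N$ forces $\valthr_N\le\valthr_m$ for all $m$ settles both non-negativity on the partition and the absence of subscribers below $\valthr_N$. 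If you write this up in full, the only place demanding care is making the chaining argument explicit rather than citing envelope convexity as a black box; everything else is the routine algebra you describe.
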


\begin{figure} 
	\centering
	\includegraphics[width=0.7\linewidth]{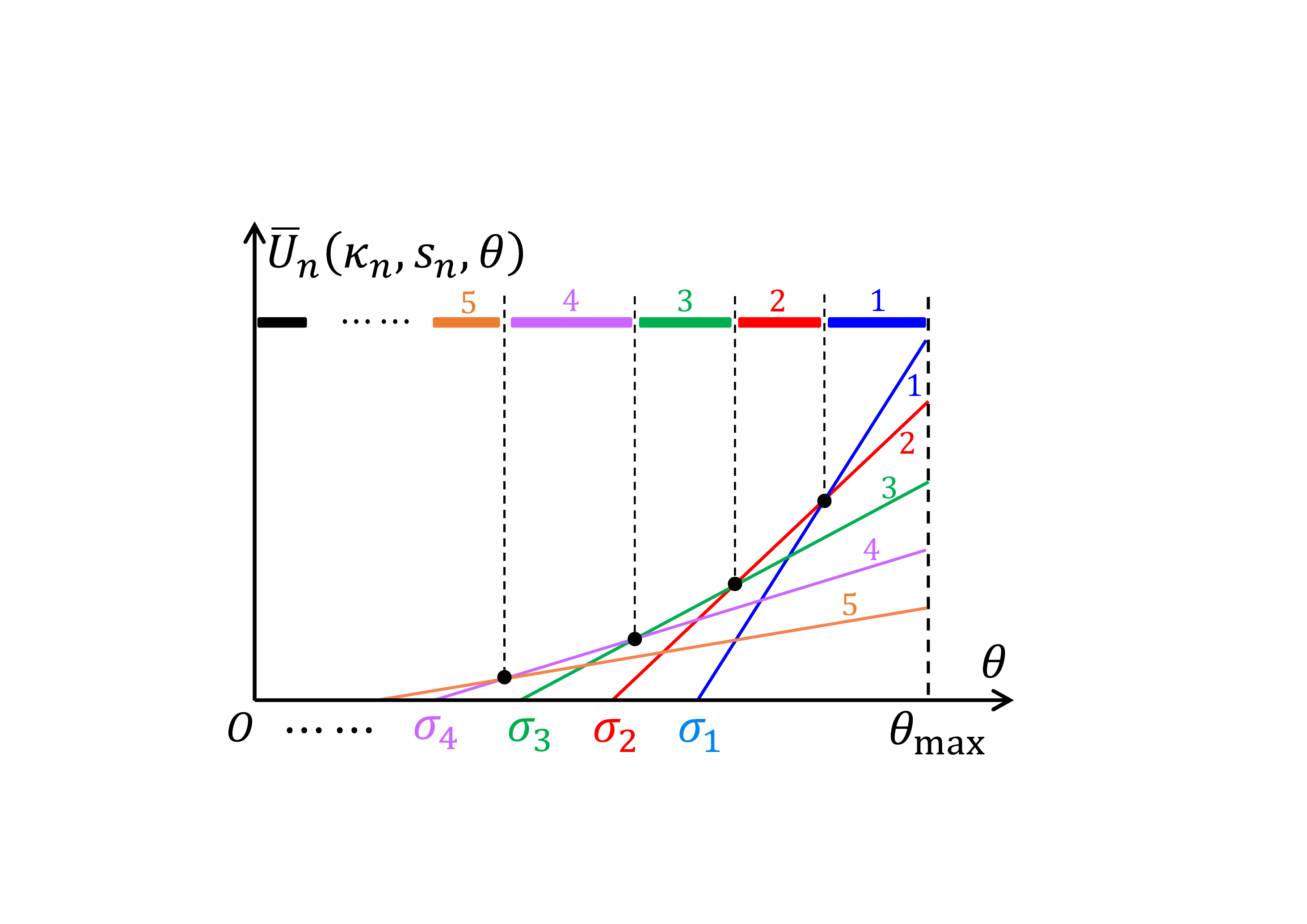}
	\caption{The $N$ MNOs coexist.}
	\label{fig: Competition_N}
\end{figure}

Fig. \ref{fig: Competition_N} illustrates the coexistence outcome discussed in Theorem \ref{Theorem: Partition N}.
Recall that $\QoS_n\usage_{\mechanism_n}$ represents a subscriber's marginal utility change for one unit data valuation increment.
The MNO that provides a better QoS (i.e., $\QoS_n$) and better time flexibility (i.e., $\mechanism_n$), can obtain the higher valuation subscribers and charge higher prices, hence obtains more revenue.

Under such a market partition, the corresponding profit of each MNO is given by
\begin{equation}
\begin{aligned}
&\profit_n(\Mechanism,\Threshold)= \\
& \left\{
\begin{aligned}
& \QoS_1\usage_{\mechanism_1} \left[ \valthr_1 -\costqos_1 \right]  \Big[ 1-H\left(\valeq_1^2 \right) \Big],	\qquad\qquad\qquad\qquad n=1, \\
& \QoS_n\usage_{\mechanism_n} \left[ \valthr_n -\costqos_n \right]  \Big[ H\left(\valeq_{n-1}^{n} \right) - H\left(\valeq_{n}^{n+1} \right) \Big], \ \  1< n<N,\\
& \QoS_N\usage_{\mechanism_N} \left[ \valthr_N -\costqos_N \right]  \Big[ H\left(\valeq_{N-1}^{N} \right) - H\left(\valthr_N \right) \Big], \qquad  n=N,
\end{aligned}\right.
\end{aligned}
\end{equation}
where $\valeq_{n}^{n+1}$ depends on $\valthr_n$ and $\valthr_{n+1}$, given by 
\begin{equation}
\valeq_{n}^{n+1}=\frac{\valthr_n-\xi_n^{n+1} \cdot \valthr_{n+1}}{1-\xi_n^{n+1}}.
\end{equation}


\section{Monopoly Market as Benchmark\label{Appendix: Monopoly Market as Benchmark}}
In this section, we introduce the MNO's optimal decisions on the threshold user type and the data mechanism in the monopoly market.
The results in this section is partially based on our previous results in \cite{Zhiyuan2018TMC}, since we assume the homogeneity in the network substitutability $\cut$.

Without loss of generality, let's consider the monopoly market of MNO-$n$ under the data mechanism $\mechanism_n$ and the pricing strategy $s_n=\{\pcap_n,\adfee_n\}$.
According to Definition \ref{Definition: Threshold User Type}, we denote $\valthr_n(\mechanism_n,s_n)$ the threshold user type and the market share of MNO-$n$ is $\subscription_n=[\valthr_n(\mechanism_n,s_n),\valmax]$.
Therefore, the MNO-$n$'s expected profit under the  data mechanism $\mechanism_n$ and the pricing strategy $s_n$ is 
\begin{equation} \label{Equ: W_n MP}
\begin{aligned}
	&\profit_n(\mechanism_n,s_n)	\\
	=& \QoS_n\usage_{\mechanism_n} \left[\valthr_n(\mechanism_n,s_n) - \frac{c_n}{\QoS_n}\right] \Big[ 1-H\big(\valthr_n(\mechanism_n,s_n) \big) \Big],
\end{aligned}
\end{equation}
	where $H(\cdot)$ is the CDF of users' data valuation $\val$.
	Note that MNO-$n$ experiences a negative profit if its corresponding threshold user type $\valthr_n(\mechanism_n,s_n) < {c_n}/{\QoS_n}$, which is a trivial case.
	Moreover, MNO-$n$ will have no subscriber if its cost-QoS ratio ${c_n}/{\QoS_n}$ is greater than the users' highest data valuation $\valmax$.
	With this observation, we will focus on the case where  $\valthr_n(\mechanism_n,s_n) \in[{c_n}/{\QoS_n},\valmax]$, where we assume that ${c_n}/{\QoS_n}<\valmax$  to avoid the trivial case. 
	
	We characterize MNO-$n$'s profit-maximizing pricing strategy $s_n^\text{MP}$ and data mechanism $\mechanism_n^\text{MP}$ in {Lemma \ref{Lemma: Monopoly Optimal Pricing}} and {Lemma \ref{Lemma: Monopoly Optimal Data Mechanism}}, respectively.
	Here the superscript ``MP'' means ``monopoly''.
	\begin{lemma}\label{Lemma: Monopoly Optimal Pricing}
		For monopoly MNO-$n$, given the data mechanism $\mechanism_n$, it maximizes its profit through a pricing strategy $s_n^\text{MP}$ such that its threshold user type  $\valthr_n(\mechanism_n,s_n^\text{MP})=\valthr_n^\text{MP}(\mechanism_n)$, which is the solution to the following equation:
		\begin{equation}\label{Equ: sigma^{MP}}
			\valthr_n^\text{MP}(\mechanism_n)-\frac{1-H(\valthr_n^\text{MP}(\mechanism_n))}{h(\valthr_n^\text{MP}(\mechanism_n))}=\frac{c_n}{\QoS_n},
		\end{equation}	
		where $\valthr_n^\text{MP}(\mechanism_n)$ is unique for an arbitrary $\val$ distribution with  the increasing failure rate (IFR).  
	\end{lemma}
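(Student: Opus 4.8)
The plan is to exploit the observation, already established after (\ref{Equ: val_n}), that by adjusting the pricing strategy $s_n=\{\pcap_n,\adfee_n\}$ the monopoly MNO-$n$ can realize \emph{any} target threshold user type $\valthr_n$. This collapses the two-dimensional pricing problem into a one-dimensional optimization over $\valthr_n\in[\costqos_n,\valmax]$, with the objective given by the profit expression (\ref{Equ: W_n MP}). Since the prefactor $\QoS_n\usage_{\mechanism_n}>0$ does not depend on $\valthr_n$, it suffices to maximize $g(\valthr)\eq(\valthr-\costqos_n)(1-H(\valthr))$ over the feasible interval.

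First I would differentiate to obtain $g'(\valthr)=\big(1-H(\valthr)\big)-(\valthr-\costqos_n)h(\valthr)$. Setting $g'(\valthr)=0$ and dividing by $h(\valthr)>0$ gives $\valthr-(1-H(\valthr))/h(\valthr)=\costqos_n$, which is precisely the stationarity condition (\ref{Equ: sigma^{MP}}) asserted in the lemma. This identifies any interior maximizer with a root of that equation.

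Next, for uniqueness and optimality, I would introduce $\phi(\valthr)\eq\valthr-(1-H(\valthr))/h(\valthr)$. Under the IFR assumption the hazard rate $h/(1-H)$ is increasing, so its reciprocal $(1-H)/h$ is decreasing, and therefore $\phi$ is strictly increasing. Rewriting $g'(\valthr)=-h(\valthr)\big[\phi(\valthr)-\costqos_n\big]$ shows that $g'$ changes sign exactly once (positive while $\phi(\valthr)<\costqos_n$, negative while $\phi(\valthr)>\costqos_n$). Hence $g$ is quasiconcave and its unique stationary point is the global maximizer, giving both existence-of-a-unique-solution to (\ref{Equ: sigma^{MP}}) and the fact that this solution is the optimum.

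Finally I would pin down the interior maximizer through the boundary behavior: at the left endpoint $g'(\costqos_n)=1-H(\costqos_n)>0$ (using the standing assumption $\costqos_n<\valmax$), while at the right endpoint $g'(\valmax)=-(\valmax-\costqos_n)h(\valmax)<0$, so by continuity the stationary point lies strictly inside $(\costqos_n,\valmax)$. I expect this last step to be the main delicacy: one must confirm that the IFR-guaranteed root of $\phi(\valthr)=\costqos_n$ actually falls inside the feasible interval rather than being cut off at a boundary, which is exactly where the monotonicity of $\phi$ and the nontriviality condition $\costqos_n<\valmax$ are both invoked.
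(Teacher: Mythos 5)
Your proposal is correct and follows essentially the same route as the paper's proof: both reduce the two-dimensional pricing problem to a one-dimensional choice of threshold user type, take the first-order condition of $(\valthr_n-\costqos_n)\bigl[1-H(\valthr_n)\bigr]$, invoke IFR to get strict monotonicity of the resulting function (your $\phi$ is just the negative rearrangement of the paper's auxiliary function $g$), and verify a sign change at the endpoints to obtain a unique interior root that is the global maximizer. No gap to report.
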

	\begin{proof}[\bf {Proof of Lemma \ref{Lemma: Monopoly Optimal Pricing}}]
		We prove Lemma \ref{Lemma: Monopoly Optimal Pricing} by deriving the MNO's profit-maximizing threshold user type $\valthr_n^\text{MP}$.
		
		Recall that the MNO-$n$'s expected monthly profit under the  data mechanism $\mechanism_n$ and the pricing strategy $s_n$ is 
		\begin{equation} \label{Equ: W_n MP proof}
		\begin{aligned}
		&\profit_n(\mechanism_n,s_n)	\\
		=& \QoS_n\usage_{\mechanism_n} \left[\valthr_n(\mechanism_n,s_n) - \frac{c_n}{\QoS_n} \right] \big[ 1-H(\valthr_n(\mechanism_n,s_n) ) \big].
		\end{aligned}
		\end{equation}
		
		Given the data mechanism $\mechanism_n$, the MNO's expected profit can be expressed as a function of the threshold user type $\valthr_n$, as follows:
		\begin{equation}\label{Proof Equ: profit n}
		\begin{aligned}
		&\profit_n(\mechanism_n,\valthr_n)	=  \QoS_n\usage_{\mechanism_n} \left[\valthr_n - \frac{c_n}{\QoS_n}\right] \big[ 1-H(\valthr_n ) \big].
		\end{aligned}
		\end{equation}
		To compute the maximum value of (\ref{Proof Equ: profit n}), we take the derivative of (\ref{Proof Equ: profit n}) with respect to $\valthr_n$ and obtain 
		\begin{equation}\label{Equ: sigma^MP}
		\begin{aligned}
		\frac{\partial W_n(\mechanism_n,\valthr_n)}{\partial \valthr_n}
		&=\textstyle \QoS_n\usage_{\mechanism_n}h(\valthr_n)\cdot g(\valthr_n),
		\end{aligned}
		\end{equation}
		where $h(\cdot)$ is the PDF of the data valuation $\val$ and $g(\cdot)$ is given by
		\begin{equation}\label{Equ: sigma^MP 1}
		 g(\valthr_n)\eq\frac{1-H(\valthr_n)}{h(\valthr)}-\valthr_n+\frac{c_n}{\QoS_n}.
		\end{equation}
		Note that $g(\cdot)$ is a monotonically decreasing function if the distribution of $\val$ satisfies the IFR.
		In addition, we can show that 
		\begin{equation}
		\left\{
		\begin{aligned}
		& g(0)=+\infty>0,\\
		&\textstyle g(\valmax)=\frac{c_n}{\QoS_n}-\valmax<0,
		\end{aligned}
		\right.
		\end{equation}
		which implies that there exists a \textit{unique} $\valthr_n^\text{MP}$ satisfying 
		\begin{equation}\label{Equ: sigma^MP 2}
		\left\{
		\begin{aligned}
		& \textstyle g\left(\valthr_n^\text{MP}\right)=0,\\
		& \textstyle g\left(\valthr_n\right)>0,	&\text{if } \valthr_n < \valthr_n^\text{MP},	\\
		& \textstyle g\left(\valthr_n\right)<0,	&\text{if } \valthr_n > \valthr_n^\text{MP}.	\\		
		\end{aligned}
		\right.
		\end{equation}	
		
		Combining (\ref{Equ: sigma^MP}) and (\ref{Equ: sigma^MP 2}), we know that 
		\begin{equation}
		\left\{
		\begin{aligned}
		& \frac{\partial W_n(\mechanism_n,\valthr_n)}{\partial \valthr_n}=0, &\text{if } & \valthr_n=\valthr_n^\text{MP},	\\
		& \frac{\partial W_n(\mechanism_n,\valthr_n)}{\partial \valthr_n}>0, &\text{if } & \valthr_n<\valthr_n^\text{MP},\\
		& \frac{\partial W_n(\mechanism_n,\valthr_n)}{\partial \valthr_n}<0, &\text{if } & \valthr_n>\valthr_n^\text{MP},
		\end{aligned}
		\right.
		\end{equation}
		which indicates that the MNO-$n$ maximizes its profit if its threshold user type is $\valthr_n^\text{MP}$, i.e., 
		\begin{equation}\label{Equ: sigma^{MP}}
		\valthr_n^\text{MP}-\frac{1-H(\valthr_n^\text{MP})}{h(\valthr_n^\text{MP})}=\frac{c_n}{\QoS_n}.
		\end{equation}
	\end{proof}
	
	{Lemma \ref{Lemma: Monopoly Optimal Pricing}} reveals the trade-off between the subscription fee and the per-unit fee, i.e., the profit-maximizing subscription fee $\pcap_n^\text{MP}$ and per-unit fee $\adfee_n^\text{MP}$ need to satisfy
	\begin{equation}\label{Equ: MP pcap adfee}
		\textstyle \adfee_n^\text{MP}\left(\cut^{-1}-1\right) \left(\dmean-\usage_{\mechanism_n}\right) + \pcap_n^\text{MP} = \QoS_n\usage_{\mechanism_n} \valthr_n^\text{MP}(\mechanism_n).
	\end{equation}
	
	A larger $\adfee_n^\text{MP}$ would lead to a smaller $\pcap_n^\text{MP}$, and vice versa. 
	
	

	\begin{lemma}\label{Lemma: Monopoly Optimal Data Mechanism}
		Under the optimal pricing strategy in {Lemma \ref{Lemma: Monopoly Optimal Pricing}}, a monopoly MNO-$n$ obtains a higher profit under  the rollover mechanism (than the traditional mechanism $\trad$), i.e., $\mechanism_n^\text{MP}=\roll$.
	\end{lemma}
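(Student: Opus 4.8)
The plan is to exploit a structural invariance: the profit-maximizing threshold user type is the \emph{same} under both mechanisms, so that comparing the two optimal profits reduces to comparing the scalar prefactors $\usage_{\roll}$ and $\usage_{\trad}$.

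First I would revisit the monopoly profit written as a function of the threshold, namely (\ref{Proof Equ: profit n}),
\[
\profit_n(\mechanism_n,\valthr_n)=\QoS_n\usage_{\mechanism_n}\left[\valthr_n-\frac{c_n}{\QoS_n}\right]\bigl[1-H(\valthr_n)\bigr],
\]
and observe that the data mechanism $\mechanism_n$ enters \emph{only} through the positive multiplicative factor $\QoS_n\usage_{\mechanism_n}$. Consequently the first-order condition established in Lemma \ref{Lemma: Monopoly Optimal Pricing} does not involve $\usage_{\mechanism_n}$ at all, and under the IFR assumption its unique root is identical for $\mechanism_n=\trad$ and $\mechanism_n=\roll$. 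I would state this explicitly as $\valthr_n^\text{MP}(\trad)=\valthr_n^\text{MP}(\roll)$, and denote the common value by $\valthr_n^\text{MP}$.

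Next I would evaluate the optimal profit at this common threshold. Because $\valthr_n^\text{MP}=\frac{c_n}{\QoS_n}+\frac{1-H(\valthr_n^\text{MP})}{h(\valthr_n^\text{MP})}>\frac{c_n}{\QoS_n}$ and $\valthr_n^\text{MP}<\valmax$ (so that $1-H(\valthr_n^\text{MP})>0$), the bracketed product $\bigl[\valthr_n^\text{MP}-\frac{c_n}{\QoS_n}\bigr]\bigl[1-H(\valthr_n^\text{MP})\bigr]$ is strictly positive and, crucially, is itself independent of the mechanism. The optimal profit therefore factorizes as $\QoS_n\usage_{\mechanism_n}$ times this common positive constant.

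Finally I would invoke inequality (\ref{Equ: alpha flexibility}), $\usage_{\roll}>\usage_{\trad}$, to conclude $\profit_n(\roll,\valthr_n^\text{MP})>\profit_n(\trad,\valthr_n^\text{MP})$, hence $\mechanism_n^\text{MP}=\roll$. I do not anticipate a genuine obstacle here; the only subtlety is justifying that the optimizer is mechanism-invariant, which I expect to be the conceptual heart of the argument, together with verifying that the shared bracketed factor is strictly positive so that a \emph{strict} profit ranking (rather than a weak one) follows from $\usage_{\roll}>\usage_{\trad}$.
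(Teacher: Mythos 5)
Your proposal is correct and follows essentially the same route as the paper's own proof: both rest on the observation that the profit-maximizing threshold $\valthr_n^\text{MP}$ from Lemma \ref{Lemma: Monopoly Optimal Pricing} is mechanism-invariant (the first-order condition involves only $c_n/\QoS_n$ and the distribution of $\val$), so the optimal profits differ only by the factor $\QoS_n\usage_{\mechanism_n}$, and the strict inequality $\usage_{\roll}>\usage_{\trad}$ from (\ref{Equ: alpha flexibility}) settles the comparison. Your version is slightly more careful than the paper's, which leaves implicit both the mechanism-invariance of the optimizer and the strict positivity of the common factor $\bigl[\valthr_n^\text{MP}-\frac{c_n}{\QoS_n}\bigr]\bigl[1-H(\valthr_n^\text{MP})\bigr]$ needed for the strict ranking.
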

	\begin{proof}[\bf {Proof of Lemma \ref{Lemma: Monopoly Optimal Data Mechanism} }]
		Under the optimal pricing strategy specified in Lemma \ref{Lemma: Monopoly Optimal Pricing}, the MNO's profit is given by
		\begin{equation}\label{Proof Equ: profit n kappa}
		\begin{aligned}
		&\profit_n(\mechanism_n,\valthr_n^\text{MP})	= \textstyle \QoS_n\usage_{\mechanism_n} \left[\valthr_n^\text{MP} - \frac{c_n}{\QoS_n}\right] \Big[ 1-H\left(\valthr_n^\text{MP} \right) \Big].
		\end{aligned}
		\end{equation}
		
		Since $\usage_{\roll}>\usage_{\trad}$, we know that $\profit_n(\roll,\valthr_n^\text{MP})>\profit_n(\trad,\valthr_n^\text{MP})$, which indicates that $\mechanism_n^\text{MP}=\roll$.
	\end{proof}
	
	{Lemma \ref{Lemma: Monopoly Optimal Data Mechanism}} shows that the monopoly MNO should select the rollover mechanism $\roll$ to maximize its profit.
	This conclusion still holds under users' two-dimensional heterogeneity on data valuation and network substitutability.
	We refer interested readers to our previous works in \cite{Zhiyuan2018TMC}.
	

\section{Bertrand Competition\label{Appendix: Bertrand}}
In the main paper, we make Assumption \ref{Assumption: xi} in Section \ref{Section: User Subscription in Stage III} and Section \ref{Section: MNOs' Pricing Competition in Stage II}.
Now we consider the case of $\xi(\Mechanism)=1$.
More specifically, we focus on showing that our analysis for $\xi(\Mechanism)\ne1$ (in Sections \ref{Section: User Subscription in Stage III} and \ref{Section: MNOs' Pricing Competition in Stage II} of the main paper) is also applicable to the case of $\xi(\Mechanism)=1$ in terms of the market partition and the MNOs' best responses.

\subsection{Market Partition}
We summarize the user subscription equilibrium when $\xi(\Mechanism)=1$ in Theorem \ref{Theorem: Market Partition same}.
\begin{theorem}[Market Partition for $\xi(\Mechanism)=1$]\label{Theorem: Market Partition same}
	Consider  MNOs' threshold user types $\valthr_1$ and $\valthr_2$ under the data mechanism $\Mechanism=\{\mechanism_1,\mechanism_2\}$ and the pricing strategy $\pricing=\{s_1,s_2\}$.
	There are two market competition results in the $(\valthr_1,\valthr_2)$ plane shown in Fig. \ref{fig: Partition structure same}.
	\begin{enumerate}
		\item $\Sigma_1$: MNO-1 has a larger threshold user type than MNO-2, i.e., $(\valthr_1,\valthr_2)\in\Sigma_1$ where $\Sigma_1$ is
		\begin{equation}
		(\valthr_1,\valthr_2)\in\Sigma_1\eq\{(\valthr_1,\valthr_2):\valthr_1>\valthr_2 \}.
		\end{equation}
		In this case, 	MNO-2's market share corresponds to the users with $\val$ in  $\subscription_2=[\valthr_2,\valmax]$, while MNO-1 has a  zero market share $\subscription_1=\varnothing$, as shown in Fig. \ref{fig: Duopoly_same_1}.
		\item $\Sigma_2$: MNO-1 has a smaller threshold user type than MNO-2, i.e., $(\valthr_1,\valthr_2)\in\Sigma_2$ where $\Sigma_2$ is
		\begin{equation}
		(\valthr_1,\valthr_2)\in\Sigma_2\eq\{(\valthr_1,\valthr_2):\valthr_1 <\valthr_2\}.
		\end{equation}
		In this case, MNO-1 has a market share of $\subscription_1=[\valthr_1,\valmax]$, while MNO-2 has a zero market share of $\subscription_2=\varnothing$, as shown in Fig. \ref{fig: Duopoly_same_2}.
	\end{enumerate}
\end{theorem}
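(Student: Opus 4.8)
The plan is to reduce the subscription rule to a comparison of two parallel affine payoff functions. First I would rewrite each MNO's expected payoff (\ref{Equ: system model payoffexp a}) in threshold form using Definition \ref{Definition: Threshold User Type}: since $\valthr_n$ is the unique zero of the affine map $\val\mapsto\payoffexp_n(\mechanism_n,s_n,\val)$ whose slope is $\QoS_n\usage_{\mechanism_n}$, we obtain $\payoffexp_n(\mechanism_n,s_n,\val)=\QoS_n\usage_{\mechanism_n}(\val-\valthr_n)$. The hypothesis $\xi(\Mechanism)=1$ is, by the definition (\ref{Equ: xi}), exactly the statement $\QoS_1\usage_{\mechanism_1}=\QoS_2\usage_{\mechanism_2}$; I would denote this common positive slope by $m$, so that both payoff lines have identical slope and differ only by a vertical offset.

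The key observation I would exploit is that the payoff gap is then constant in $\val$: $\payoffexp_1(\mechanism_1,s_1,\val)-\payoffexp_2(\mechanism_2,s_2,\val)=m(\valthr_2-\valthr_1)$. Because $m>0$, the sign of this gap is governed entirely by $\valthr_2-\valthr_1$ and is independent of the user type. This is precisely the structural difference from the $\xi(\Mechanism)<1$ regime of Theorem \ref{Theorem: Market Partition}: when the slopes differ the two lines cross at the neutral type $\valeq$, producing the shared-market region $\Sigma_3$; when the slopes coincide the lines are parallel and never cross, so one MNO dominates the other for every user simultaneously and the sharing region collapses.

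The argument then splits into the two cases of the statement. If $(\valthr_1,\valthr_2)\in\Sigma_1$, i.e. $\valthr_1>\valthr_2$, the gap is strictly negative, so MNO-2 yields a strictly larger payoff than MNO-1 at every $\val$; hence any subscribing user chooses MNO-2, and a user subscribes precisely when $\payoffexp_2\ge0$, i.e. when $\val\ge\valthr_2$. This gives $\subscription_2=[\valthr_2,\valmax]$ and $\subscription_1=\varnothing$. The case $(\valthr_1,\valthr_2)\in\Sigma_2$ is symmetric: $\valthr_1<\valthr_2$ makes MNO-1 dominate uniformly, yielding $\subscription_1=[\valthr_1,\valmax]$ and $\subscription_2=\varnothing$.

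I do not expect a genuine technical obstacle here, since the computation is elementary once the payoffs are written in threshold form; the only point requiring care is the conceptual one of verifying that the shared-market region disappears, which I would make precise by noting that the neutral type $\valeq(\valthr_1,\valthr_2)$ of Definition \ref{Definition: Neutral User Type} becomes undefined (its denominator $1-\xi(\Mechanism)$ vanishes) exactly when $\xi(\Mechanism)=1$, so $\Sigma_3$ cannot appear. The tie boundary $\valthr_1=\valthr_2$ is a measure-zero set and may be assigned to either MNO by convention without affecting the profit expressions (\ref{Equ: Profit-1 III})--(\ref{Equ: Profit-2 III}).
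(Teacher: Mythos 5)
Your proposal is correct, and it is actually more self-contained than what the paper itself does for this statement. The paper never writes a standalone proof of Theorem \ref{Theorem: Market Partition same}: it justifies the result only by remarking that it is the degenerate case of Theorem \ref{Theorem: Market Partition}, i.e., that the coexistence region $\Sigma_3$ in Fig. \ref{fig: Partition structure} shrinks away as $\xi(\Mechanism)$ approaches $1$. That justification is a continuity heuristic, and strictly speaking the machinery of Theorem \ref{Theorem: Market Partition} cannot be invoked at $\xi(\Mechanism)=1$, because the neutral user type $\valeq$ in (\ref{Equ: theta_eq}) has a vanishing denominator there---a point you correctly isolate. Your route instead argues directly in the degenerate case: writing $\payoffexp_n(\mechanism_n,s_n,\val)=\QoS_n\usage_{\mechanism_n}(\val-\valthr_n)$ from Definition \ref{Definition: Threshold User Type}, noting that $\xi(\Mechanism)=1$ means the two payoff lines are parallel with common slope $m>0$, so the payoff gap $m(\valthr_2-\valthr_1)$ is constant in $\val$ and one MNO dominates uniformly; the subscription set of the dominant MNO is then fixed by individual rationality alone. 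What your approach buys is a rigorous, elementary proof where the paper offers only a limiting remark; what the paper's viewpoint buys is the structural insight of how the Bertrand case sits inside the general family of partitions (the collapse of $\Sigma_3$), which your last paragraph also captures by observing that $\valeq$ becomes undefined. One small caution: the theorem's two regions are defined by strict inequalities and do not cover the diagonal $\valthr_1=\valthr_2$; your convention remark handles this tie set adequately and is consistent with the paper, which likewise leaves the diagonal unassigned.
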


By comparing Theorem \ref{Theorem: Market Partition same} with Theorem \ref{Theorem: Market Partition}, we find that Theorem \ref{Theorem: Market Partition same} is a special case of Theorem \ref{Theorem: Market Partition} if $\xi(\Mechanism)=1$.
That is, as $\xi(\Mechanism)$ is approaching to 1, the gray region in Fig. \ref{fig: Partition structure} will diminish and eventually becomes Fig. \ref{fig: Partition structure same}.

Next we study the best response of each MNO based on the subscription equilibrium in Theorem \ref{Theorem: Market Partition same}.

\begin{figure}
	\centering
	\includegraphics[width=0.5\linewidth]{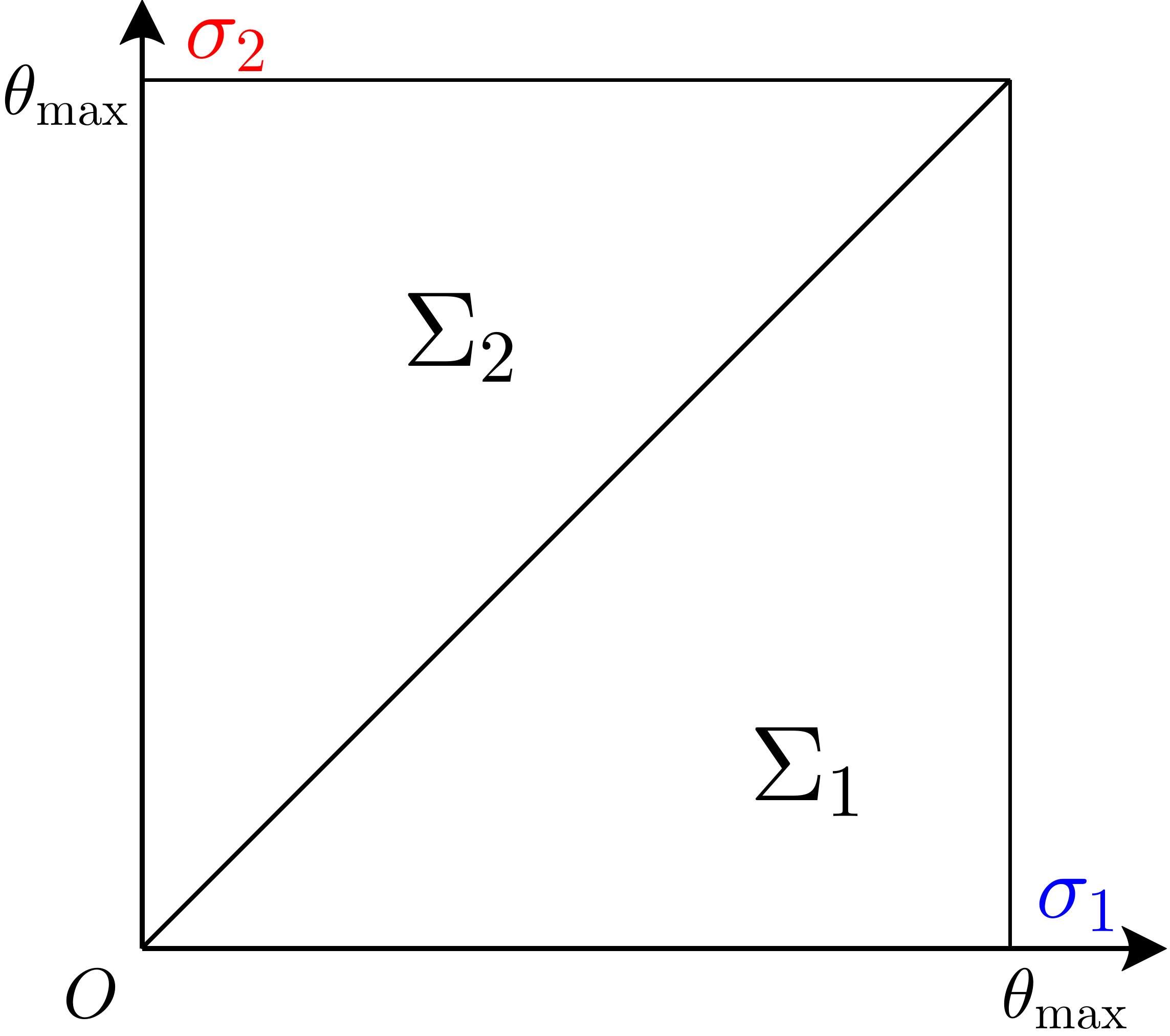}
	\caption{Partition structure for $\xi(\Mechanism)=1$.}
	\label{fig: Partition structure same}
\end{figure}

\begin{figure}
	\centering
	\subfigure[MNO-1 surviving ($\Sigma_2$)]{\label{fig: Duopoly_same_1}{\includegraphics[width=0.48\linewidth]{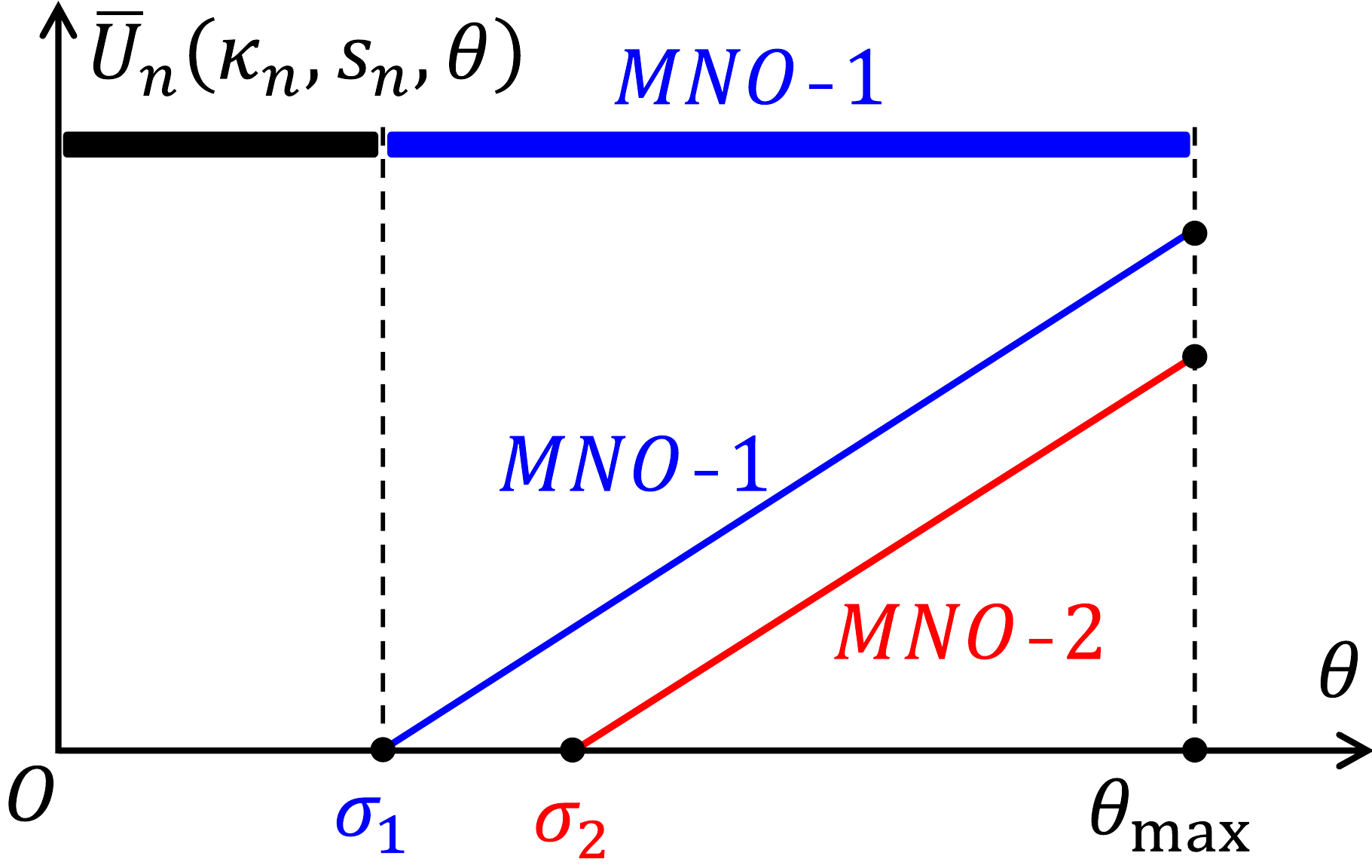}}} 
	\subfigure[MNO-2 surviving ($\Sigma_1$)]{\label{fig: Duopoly_same_2}{\includegraphics[width=0.48\linewidth]{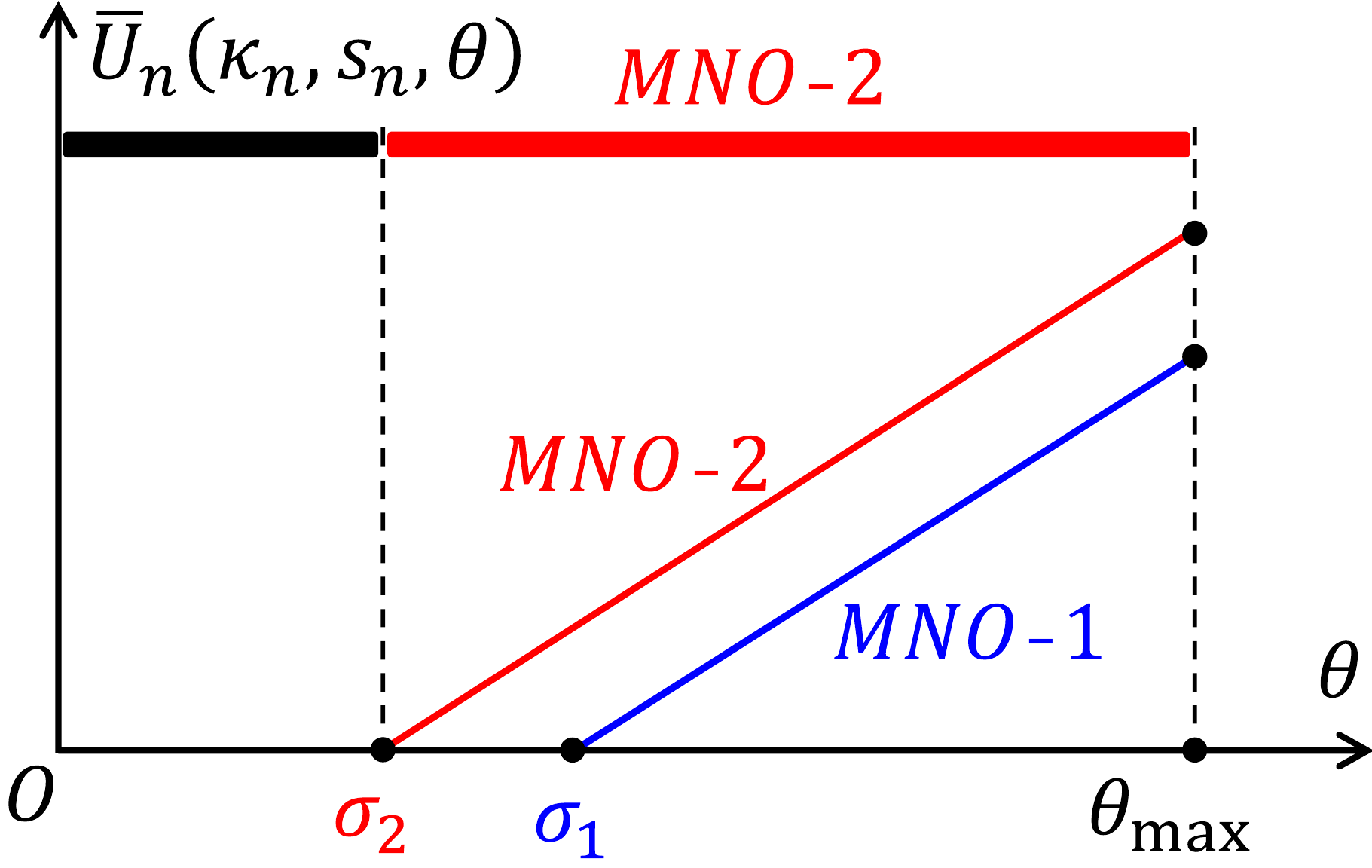}}}
	\caption{Two market partition modes when $\xi(\Mechanism)=1$. }
	\label{fig: Partition same}
\end{figure} 

\subsection{Best Response}

For $\xi(\Mechanism)=1$, the two MNOs are symmetric, i.e., $\QoS_1\usage_{\mechanism_1}=\QoS_2\usage_{\mechanism_2}$.
In this case, their best responses are the same.
Therefore, we will take MNO-2 as example by presenting its best response to MNO-1 in Lemma \ref{Lemma: theta_2^* same}
\begin{lemma}[Best Response of MNO-2 for $\xi(\Mechanism)=1$] \label{Lemma: theta_2^* same}
	Given MNO-1's threshold user type $\valthr_1$,  MNO-2 maximizes its profit $\profit_2(\Mechanism,\Threshold)$ by choosing a  threshold user type $\valthr_2^{*}(\Mechanism,\valthr_1) $ as follows: 
	\begin{equation}
	\valthr_2^{*}(\Mechanism,\valthr_1)=
	\begin{cases} 
	\costqos_2, 						& \text{if } \valthr_1\in [0,\val_1^W) , \\
	[\valthr_1]^-,  					& \text{if } \valthr_1\in [\val_1^W,\val_1^N),\\
	\valthr_2^\text{MP}(\mechanism_2), 	& \text{if }\valthr_1 \in [\val_1^N,\valmax],
	\end{cases}
	\end{equation} 
	where $[x]^-$ denotes the value slightly lower than $x$.	
\end{lemma}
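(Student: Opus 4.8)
The plan is to reduce MNO-2's best-response problem to a \emph{constrained monopoly} profit maximization, using the symmetric market partition of Theorem~\ref{Theorem: Market Partition same}. First I would fix MNO-1's threshold $\valthr_1$ and write MNO-2's profit purely as a function of its own threshold $\valthr_2$. By Theorem~\ref{Theorem: Market Partition same}, two regimes arise: if $\valthr_2<\valthr_1$ then $(\valthr_1,\valthr_2)\in\Sigma_1$, so MNO-2 captures the entire upper segment $\subscription_2=[\valthr_2,\valmax]$ and its profit equals the monopoly profit $\profit_2(\valthr_2)=\QoS_2\usage_{\mechanism_2}(\valthr_2-\costqos_2)\big[1-H(\valthr_2)\big]$, which is exactly the expression in~(\ref{Equ: W_n MP}); if instead $\valthr_2\ge\valthr_1$ then $(\valthr_1,\valthr_2)\in\Sigma_2$ and MNO-2 is driven out with zero market share and zero profit. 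Hence MNO-2 solves $\max \profit_2(\valthr_2)$ subject to the survival constraint $\valthr_2<\valthr_1$, with zero profit always available as a fallback by conceding (any $\valthr_2\ge\valthr_1$).

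Next I would import the shape of $\profit_2(\cdot)$ from Lemma~\ref{Lemma: Monopoly Optimal Pricing}: under the IFR assumption $\profit_2$ is single-peaked, strictly increasing on $[\costqos_2,\valthr_2^\text{MP}(\mechanism_2)]$ and strictly decreasing afterwards, with unconstrained maximizer $\valthr_2^\text{MP}(\mechanism_2)$ and with $\profit_2(\valthr_2)\le 0$ for $\valthr_2\le\costqos_2$. Specializing the reference thresholds to $\xi(\Mechanism)=1$, the two breakpoints collapse to $\val_1^W=\costqos_2$ and $\val_1^N=\valthr_2^\text{MP}(\mechanism_2)$, so the best response follows by comparing $\valthr_1$ against these two values.

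Then I carry out the case analysis. (i) If $\valthr_1<\val_1^W=\costqos_2$, every surviving choice $\valthr_2<\valthr_1$ lies below $\costqos_2$ and yields strictly negative profit, so conceding is optimal and $\valthr_2^*=\costqos_2$ (all conceding choices give the same zero profit, and $\costqos_2$ is the representative). (ii) If $\val_1^W\le\valthr_1<\val_1^N$, the feasible survival interval $[\costqos_2,\valthr_1)$ lies entirely on the increasing branch of $\profit_2$, so MNO-2 wants $\valthr_2$ as large as possible while still undercutting; the supremum is approached at $\valthr_2=[\valthr_1]^-$, and since $\valthr_1>\costqos_2$ the resulting profit is strictly positive, dominating the fallback. (iii) If $\valthr_1\ge\val_1^N=\valthr_2^\text{MP}(\mechanism_2)$, the unconstrained optimizer $\valthr_2^\text{MP}(\mechanism_2)$ already satisfies the survival constraint, so MNO-2 attains the full monopoly profit at $\valthr_2^*=\valthr_2^\text{MP}(\mechanism_2)$.

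The main obstacle is the boundary behaviour in case (ii): because the survival constraint $\valthr_2<\valthr_1$ is strict, the maximum is not attained and the best response is a supremum approached from below --- the classic Bertrand undercutting, which is precisely why the statement uses the notation $[\valthr_1]^-$. I would handle this by maximizing $\profit_2$ over the open feasible set and invoking continuity of $\profit_2$ at $\valthr_1$ to confirm that conceding is strictly dominated whenever $\valthr_1>\costqos_2$, and that the three pieces agree in the limit at the breakpoints $\costqos_2$ and $\valthr_2^\text{MP}(\mechanism_2)$ (at $\valthr_1=\val_1^N$ the flatness of $\profit_2$ at its peak makes the choice between $[\valthr_1]^-$ and $\valthr_2^\text{MP}(\mechanism_2)$ immaterial to the profit).
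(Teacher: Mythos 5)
Your proposal is correct, but it takes a genuinely different route from the paper's. The paper never proves Lemma~\ref{Lemma: theta_2^* same} on its own: it simply asserts (after Fig.~\ref{fig: BR2_same}) that this best response is a special case of the general Lemma~\ref{Lemma: theta_2^*}, whose rigorous counterpart is the Appendix~\ref{Appendix: Partition BRs} proof of Lemma~\ref{Lemma: theta_1^*}; in the limit $\xi(\Mechanism)\to1$ the two middle branches of Lemma~\ref{Lemma: theta_2^*} (the interior-sharing solution $\hat{\valthr}_2$ and the corner solution) collapse onto the undercutting value $[\valthr_1]^-$. You instead exploit the fact that for $\xi(\Mechanism)=1$ the market is winner-take-all (Theorem~\ref{Theorem: Market Partition same}), so MNO-2's profit on its surviving set $\{\valthr_2<\valthr_1\}$ coincides with the monopoly profit in~(\ref{Equ: W_n MP}); importing the IFR single-peakedness from the proof of Lemma~\ref{Lemma: Monopoly Optimal Pricing} reduces the best response to a constrained unimodal maximization, and the three cases follow by comparing $\valthr_1$ with $\val_1^W=\costqos_2$ and $\val_1^N=\valthr_2^\text{MP}(\mechanism_2)$. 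What your route buys is a self-contained treatment of the Bertrand undercutting supremum (the feasible surviving set is open, the optimum is only approached, hence $[\valthr_1]^-$), which the paper's appeal to degeneration never confronts. What it silently requires, and what you should state explicitly, is a tie-breaking convention on the diagonal $\valthr_1=\valthr_2$: Theorem~\ref{Theorem: Market Partition same} defines $\Sigma_1$ and $\Sigma_2$ with strict inequalities and leaves ties unassigned, and your argument needs ties to give MNO-2 zero share --- this convention is precisely what makes the best response $[\valthr_1]^-$ rather than $\valthr_1$. The remaining blemishes are measure-zero boundary points already blurred in the lemma's own statement (at $\valthr_1=\val_1^W$ undercutting yields profit approaching zero from below, so conceding is the weak optimum; at $\valthr_1=\val_1^N$ choosing $\valthr_2^\text{MP}(\mechanism_2)$ is itself a tie under your convention), and your closing remarks resolve them at the level of profit values, which is the right resolution.
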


Fig. \ref{fig: BR2_same} illustrates the MNO-2's best response specified in Lemma \ref{Lemma: theta_2^* same}. 
Specifically, the red line segments (i.e., BR2-a, BR2-b, and BR2-c) denote $\valthr_2^{*}(\Mechanism,\valthr_1)$. 
Next we discuss the physical meanings of the three different parts of the best response in more details.
\begin{itemize}
	\item \textbf{BR2-a}: MNO-2 gives up the competition and obtains a zero market share, i.e., $(\valthr_1,\valthr_2^{*}(\valthr_1))\in\Sigma_2$, if MNO-1 chooses a threshold user type smaller than its \textit{winning} threshold, i.e., $\valthr_1\le\val_1^W$.
	\item \textbf{BR2-b}: MNO-2 leaves a zero market to MNO-1, i.e., $(\valthr_1,\valthr_2^{*}(\valthr_1))\in\Sigma_1$, by choosing a threshold user type $\valthr_2^{*}(\Mechanism,\valthr_1)$ slightly smaller than $\valthr_1$, if MNO-1 chooses a threshold user type between its \textit{winning} and \textit{no-influence} thresholds, i.e., $\val_1^W<\valthr_1<\val_1^N$.
	\item \textbf{BR2-c}: MNO-2 leaves a zero market to MNO-1 and decides its threshold user type as in Appendix \ref{Appendix: Monopoly Market as Benchmark} without considering the existence of MNO-1, i.e., $\valthr_2^{*}(\Mechanism,\valthr_1)=\valthr_2^\text{MP}$ as defined in (\ref{Equ: sigma^{MP}}), if MNO-1 chooses a threshold user type larger than its \textit{no-influence} threshold, i.e., $\valthr_1\ge\val_1^N$.
\end{itemize}

By comparing Lemma \ref{Lemma: theta_2^* same} with Lemma \ref{Lemma: theta_2^*}, we find that the MNO-2's best response under $\xi(\Mechanism)=1$ is a special case of that under $\xi(\Mechanism)\ne1$.
That is, the illustration in Fig. \ref{fig: Best response of MNO-2} will become Fig. \ref{fig: BR2_same} as $\xi(\Mechanism)$ is approaching to 1.

Furthermore, the best response of MNO-1 is similar to Lemma \ref{Lemma: theta_2^* same}, and we illustrate it in Fig. \ref{fig: BR1_same}.
For an easy comparison with Fig. \ref{fig: BR2_same}, in Fig. \ref{fig: BR1_same} we plot the best response $\valthr_1^{*}(\valthr_2)$ on the horizontal axis and the variable $\valthr_2$ on the vertical axis.

\begin{figure} 
	\centering
	\subfigure[Illustration of $\valthr_2^{*}(\Mechanism,\valthr_1)$.]{\label{fig: BR2_same}{\includegraphics[height=0.43\linewidth]{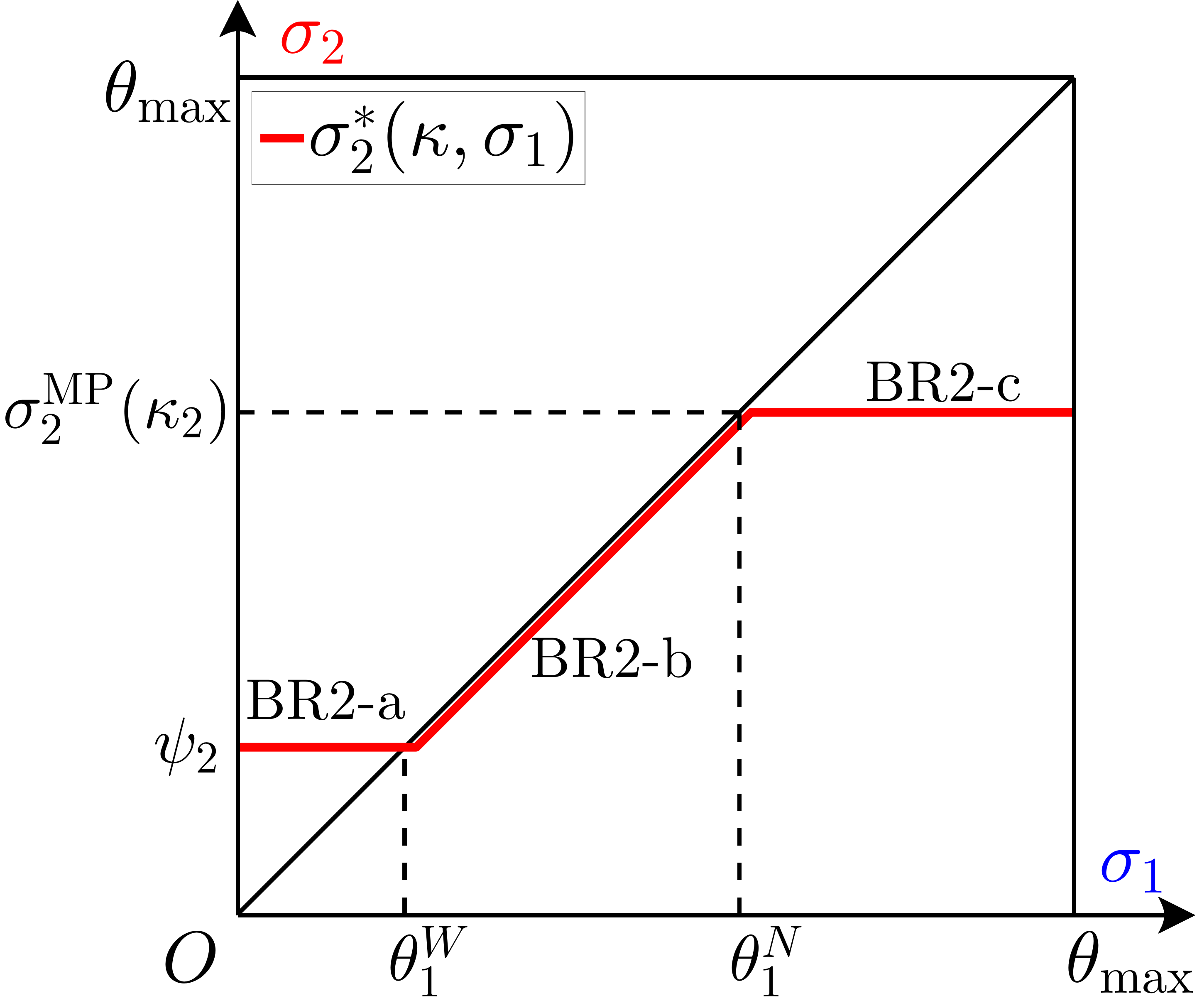}}}
	\subfigure[Illustration of $\valthr_2^{*}(\Mechanism,\valthr_1)$.]{\label{fig: BR1_same}{\includegraphics[height=0.43\linewidth]{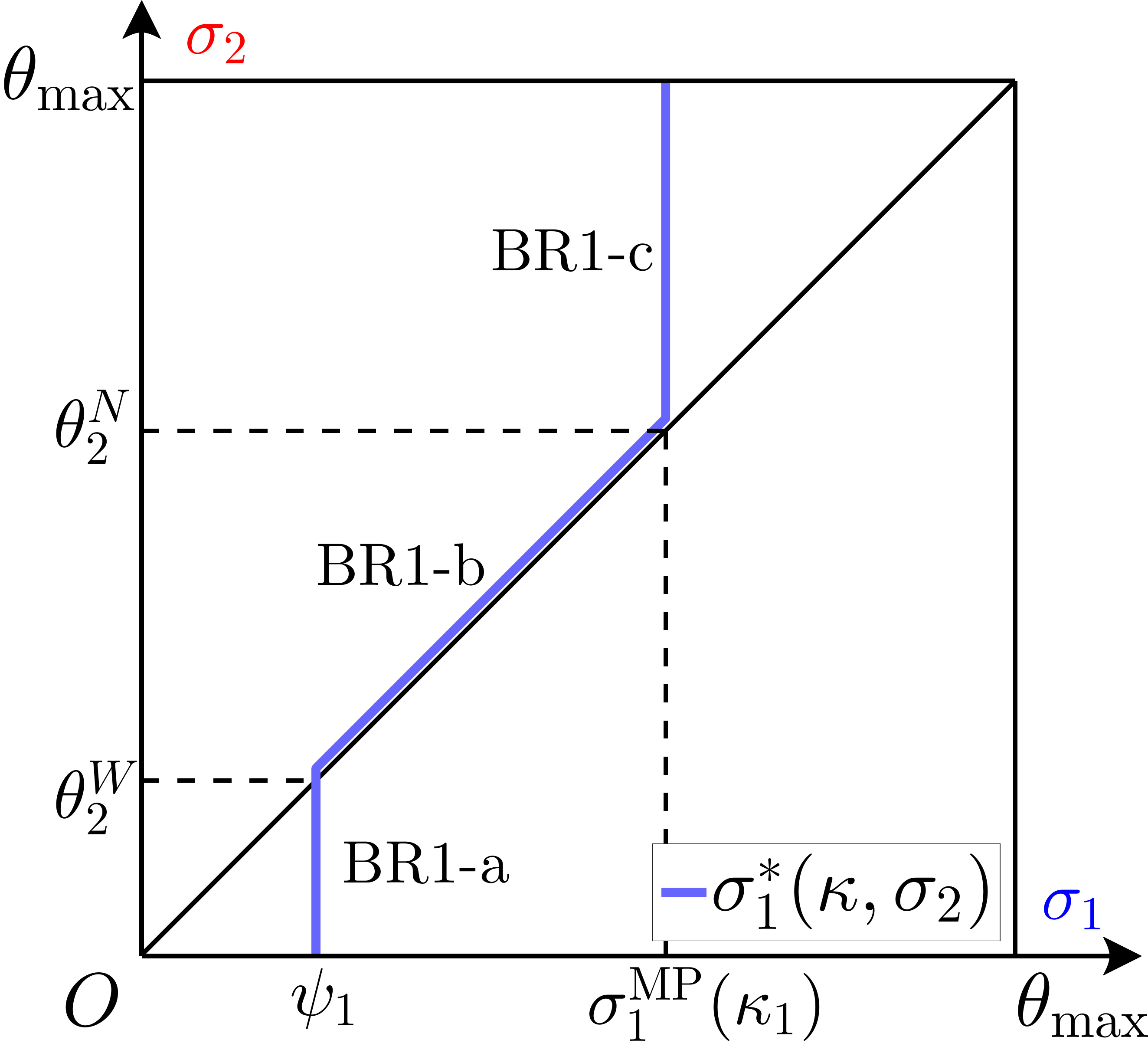}}} 
	\caption{Illustration of the best responses for $\xi(\Mechanism)=1$.}
	\label{fig: Best response same}
\end{figure}

\section{}\label{Appendix: Partition BRs}

\begin{proof}[\bf {Proof of Theorem \ref{Theorem: Market Partition} }]
	We prove Theorem \ref{Theorem: Market Partition} by characterizing the conditions of the three market partitions.
	Recall that the neutral user type $\valeq$ is
	\begin{equation}\label{Proof Equ: valeq}
		\valeq(\valthr_1,\valthr_2) = \frac{ \valthr_1 - \xi(\Mechanism) \cdot \valthr_2  }{ 1 - \xi(\Mechanism) }.
	\end{equation}
	
	According to the illustration in Fig. \ref{fig: Duopoly_1}, MNO-2 leaves a zero market share to MNO-1 if and only if \textit{the neutral user type is no smaller than the maximal data valuation of the user group}, i.e., $\valeq(\valthr_1,\valthr_2)\ge\valmax$.
	It is mathematically equivalent to
	\begin{equation}
	\valthr_1-\valthr_2 \ge (1-\xi(\Mechanism))(\valmax-\valthr_2),
	\end{equation}
	which corresponds to the set $\Sigma_1$ in Theorem \ref{Theorem: Market Partition}.

	According to the illustration in Fig. \ref{fig: Duopoly_3}, MNO-1 leaves a zero market share to MNO-2 if and only if \textit{the threshold user type of MNO-1 is no larger than that of MNO-2}, i.e., $\valthr_1\le\valthr_2$.
	It corresponds to the set $\Sigma_2$  in Theorem \ref{Theorem: Market Partition}.

	According to the illustration in Fig. \ref{fig: Duopoly_2}, both the MNOs obtain positive market shares if and only if \textit{the neutral user type is smaller than the maximal data valuation of the user group} (i.e., $\valeq<\valmax$) and \textit{the threshold user type of MNO-1 is larger than that of MNO-2} (i.e., $\valthr_1>\valthr_2$).
	They are equivalent to
	\begin{equation}
	0 < \valthr_1 -\valthr_2 < (1-\xi(\Mechanism))(\valmax-\valthr_2),
	\end{equation} 
	which corresponds to the set $\Sigma_3$ in Theorem \ref{Theorem: Market Partition}.	
\end{proof}

\begin{proof}[\bf {Proof of Lemma \ref{Lemma: theta_1^*}}]
	The proof of the two MNOs' best responses is similar, here we take MNO-1 as an example.
	We prove Lemma \ref{Lemma: theta_1^*} by characterizing the best response of MNO-1 to MNO-2.
	
	Based on Theorem \ref{Theorem: Market Partition}, both the MNOs obtain positive market shares if and only if
	\begin{equation}\label{Proof Equ: share condition}
	0 < \valthr_1 -\valthr_2 < (1-\xi(\Mechanism))(\valmax-\valthr_2).
	\end{equation}
	The corresponding profit of MNO-1 is
	\begin{equation}\label{Proof Equ: W1}
	\profit_1(\Mechanism,\Threshold) 
	= \QoS_1\usage_{\mechanism_1} \left[ \valthr_1 -\costqos_1 \right]  \Big[ 1-H\left(\valeq(\valthr_1,\valthr_2)  \right) \Big].
	\end{equation}
	
	From (\ref{Proof Equ: share condition}), we can derive the following two inequalities for the coexistence case regarding to $\valthr_1$
	\begin{subequations}\label{Proof Equ: coexisten condition}
	\begin{align}
	\valthr_1 &> \valthr_2,\\
	\valthr_1 &< \valthr_2 + (1-\xi(\Mechanism))(\valmax-\valthr_2).
	\end{align}
	\end{subequations}	
	
	Based on (\ref{Proof Equ: W1}), we take the first order derivative of $\profit_1(\Mechanism,\Threshold)$ with respect to $\valthr_1$ and obtain
	\begin{equation}\label{Proof Equ: W1 derivative}
	\begin{aligned}
	\frac{\partial \profit_1(\Mechanism,\Threshold)}{\partial \valthr_1} 
	=& \QoS_1\usage_{\mechanism_1} \Bigg[ 1-H\left(\frac{\valthr_1-\xi(\Mechanism)\valthr_2}{1-\xi(\Mechanism)}\right)  \\ 
	& \quad\quad -\frac{\valthr_1-\costqos_1}{1-\xi(\Mechanism)}\cdot h\left(\frac{\valthr_1-\xi(\Mechanism)\valthr_2}{1-\xi(\Mechanism)}\right)  \Bigg] . \\
	\end{aligned}
	\end{equation}
	From (\ref{Proof Equ: W1 derivative}), we can see that $ \frac{\partial \profit_1}{\partial \valthr_1}=0 $ is equivalent to 
	\begin{equation}
	g(\valthr_1,\valthr_2)=0,
	\end{equation}
	where $g(\valthr_1,\valthr_2)$ is 	
	\begin{equation}
	g(\valthr_1,\valthr_2)=\frac{1-H\left(\frac{\valthr_1-\xi(\Mechanism)\valthr_2}{1-\xi(\Mechanism)}\right)}{h\left(\frac{\valthr_1-\xi(\Mechanism)\valthr_2}{1-\xi(\Mechanism)}\right)} - \frac{\valthr_1-\costqos_1}{1-\xi(\Mechanism)}  .
	\end{equation}
	
	Furthermore, under the IFR condition, we can show that $g(\valthr_1,\valthr_2)$ is decreasing in $\valthr_1$.
	Therefore, when both the MNOs obtain positive market shares (i.e., the two inequalities in (\ref{Proof Equ: coexisten condition}) hold), we have
	\begin{equation}
	\underbrace{g(\valthr_2 + (1-\xi(\Mechanism))(\valmax-\valthr_2),\valthr_2)}_{J(\valthr_2)} < g(\valthr_1,\valthr_2) < \underbrace{g(\valthr_2,\valthr_2)}_{K(\valthr_2)}.
	\end{equation}
	Specifically, $J(\valthr_2)$ and $K(\valthr_2)$ are given by
	\begin{equation}
		J(\valthr_2)=- \frac{ \xi(\Mechanism)\valthr_2 -\costqos_1}{1-\xi(\Mechanism)}-\valmax,
	\end{equation}
	\begin{equation}
		K(\valthr_2)=\frac{1-H\left(\valthr_2\right)}{h\left(\valthr_2\right)} - \frac{\valthr_2-\costqos_1}{1-\xi(\Mechanism)}.
	\end{equation}
	Note that both $J(\valthr_2)$ and $K(\valthr_2)$ decrease in $\valthr_2$ under the IFR condition.
		
	Now we define $\val_2^W$, $\val_2^L$, and $\val_2^N$ as follows
	\begin{subequations}
	\begin{align}
	& J(\val_2^W)=0,	\\
	& K(\val_2^L)=0,	\\
	& \val_2^N = \valthr_1^\text{MP}(\mechanism_1),
	\end{align}
	\end{subequations}
	and derive the best response of MNO-1 based on $\val_2^W$, $\val_2^L$, and $\val_2^N$ as follows
	\begin{itemize}
		\item If $\valthr_2<\val_2^W$, then $g(\valthr_1,\valthr_2)>0$ for any $\valthr_1\in[\valthr_2,\valthr_2 + (1-\xi(\Mechanism))(\valmax-\valthr_2)]$, which means that MNO-1 should give up the market competition and obtain zero market share, i.e., $\valthr_1^*=\costqos_1$.
		
		\item If $\val_2^W\le\valthr_2<\val_2^L$, then there exists a unique $\hat{\valthr}_1\in[\valthr_2,\valthr_2 + (1-\xi(\Mechanism))(\valmax-\valthr_2)]$ such that $g(\hat{\valthr}_1,\valthr_2)=0$, which means that the both MNOs obtain positive market shares, i.e., $\valthr_1^*=\hat{\valthr}_1$.
		
		\item If $\val_2^L\le\valthr_2<\val_2^N$, then $g(\valthr_1,\valthr_2)<0$ for any $\valthr_1\in[\valthr_2,\valthr_2 + (1-\xi(\Mechanism))(\valmax-\valthr_2)]$, which means that MNO-1 can leave a zero market competition to MNO-2, i.e., $\valthr_1^*=\valthr_2$.
		
		\item If $\valthr_2\ge\val_2^N$, then MNO-1 can decide its threshold user type as in Lemma \ref{Lemma: Monopoly Optimal Pricing} without considering the existence of MNO-2, i.e., $\valthr_1^*=\valthr_2^\text{MP}(\mechanism_1)$.
	\end{itemize}
\end{proof}

\section{}\label{Appendix: partition at equilibrium}
\begin{proof}[\bf Proof of Theorem \ref{Theorem: Mechanism market partition}]
	We prove Theorem \ref{Theorem: Mechanism market partition} by deriving the conditions under which one of the MNO obtains a zero market share.
	
	First, we derive the condition (i.e., the upper bound of $c_1$) under which MNO-2 cannot obtain a positive market share no matter what data mechanism outcome $\Mechanism=\{\mechanism_1,\mechanism_2\}$.
	According to Theorem \ref{Theorem: Market Partition}, MNO-2 just obtains a zero market share if $\valthr_1=\valthr_2$.
	Therefore, we substitute  $\valthr_1=\valthr_2=\valthr$ into the threshold equilibrium condition (\ref{Equ: Coexistence}) and obtain
	\begin{equation}\label{Proof Equ: MNO-2 just zero}
	\left\{
	\begin{aligned}
	& H(\valthr)+\frac{\valthr-\costqos_1}{1-\xi(\Mechanism)}\cdot h(\valthr)=1, \\
	& H(\valthr)=H(\valthr)+(\valthr-\costqos_2) \cdot \frac{h(\valthr)}{1-\xi(\Mechanism)}.
	\end{aligned}
	\right.
	\end{equation}
	
	After solving (\ref{Proof Equ: MNO-2 just zero}), we obtain
	\begin{equation}\label{Proof Equ: c_1}
	c_1=\QoS_1\left[ \frac{c_2}{\QoS_2} - \left(1-\xi(\Mechanism)\right)\cdot\frac{1-H\left(\frac{c_2}{\QoS_2}\right)}{h\left(\frac{c_2}{\QoS_2}\right)} \right],
	\end{equation}
	which means that MNO-2 cannot obtain a positive market share under the data mechanism outcome $\Mechanism=\{\mechanism_1,\mechanism_2\}$ if 
	\begin{equation}\label{Proof Equ: c_1 inequality}
	c_1<\QoS_1\left[ \frac{c_2}{\QoS_2} - \left(1-\xi(\Mechanism)\right)\cdot\frac{1-H\left(\frac{c_2}{\QoS_2}\right)}{h\left(\frac{c_2}{\QoS_2}\right)} \right].
	\end{equation}
	
	Note that the right hand side of (\ref{Proof Equ: c_1 inequality}) increases in $\xi(\Mechanism)=\frac{\QoS_2\usage_{\mechanism_2}}{\QoS_1\usage_{\mechanism_1}}$.
	Therefore, we substitute the data mechanisms $\Mechanism=\{\roll,\trad\}$ into the right hand side of (\ref{Proof Equ: c_1 inequality}) to derive $C^\textit{Single}_1(\QoS_1,\QoS_2,c_2)$ as follows
	\begin{equation}
	C^\textit{Single}_1(\QoS_1,\QoS_2,c_2)\eq\textstyle
	\QoS_1\left[ \frac{c_2}{\QoS_2} - \left(1-\frac{\QoS_2\usage_{\trad}}{\QoS_1\usage_{\roll}}\right)\cdot\frac{1-H\left(\frac{c_2}{\QoS_2}\right)}{h\left(\frac{c_2}{\QoS_2}\right)} \right].
	\end{equation}
	
	In this case, the data mechanism equilibrium of Game \ref{Game: Mechanism} is $\Mechanism^*=\{\roll,\na\}$ if $c_1<C^\textit{Single}_1(\QoS_1,\QoS_2,c_2)$.

	Second, we derive the condition (i.e., the upper bound of $c_2$) under which MNO-1 cannot obtain a positive market share no matter what data mechanism outcome $\Mechanism=\{\mechanism_1,\mechanism_2\}$.
	Since MNO-1 has the advantage on the QoS, when MNO-1 just obtains zero market share, it is possible for MNO-1 to attract the high valuation users under $(\roll,\roll)$ or the low valuation users under $(\trad,\roll)$.
	Therefore, we have the following two critical conditions that lead to a  zero market share for MNO-1:
	\begin{itemize}
		\item $\valthr_1=\xi(\Mechanism)\valthr_2+(1-\xi(\Mechanism))\valmax$ under $\Mechanism=\{\roll,\roll\}$.
		\item $\valthr_1=\valthr_2$ under $\Mechanism=\{\trad,\roll\}$.
	\end{itemize}
	
	We substitute  $\valthr_1=\xi(\Mechanism)\valthr_2+(1-\xi(\Mechanism))\valmax$ and $\Mechanism=\{\roll,\roll\}$ into the threshold equilibrium condition (\ref{Equ: Coexistence}) and obtain
	\begin{equation}\label{Proof Equ: c_2 inequality 1}
	c_2 = \textstyle c_1-(\QoS_1-\QoS_2)\valmax - \QoS_2\cdot\frac{ 1-H\left( \frac{c_1-(\QoS_1-\QoS_2)\valmax}{\QoS_2} \right)  }{ h\left( \frac{c_1-(\QoS_1-\QoS_2)\valmax}{\QoS_2} \right) }.
	\end{equation}
	Similarly, we substitute $\valthr_1=\valthr_2$ and $\Mechanism=\{\trad,\roll\}$ into the threshold equilibrium condition (\ref{Equ: Coexistence}) and obtain
	\begin{equation}\label{Proof Equ: c_2 inequality 2}
	c_2 = \textstyle \QoS_2\left[ \frac{c_1}{\QoS_1} - \left(1-\frac{\QoS_1\usage_{\trad}}{\QoS_2\usage_{\roll}}\right)\cdot\frac{1-H\left(\frac{c_1}{\QoS_1}\right)}{h\left(\frac{c_1}{\QoS_1}\right)} \right].
	\end{equation}
	
	Combining (\ref{Proof Equ: c_2 inequality 1}) and (\ref{Proof Equ: c_2 inequality 2}), we can get the upper bound for MNO-2's cost  $C^\textit{Single}_2(\QoS_1,\QoS_2,c_1)$ as follows:
	\begin{equation}
	\begin{aligned}
	C^\textit{Single}_2(\QoS_1,\QoS_2,c_1)=
	\max\Bigg\{ 
	\textstyle \QoS_2\left[ \frac{c_1}{\QoS_1} - \left(1-\frac{\QoS_1\usage_{\trad}}{\QoS_2\usage_{\roll}}\right)\cdot\frac{1-H\left(\frac{c_1}{\QoS_1}\right)}{h\left(\frac{c_1}{\QoS_1}\right)} \right], \\
	\textstyle c_1-(\QoS_1-\QoS_2)\valmax - \QoS_2\cdot\frac{ 1-H\left( \frac{c_1-(\QoS_1-\QoS_2)\valmax}{\QoS_2} \right)  }{ h\left( \frac{c_1-(\QoS_1-\QoS_2)\valmax}{\QoS_2} \right) }
	  \Bigg\}
	\end{aligned}
	\end{equation}
	
	In this case, the data mechanism equilibrium of Game \ref{Game: Mechanism} is $\Mechanism^*=\{\na,\roll\}$ if $c_2<C^\textit{Single}_2(\QoS_1,\QoS_2,c_1)$.
\end{proof}

\begin{proof}[\bf Proof of Lemma \ref{Lemma: Mechanism equilibrium MNO-1} and Lemma \ref{Lemma: Mechanism equilibrium MNO-2}]
	In the proof of Theorem \ref{Theorem: Mechanism market partition}, we have explained the data mechanism equilibrium (\ref{Equ: Mechanim Equilibrium R Na}) of Lemma \ref{Lemma: Mechanism equilibrium MNO-1} and (\ref{Equ: Mechanim Equilibrium Na R}) of Lemma \ref{Lemma: Mechanism equilibrium MNO-2}.
	In the following we will prove Lemma \ref{Lemma: Mechanism equilibrium MNO-1} and Lemma \ref{Lemma: Mechanism equilibrium MNO-2} by showing the equality (\ref{Equ: Profit Effect R Na}) and inequality (\ref{Equ: Profit Effect Na R}), respectively.
	
	We first prove Lemma \ref{Lemma: Mechanism equilibrium MNO-1} by showing that MNO-2 cannot reduce MNO-1's profit no matter what data mechanism it adopts, i.e., $\profit_1(\roll,\trad)=\profit_1(\roll,\roll)$ if $c_1<C^\textit{Single}_1(\QoS_1,\QoS_2,c_2)$.
	
	From $c_1<C^\textit{Single}_1(\QoS_1,\QoS_2,c_2)$, we obtain 
	\begin{equation}\label{Proof Equ: condition Na}
	\textstyle
	\costqos_1<  \costqos_2 - \left(1-\frac{\QoS_2\usage_{\trad}}{\QoS_1\usage_{\roll}}\right)\cdot\frac{1-H\left(\costqos_2\right)}{h\left(\costqos_2\right)} .
	\end{equation}	
	Moreover, Lemma \ref{Lemma: Monopoly Optimal Pricing} indicates that MNO-1's optimal threshold user type $\valthr_1^\text{MP}(\roll)$ under the rollover mechanism $\mechanism_1=\roll$ satisfies 
	\begin{equation} \label{Proof Equ: val 1 MP Na}
	\textstyle
	\valthr_1^\text{MP}(\roll)-\frac{1-H(\valthr_1^\text{MP}(\roll))}{h(\valthr_1^\text{MP}(\roll))}=\costqos_1.
	\end{equation}	
	Combining (\ref{Proof Equ: condition Na}) and (\ref{Proof Equ: val 1 MP Na}) together, we obtain
	\begin{equation}
	\begin{aligned}
	\textstyle
	\valthr_1^\text{MP}(\roll)-\frac{1-H(\valthr_1^\text{MP}(\roll))}{h(\valthr_1^\text{MP}(\roll))} 
	&\textstyle 
	< \costqos_2 - \left(1-\frac{\QoS_2\usage_{\trad}}{\QoS_1\usage_{\roll}}\right)\cdot\frac{1-H\left(\costqos_2\right)}{h\left(\costqos_2\right)} \\
	&\textstyle
	< \costqos_2 - \frac{1-H\left(\costqos_2\right)}{h\left(\costqos_2\right)},
	\end{aligned}
	\end{equation}
	which implies $\costqos_2>\valthr_1^\text{MP}(\roll)$, since the data valuation $\val$ satisfies the IFR condition.
	
	Now we have shown that $\costqos_2>\valthr_1^\text{MP}(\roll)$ if $c_1<C^\textit{Single}_1(\QoS_1,\QoS_2,c_2)$.
	According to Theorem \ref{Theorem: Equilibrium Regime}, it corresponds to the MNO-1's strong monopoly regime, i.e., 	$(\costqos_1,\costqos_2)\in\Psi_1^\text{SM}$.
	That is, MNO-2 cannot affect MNO-1's profit, i.e., $\profit_1(\roll,\trad)=\profit_1(\roll,\roll)$.
	Hence we have proved Lemma \ref{Lemma: Mechanism equilibrium MNO-1}.
	
	As for the case of $c_2<C^\textit{Single}_2(\QoS_1,\QoS_2,c_1)$ in Lemma \ref{Lemma: Mechanism equilibrium MNO-2}, we find that the condition $c_2<C^\textit{Single}_2(\QoS_1,\QoS_2,c_1)$ cannot guarantee the MNO-2's strong monopoly regime (i.e., MNO-2's week monopoly regime is also possible), hence it is possible for MNO-1 to reduce MNO-2's profit by adopting the rollover mechanism, i.e., $\profit_2(\roll,\roll)\le\profit_2(\trad,\roll)$.
\end{proof}

\section{}\label{Appendix: numerical compute}
Next we introduce how to compute the threshold values in the equilibrium structure of Game \ref{Game: Mechanism}.
We start with introducing Lemma \ref{Proof Lemma: strategy}.

\begin{lemma}\label{Proof Lemma: strategy}
	In the coexistence case of Game \ref{Game: Mechanism}, i.e., $c_1\ge C^\text{Single}_1(\QoS_1,\QoS_2,c_2)$ and $c_2\ge C^\text{Single}_2(\QoS_1,\QoS_2,c_1)$, we have
		\begin{equation} 
		\begin{aligned}
		& \profit_1(\roll,\trad) > \profit_1(\trad,\trad).
		\end{aligned}
		\end{equation}	
\end{lemma}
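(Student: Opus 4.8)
The plan is to reduce the claim to a clean monotonicity statement about the closed-form coexistence profit of MNO-1, exploiting that MNO-2's situation is literally unchanged between the two configurations. Write $q_n\eq\QoS_n\usage_{\mechanism_n}$ for the effective quality and recall $\costqos_n=c_n/\QoS_n$. Passing from $(\trad,\trad)$ to $(\roll,\trad)$ leaves $q_2=\QoS_2\usage_{\trad}$, $\costqos_1$, and $\costqos_2$ untouched, and only raises MNO-1's effective quality from $q_1^{\trad}=\QoS_1\usage_{\trad}$ to $q_1^{\roll}=\QoS_1\usage_{\roll}$, strictly, by (\ref{Equ: alpha flexibility}). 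So the comparison is exactly ``increase MNO-1's effective quality, hold MNO-2 fixed.''

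First I would derive a closed form for MNO-1's coexistence-equilibrium profit. Substituting MNO-1's first-order condition from (\ref{Equ: Coexistence}), namely $\valthr_1^{*}-\costqos_1=\frac{(1-\xi(\Mechanism))[1-H(\valeq^{*})]}{h(\valeq^{*})}$, into $\profit_1=q_1[\valthr_1^{*}-\costqos_1][1-H(\valeq^{*})]$ and using $q_1(1-\xi(\Mechanism))=q_1-q_2$ gives
\[
\profit_1^{*}=(q_1-q_2)\,\frac{[1-H(\valeq^{*})]^2}{h(\valeq^{*})}.
\]
This identity is the workhorse, and two of its three ingredients are immediately favorable. The prefactor strictly increases, since $q_1^{\roll}-q_2=\QoS_1\usage_{\roll}-\QoS_2\usage_{\trad}>\QoS_1\usage_{\trad}-\QoS_2\usage_{\trad}=q_1^{\trad}-q_2>0$ (using $\usage_{\roll}>\usage_{\trad}$ and $\QoS_1\ge\QoS_2$). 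Moreover the map $\phi(\theta)\eq[1-H(\theta)]^2/h(\theta)$ is decreasing under the IFR condition, because it is the product of the positive decreasing factors $1-H(\theta)$ and $[1-H(\theta)]/h(\theta)$ (the reciprocal hazard rate).

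The remaining ingredient, which I expect to be the main obstacle, is to show that the equilibrium neutral type weakly falls, $\valeq^{*}(\roll,\trad)\le\valeq^{*}(\trad,\trad)$ --- equivalently, that a stronger MNO-1 captures a weakly larger high-valuation share $1-H(\valeq^{*})$. Given this, $\phi(\valeq^{*}(\roll,\trad))\ge\phi(\valeq^{*}(\trad,\trad))>0$, which together with the strictly larger prefactor yields $\profit_1(\roll,\trad)>\profit_1(\trad,\trad)$. To establish this monotonicity I would treat the pair (\ref{Equ: Coexistence}) as a system implicitly defining $(\valeq^{*},\valthr_2^{*})$ as functions of $\xi(\Mechanism)$, eliminate $\valthr_2^{*}$ via MNO-1's condition and the definition of $\valeq$, and sign $d\valeq^{*}/d\xi(\Mechanism)\ge 0$ by the implicit function theorem, with the IFR condition keeping the relevant derivatives of definite sign (intuitively, raising MNO-1's quality lowers $\xi(\Mechanism)$ and shifts the share boundary in MNO-1's favor). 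This comparative-statics step is the delicate part, since the threshold game is not obviously supermodular. As a safeguard I would separately dispatch any boundary configurations in which $(\trad,\trad)$ or $(\roll,\trad)$ fails to be a Stage-II coexistence outcome --- so that the identity above does not apply --- by comparing against the monopoly benchmark, where Lemma \ref{Lemma: Monopoly Optimal Data Mechanism} already delivers the strict inequality through the same $\usage_{\roll}>\usage_{\trad}$ scaling.
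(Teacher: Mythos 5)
Your workhorse identity is correct --- substituting MNO-1's first-order condition from (\ref{Equ: Coexistence}) into (\ref{Equ: Profit-1 III}) indeed gives $\profit_1^{*}=(q_1-q_2)\,[1-H(\valeq^{*})]^2/h(\valeq^{*})$ with $q_n=\QoS_n\usage_{\mechanism_n}$ --- and so are the prefactor comparison and the fact that $\phi(\theta)=[1-H(\theta)]^2/h(\theta)$ decreases under IFR. The genuine gap is that your pivotal third ingredient, $\valeq^{*}(\roll,\trad)\le\valeq^{*}(\trad,\trad)$, is \emph{false}, so the deferred implicit-function-theorem step cannot be completed: the comparative statics go the other way. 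Concretely, take $\val$ uniform on $[0,1]$ (IFR holds), normalize $q_2=\QoS_2\usage_{\trad}=1$, and set $\costqos_1=\costqos_2=0.2$ (realizable with $\QoS_1/\QoS_2=10/9$ and $\usage_{\roll}/\usage_{\trad}=1.125$, so that $\xi$ drops from $0.9$ to $0.8$ when MNO-1 upgrades). The system (\ref{Equ: Coexistence}) then solves in closed form: $\valthr_2^{*}=(\costqos_2+\valthr_1^{*})/2$ and $\valthr_1^{*}=\bigl(2\costqos_1+2(1-\xi)+\xi\costqos_2\bigr)/(4-\xi)$. For $\xi=0.9$ this gives $\valthr_1^{*}\approx0.2516$, $\valthr_2^{*}\approx0.2258$, $\valeq^{*}\approx0.484$; for $\xi=0.8$ it gives $\valthr_1^{*}=0.3$, $\valthr_2^{*}=0.25$, $\valeq^{*}=0.5$; both parameter pairs lie in the coexistence regime $\Psi^\text{C}$. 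So strengthening MNO-1 \emph{raises} the equilibrium neutral type and shrinks MNO-1's market share: the stronger MNO-1 spends its advantage on a higher threshold (higher prices), MNO-2 follows, and the boundary overshoots its old position. Your two factors then move in opposite directions ($q_1-q_2$ rises from $0.111$ to $0.25$, while $\phi(\valeq^{*})$ falls from $0.266$ to $0.25$), and the lemma survives only because the prefactor gain dominates (profit $0.0296\to0.0625$). Proving that domination in general is precisely the nontrivial quantitative content of the lemma, which the qualitative factorization cannot supply.

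The paper's proof avoids equilibrium-to-equilibrium comparative statics on $\valeq^{*}$ entirely; it runs a chain of inequalities through \emph{non-equilibrium} threshold profiles. Writing $\Mechanism=(\trad,\trad)$, $\Mechanism'=(\roll,\trad)$ with equilibria $\Threshold$, $\Threshold'$: (i) at the \emph{fixed} old thresholds, switching mechanisms lowers $\xi$ and hence lowers $\valeq(\Mechanism',\Threshold)$ (this direction is safe because $\valthr_1>\valthr_2$ is frozen) while raising $q_1$, so $\profit_1(\Mechanism',\Threshold)>\profit_1(\Mechanism,\Threshold)$; (ii) replacing $\valthr_2$ by the new equilibrium value $\valthr_2'\ge\valthr_2$ lowers $\valeq$ further and raises MNO-1's profit again; (iii) replacing $\valthr_1$ by MNO-1's best response $\valthr_1'$ raises it once more, by optimality of the fixed point. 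The only comparative-statics input is $\valthr_2'\ge\valthr_2$, a much weaker (and true) statement than monotonicity of $\valeq^{*}$; in the numerical example above the chain reads $0.0296<0.0416<0.0479<0.0625$ even though your intended inequality fails. If you want to keep the identity-based framing, you must replace the $\valeq^{*}$-monotonicity step by an improvement-path argument of this kind.
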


\begin{proof}[\bf Proof of Lemma \ref{Proof Lemma: strategy}]

		Suppose the threshold equilibriums under the data mechanisms $\Mechanism=(\trad,\trad)$ and $\Mechanism'=(\roll,\trad)$ are $\Threshold=\{\valthr_1,\valthr_2\}$ and $\Threshold'=\{\valthr_1',\valthr_2'\}$, respectively.
		Under the data mechanisms $\Mechanism$ and $\Mechanism'$, MNO-1 always obtains the high valuation users.
		Therefore, the profits of MNO-$1$ in the two cases are
		\begin{equation}
		\begin{aligned}
		\profit_1(\Mechanism,\Threshold) &= \QoS_1\usage_{\trad} \left[ \valthr_1 -\costqos_1 \right]  \Big[ 1-H\big(\valeq(\Mechanism,\Threshold)  \big) \Big] ,
		\end{aligned}
		\end{equation}
		and 
		\begin{equation}
		\begin{aligned}
		\profit_1(\Mechanism',\Threshold') &= \QoS_1\usage_{\roll} \left[ \valthr_1' -\costqos_1 \right]  \Big[ 1-H\big(\valeq(\Mechanism',\Threshold')  \big) \Big] .
		\end{aligned}
		\end{equation}		
		
		In the following, we prove $\profit_1(\Mechanism,\Threshold)<\profit_1(\Mechanism',\Threshold')$.
		
		According to the definition of $\xi$ in (\ref{Equ: xi}), we have
		\begin{equation}\label{Proof Equ: xi xi'}
			\xi(\Mechanism)> \xi(\Mechanism').
		\end{equation}
		Based on the definition of the neutral user type in (\ref{Equ: theta_eq}), the inequality (\ref{Proof Equ: xi xi'}) implies that
		\begin{equation}\label{Proof Equ: mechanism mechanism'}
			\valeq(\Mechanism,\Threshold) > \valeq(\Mechanism',\Threshold).
		\end{equation}
		Therefore, we have
		\begin{equation}
		\begin{aligned}
		\profit_1(\Mechanism,\Threshold)
		&=\QoS_1\usage_{\trad} \left[ \valthr_1 -\costqos_1 \right]  \Big[ 1-H\big(\valeq(\Mechanism,\Threshold)  \big) \Big] \\
		&  <\QoS_1\usage_{\roll} \left[ \valthr_1 -\costqos_1 \right]  \Big[ 1-H\big(\valeq(\Mechanism',\Threshold)  \big) \Big]= \profit_1(\Mechanism',\Threshold),
		\end{aligned}		
		\end{equation}
		that it, 
		\begin{equation}\label{Proof Equ: W_1 1}
		\profit_1(\Mechanism,\Threshold) < \profit_1(\Mechanism',\Threshold).
		\end{equation}
		
		According to the best response analysis discussed in Lemma \ref{Lemma: theta_2^*} and Lemma \ref{Lemma: theta_1^*}, we know $\valthr_n<\valthr_n'$ for all $n\in\{1,2\}$.
		That is, both the MNOs can increase their threshold user types (by charging higher subscription fee and per-unit fee) if MNO-1 changes its data mechanism to rollover mechanism $\roll$.
		Accordingly, based on the definition of the neutral user type (\ref{Equ: theta_eq}), we have
		\begin{equation}\label{Proof Equ: W_1 2}
			\valeq(\Mechanism',\Threshold) > \valeq(\Mechanism',\valthr_1,\valthr_2'),
		\end{equation}
		which implies that 
		\begin{equation}
		\profit_1(\Mechanism',\Threshold) < \profit_1(\Mechanism',\valthr_1,\valthr_2').
		\end{equation}
		
		Recall that $\Threshold'=\{\valthr_1',\valthr_2'\}$ is the threshold equilibrium (i.e., the fixed point of the MNOs' best responses) under the data mechanism $\Mechanism'$.
		Hence we have 
		\begin{equation}\label{Proof Equ: W_1 3}
			\profit_1(\Mechanism',\valthr_1,\valthr_2') < \profit_1(\Mechanism',\Threshold').
		\end{equation}
		
		Combining (\ref{Proof Equ: W_1 1}), (\ref{Proof Equ: W_1 2}), and (\ref{Proof Equ: W_1 3}), we obtain
		\begin{equation}
			\profit_1(\Mechanism,\Threshold)<\profit_1(\Mechanism',\Threshold').
		\end{equation}	
\end{proof}
	
Lemma \ref{Proof Lemma: strategy} indicates that Game \ref{Game: Mechanism} never admits $(\trad,\trad)$ as an equilibrium.
The other three outcomes $\{(\trad,\roll),(\roll,\trad),(\roll,\roll)\}$ are possible at the equilibrium, which depends on the MNOs' QoS $\bm{\QoS}=\{\QoS_1,\QoS_2\}$ and costs $\bm{c}=\{c_1,c_2\}$.
Next we characterize the conditions under which one of the outcome in $\{(\trad,\roll),(\roll,\trad),(\roll,\roll)\}$ becomes the equilibrium of Game \ref{Game: Mechanism}.
To emphasize the dependence of the data mechanism equilibrium on the MNOs' QoS and costs, we write MNO-$n$'s profit as $\profit_n(\Mechanism,\bm{\QoS},\bm{c})$ for all $n\in\{1,2\}$.

We first characterize MNO-1's cost upper bound, denoted by ${C}^\textit{Roll}_1(\bm{\QoS},c_2)$, below which MNO-1 tends to adopt the rollover mechanism $\roll$.
Mathematically, ${C}^\textit{Roll}_1(\bm{\QoS},c_2)$ solves $\profit_1(\trad,\roll,\bm{\QoS},\bm{c})=\profit_1(\roll,\roll,\bm{\QoS},\bm{c})$ with respect to $c_1$.
Similarly, we can derive MNO-2's cost upper bound, denoted by ${C}^\textit{Roll}_2(\bm{\QoS},c_1)$, below which MNO-2 tends to adopt the rollover mechanism $\roll$.
Mathematically, $c_2={C}^\textit{Roll}_2$ solves $\profit_2(\roll,\trad,\bm{\QoS},\bm{c})=\profit_2(\roll,\roll,\bm{\QoS},\bm{c})$ with respect to $c_2$.
Therefore, the data mechanism equilibrium is $\Mechanism^*=\{\roll,\roll\}$ if
\begin{equation}\label{Proof Equ: condition (R,R)}
\left\{
\begin{aligned}
& c_1<{C}^\textit{Roll}_1(\bm{\QoS},c_2),\\
& c_2<{C}^\textit{Roll}_2(\bm{\QoS},c_1).
\end{aligned}
\right.
\end{equation}
Here the two inequalities in (\ref{Proof Equ: condition (R,R)}) hold simultaneously only if the two MNOs' has a relatively large QoS difference, i.e., $\QoS_2<\tilde{\QoS}$, where $\tilde{\QoS}$ solves $C^\textit{Roll}_1(\QoS_1,\tilde{\QoS},{C}^\textit{Roll}_2(\QoS_1,\tilde{\QoS},c_1))=c_1$.
Otherwise, if $\QoS_2\ge\tilde{\QoS}$, then there is no $\bm{\QoS}=\{\QoS_1,\QoS_2\}$ and $\bm{c}=\{c_1,c_2\}$ satisfying the two inequalities in (\ref{Proof Equ: condition (R,R)}).
In this case, Game \ref{Game: Mechanism} admits the symmetric equilibrium $\Mechanism^*=\{(\roll,\trad),(\trad,\roll)\}$ if 
\begin{equation}\label{Proof Equ: condition (R,T),(T,R)}
\left\{
\begin{aligned}
& c_1\ge{C}^\textit{Roll}_1(\bm{\QoS},c_2),\\
& c_2\ge{C}^\textit{Roll}_2(\bm{\QoS},c_1).
\end{aligned}
\right.
\end{equation}


\section{Data Mechanism Equilibrium}\label{Appendix: Data Mechanism Equilibrium}
Next we provide the numerical results for $\QoS_2\in\{0.95,0.99\}$.

\subsection{Small QoS Advantage}
MNO-1 has a small QoS advantage when $\QoS_2=0.95$ (i.e., $\QoS_1=1$ is a little larger than $\QoS_2=0.95$).

\begin{figure}  
	\centering
	\setlength{\abovecaptionskip}{0pt}
	\setlength{\belowcaptionskip}{0pt}
	\subfigure[Data mechanism equilibrium $\Mechanism^*$.]{\label{fig: EQ_mechanism_2}{\includegraphics[width=0.95\linewidth]{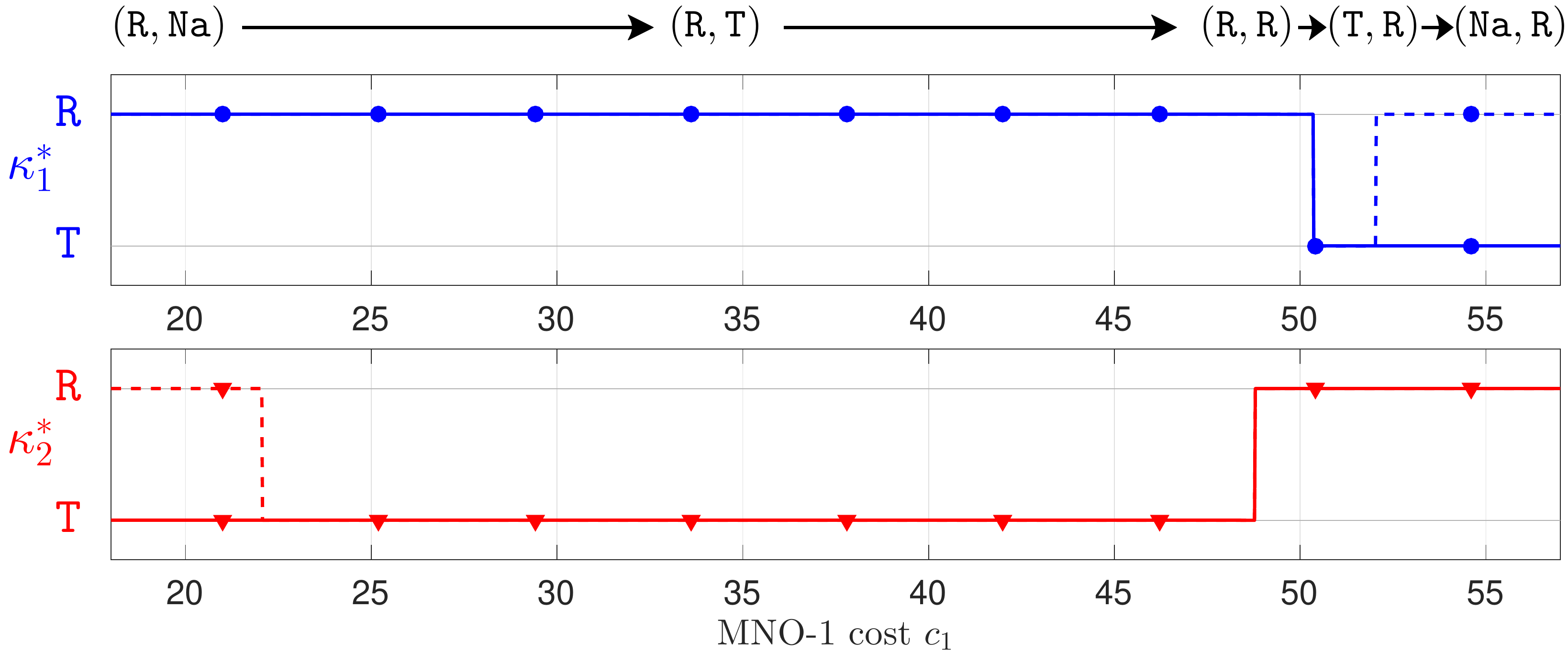}}}
	\subfigure[MNOs' profits under $\Mechanism^*$.]{\label{fig: EQ_mechanism_2_Profit}{\includegraphics[width=0.95\linewidth]{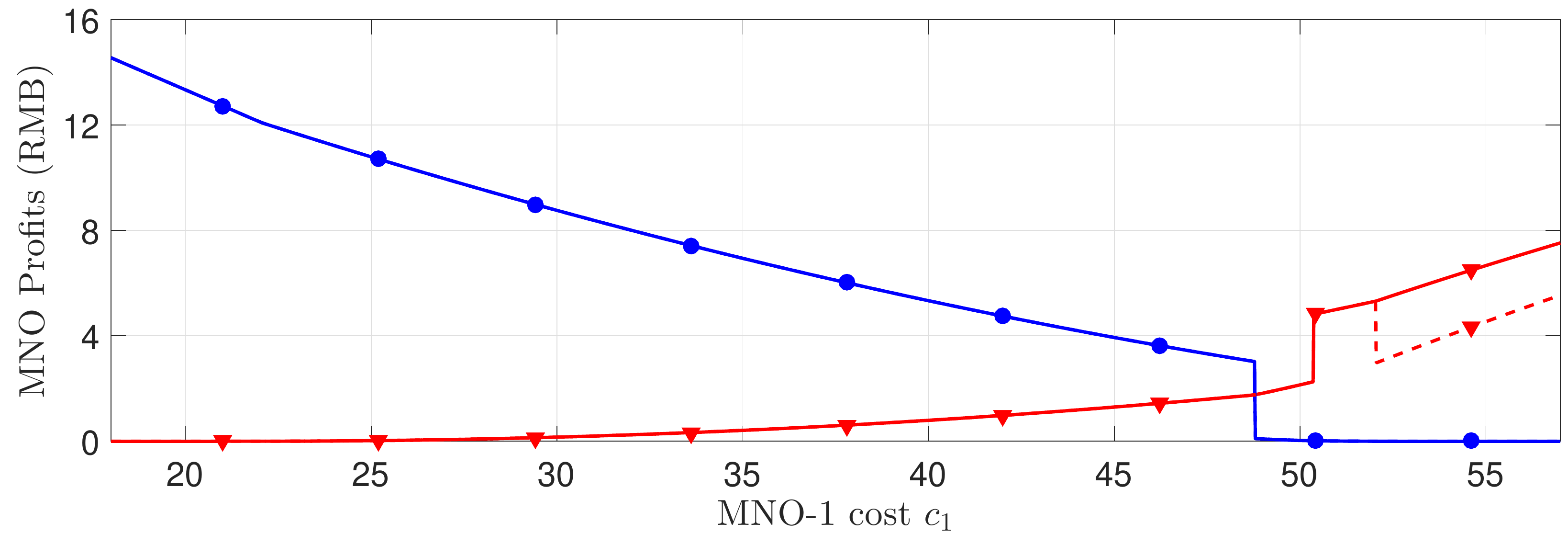}}} 
	\caption{Game \ref{Game: Mechanism} equilibrium when $\QoS_2=0.95$.}
	\label{fig: EQ rho=0.95}
\end{figure}

Similarly, we plot the data mechanism equilibrium $\Mechanism^*$ versus $c_1$ in Fig. \ref{fig: EQ_mechanism_2}, and label $\Mechanism^*$ on the top of the figure.
Fig. \ref{fig: EQ_mechanism_2_Profit} further simulates the corresponding profits under the equilibrium $\Mechanism^*$.

Different from Fig. \ref{fig: EQ_mechanism_3}, in Fig. \ref{fig: EQ_mechanism_2} we find that MNO-1 does \textbf{not always} adopt the rollover mechanism  $\roll$ all the time, which is the key difference from the case of $\QoS_2=0.91$ (shown in Fig. \ref{fig: EQ_mechanism_3}).
More specifically, as $c_1$ increases, the equilibrium varies  from $(\roll,\trad)$ to $(\trad,\roll)$ through $(\roll,\roll)$, which implies that:
as the high-QoS MNO-1's QoS advantage diminishes owning to its increasing cost,
(i) it would discard the rollover mechanism $\roll$, since its QoS advantage is not large enough;
(ii) the competitor (MNO-2) will upgrade from $\trad$ to $\roll$ to make more profit;
(iii) in this progress, it is possible for the two competitive MNOs to offer the rollover mechanism simultaneously, 

Similar to Fig. \ref{fig: EQ_mechanism_3_Profit}, in Fig. \ref{fig: EQ_mechanism_2_Profit} we find that MNO-1 experiences a profit drop when MNO-2 changes its data mechanism from $\trad$ to $\roll$, i.e., $c_1=48$ RMB/GB.
MNO-2 experiences a profit jump when MNO-1 changes its data mechanism from $\roll$ to $\trad$, i.e., $c_1=50.5$ RMB/GB.

Furthermore, Lemma \ref{Lemma: Mechanism equilibrium MNO-1} and Lemma \ref{Lemma: Mechanism equilibrium MNO-2} are also reflected in Fig. \ref{fig: EQ_mechanism_2_Profit} when $c_1<22$ RMB/GB or $c_1>52$ RMB/GB.
The key insights are similar to that we discussed for Fig. \ref{fig: EQ_mechanism_3_Profit}.

\begin{figure}  
	\centering
	\setlength{\abovecaptionskip}{0pt}
	\setlength{\belowcaptionskip}{0pt}
	\subfigure[MNO-1]{\label{fig: ProfitCompare_2_MNO1}{\includegraphics[width=0.49\linewidth]{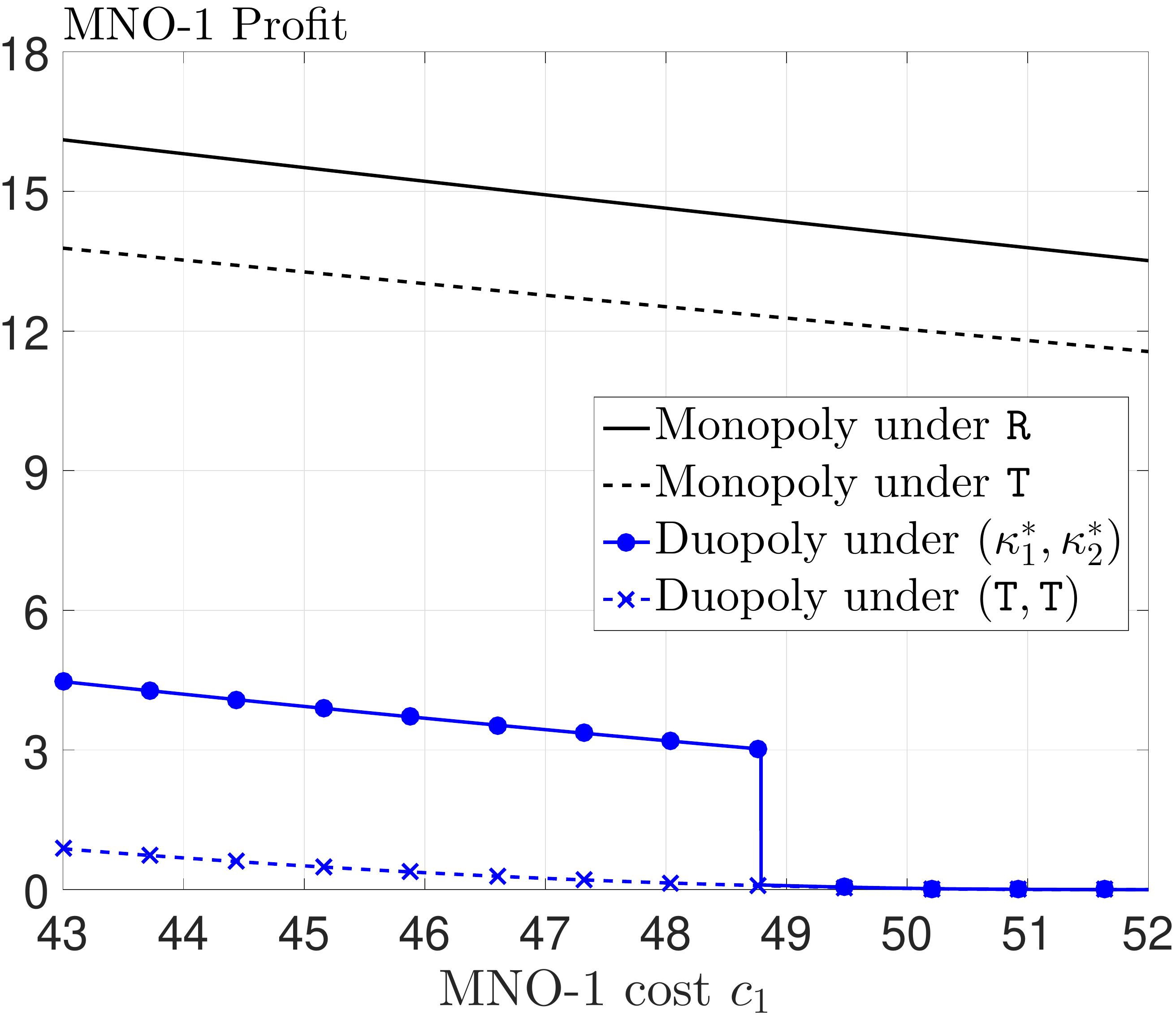}}}
	\subfigure[MNO-2]{\label{fig: ProfitCompare_2_MNO2}{\includegraphics[width=0.49\linewidth]{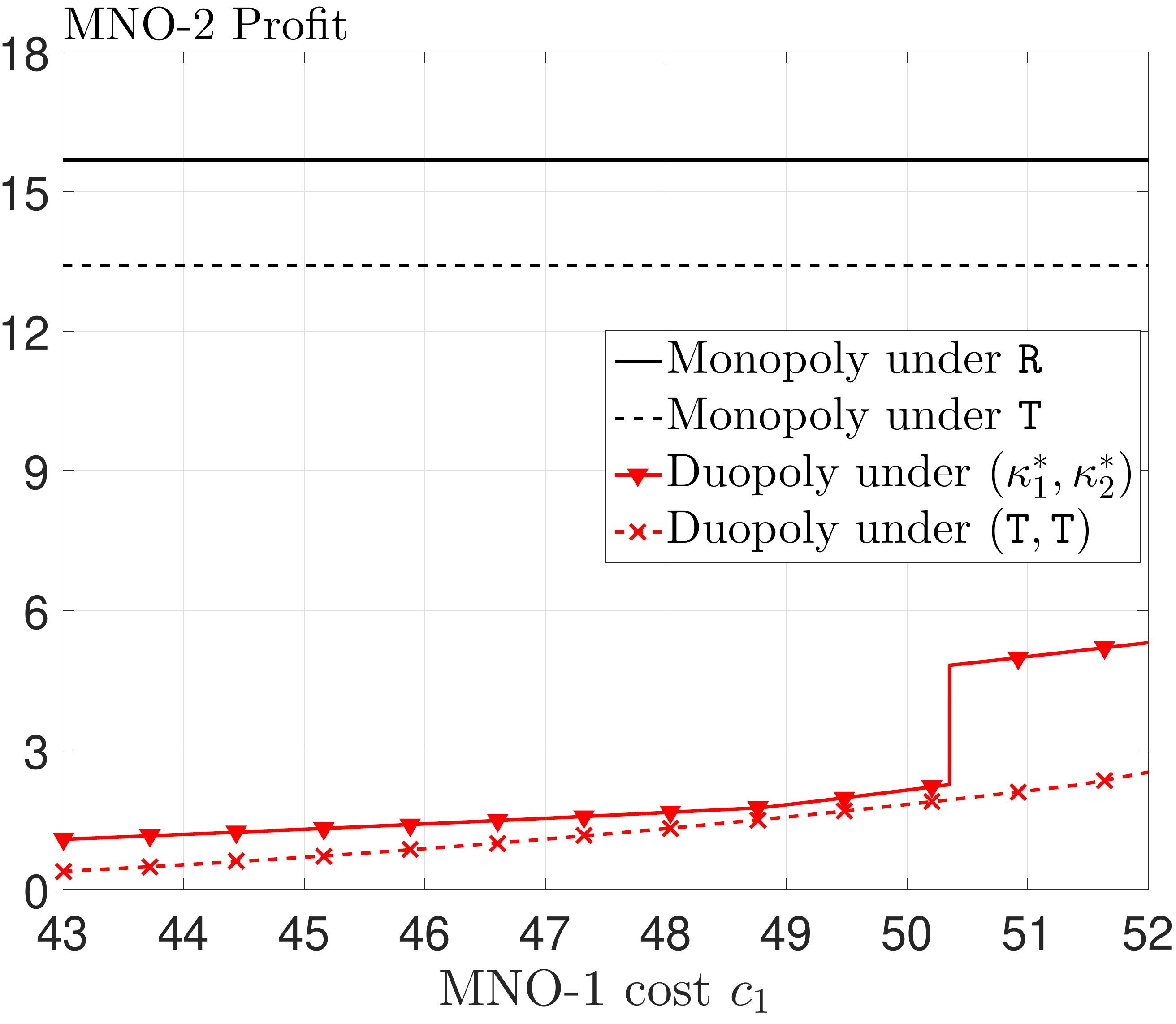}}} 
	\caption{Impact of rollover mechanism on duopoly market.}
	\label{fig: Impact of rollover mechanism 2}
\end{figure}

Fig. \ref{fig: ProfitCompare_2_MNO1} plots MNO-1's profit versus its cost $c_1$ in four scenarios.
The two black curves without markers correspond to MNO-1's monopoly market under data mechanism $\trad$ and $\roll$.
The blue circle curve corresponds to MNO-1's profit in the duopoly market under the equilibrium $\Mechanism^*$ shown in Fig. \ref{fig: EQ_mechanism_2}.
Essentially, the blue circle curve is the same as that in Fig. \ref{fig: EQ_mechanism_2_Profit}.
The blue cross curve corresponds to MNO-1's profit in the duopoly market under fixed data mechanisms $(\trad,\trad)$.
In this case, the MNOs only compete on price (but not on the data mechanism choice).
By comparing the black curves with the blue curves, we find that the market competition significantly reduces MNO-1's profit.
By comparing the two blue curves with markers, we note that the rollover mechanism significantly increases MNO-1's profit $342\%$ (on average) in the market competition.

Fig. \ref{fig: ProfitCompare_2_MNO2} plots MNO-2's profit versus the cost $c_1$ in similar scenarios.
Specifically, the two black curves without markers correspond to MNO-2's monopoly market.
The two red curves  correspond to the cases in the duopoly market. 
The red triangle curve is the same as that in Fig. \ref{fig: EQ_mechanism_2_Profit}.
The red cross curve corresponds to MNO-2's profit in the duopoly market under fixed data mechanisms $(\trad,\trad)$.
By comparing the black curves with the red curves, we find that the market competition significantly reduces MNO-2's profit.
The profit decrement of MNO-2 is larger than that of MNO-1, since MNO-1 has the QoS advantage and attracts high-valuation users.
By comparing the two red curves with markers, we note that the rollover mechanism increases MNO-1's profit $248\%$ (on average) in the market competition.

\subsection{Negligible QoS Advantage}
MNO-1 has a negligible QoS advantage when $\QoS_2=0.99$ (i.e., $\QoS_1=1$ is almost the same as $\QoS_2=0.99$). 

\begin{figure}  
	\centering
	\setlength{\abovecaptionskip}{0pt}
	\setlength{\belowcaptionskip}{0pt}
	\subfigure[Data mechanism equilibrium $\Mechanism^*$.]{\label{fig: EQ_mechanism_1}{\includegraphics[width=0.95\linewidth]{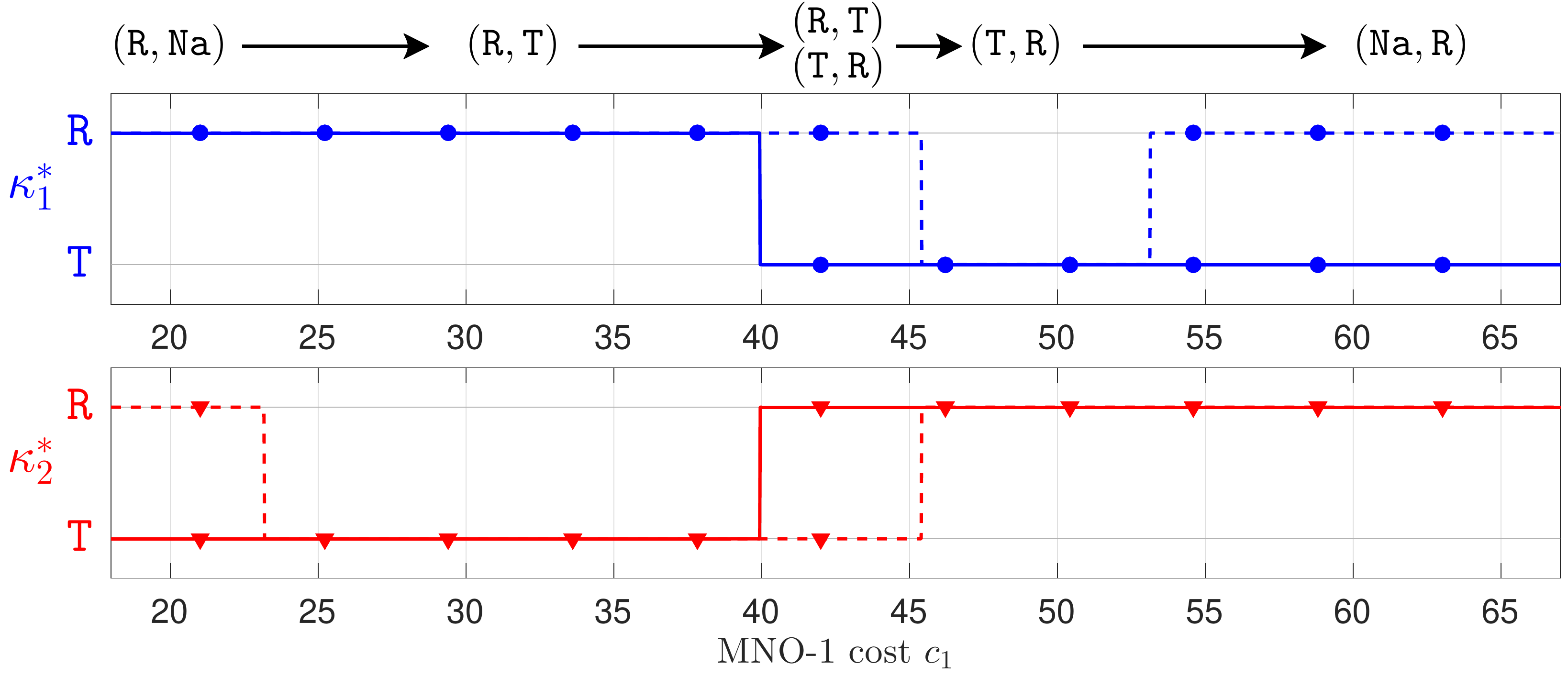}}}
	\subfigure[MNOs' profits under $\Mechanism^*$.]{\label{fig: EQ_mechanism_1_Profit}{\includegraphics[width=0.95\linewidth]{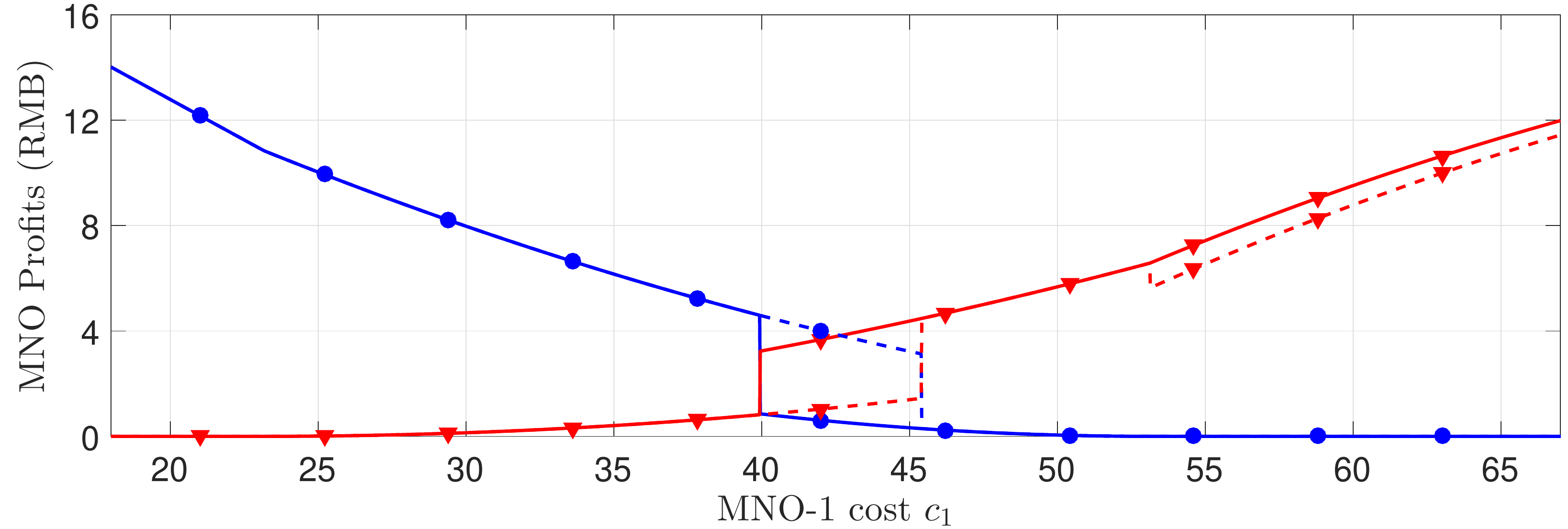}}} 
	\caption{Game \ref{Game: Mechanism} equilibrium when $\QoS_2=0.99$.}
	\label{fig: EQ rho=0.99}
\end{figure}
\begin{figure}  
	\centering
	\setlength{\abovecaptionskip}{0pt}
	\setlength{\belowcaptionskip}{0pt}
	\subfigure[MNO-1]{\label{fig: ProfitCompare_1_MNO1}{\includegraphics[width=0.49\linewidth]{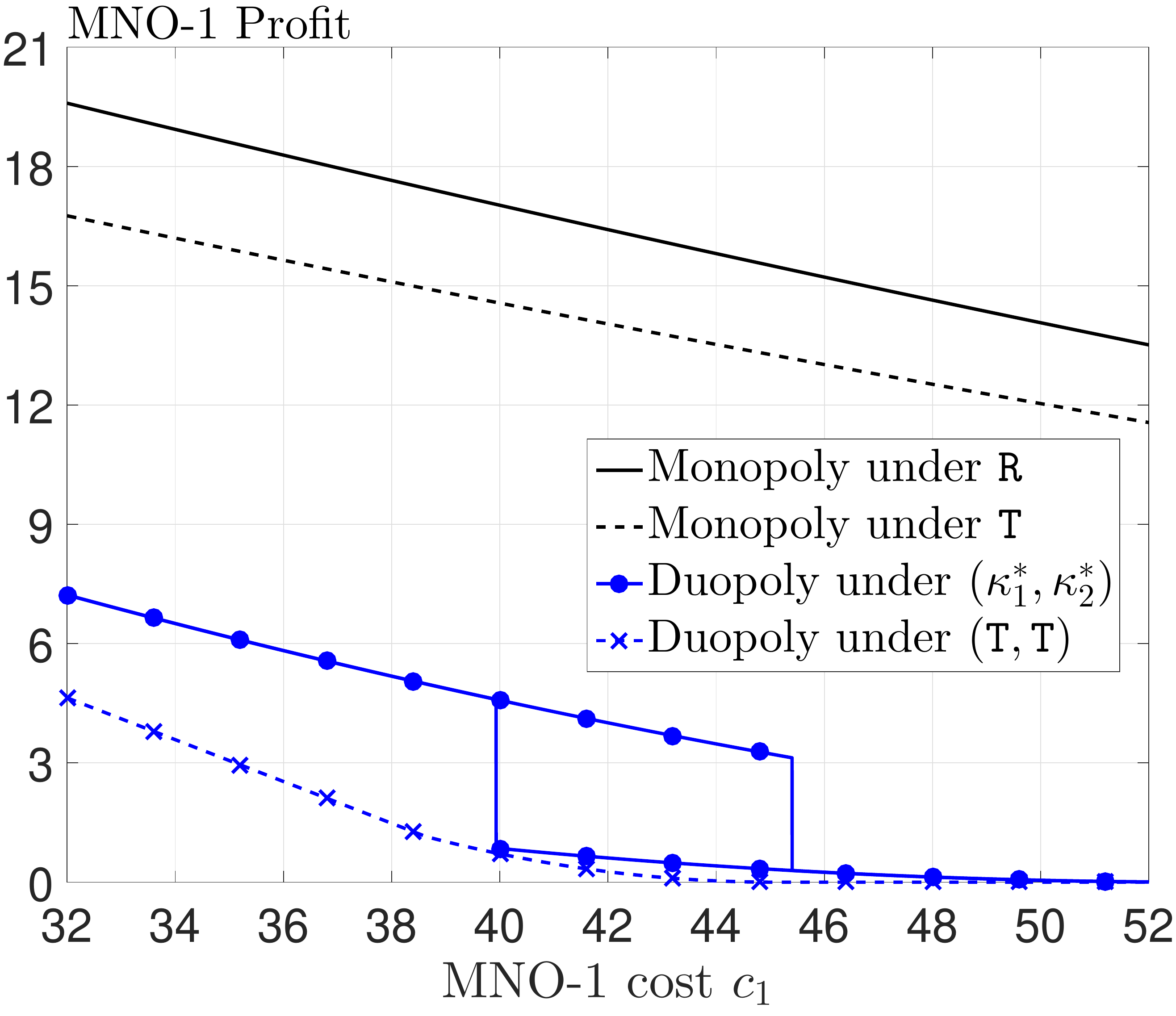}}}
	\subfigure[MNO-2]{\label{fig: ProfitCompare_1_MNO2}{\includegraphics[width=0.49\linewidth]{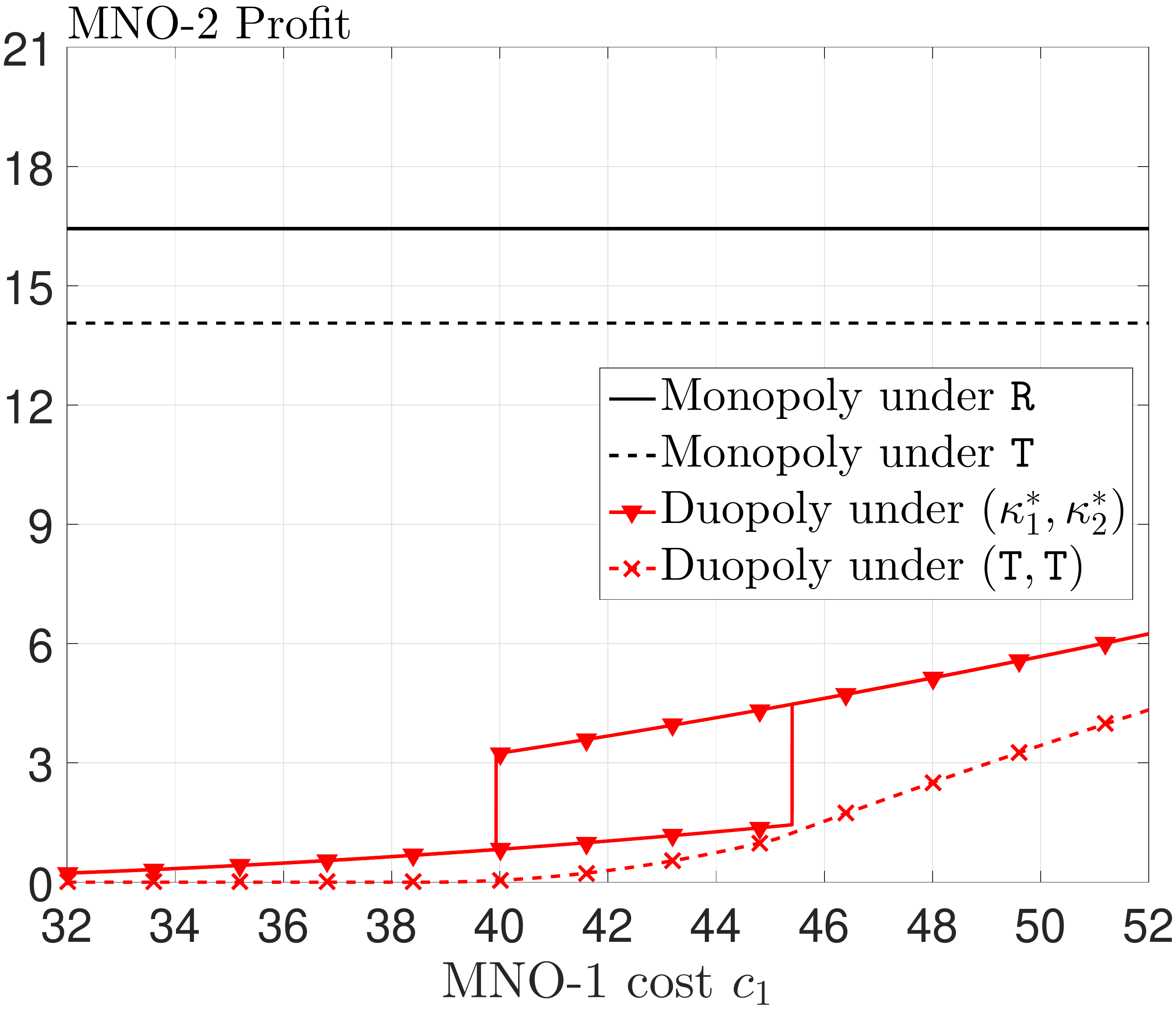}}} 
	\caption{Impact of rollover mechanism on duopoly market.}
	\label{fig: Impact of rollover mechanism 1}
\end{figure}

Similarly, we plot the data mechanism equilibrium $\Mechanism^*$ versus $c_1$ in Fig. \ref{fig: EQ_mechanism_1}, and label $\Mechanism^*$ on the top of the figure.
We find that as MNO-1's cost $c_1$ increases, the equilibrium changes from $(\roll,\trad)$ to $(\trad,\roll)$ through the symmetric equilibrium $\{(\roll,\trad),(\trad,\roll)\}$.
The symmetric equilibrium $\{(\roll,\trad),(\trad,\roll)\}$ is the key difference between $\QoS_2=0.99$ and $\QoS_2=0.95$.
Specifically, the symmetric equilibrium $\{(\roll,\trad),(\trad,\roll)\}$  results from  the two MNOs' homogeneity (due to the negligible QoS advantage and comparable costs).
That is, if the two MNOs are homogeneous, no matter who chooses the rollover mechanism $\roll$, the competitor has to choose the traditional mechanism $\trad$.

Fig. \ref{fig: EQ_mechanism_1_Profit} plots the corresponding profits under the equilibrium $\Mechanism^*$.
When the symmetric equilibrium $\{(\roll,\trad),(\trad,\roll)\}$ emerge, i.e., $40$ RMB/GB $<c_1<45.05$ RMB/GB, we find that no matter who adopts the rollover mechanism $\roll$, it obtains higher profit than that when it adopts the traditional mechanism $\trad$.

Fig. \ref{fig: ProfitCompare_1_MNO1} and Fig. \ref{fig: ProfitCompare_1_MNO2} plot MNO-1's and MNO-2' profits versus its cost $c_1$ in four scenarios.
The insights are similar to the previous results. 
The rollover mechanism increases MNO-1's profit $452\%$ (on average) and MNO-2's profit $391\%$ (on average) in the market competition.

\end{document}